\newtheorem{thm}{Theorem}[section]
\newtheorem{lem}{Lemma}[section]
\newtheorem{prop}{Proposition}[section]
\newtheorem{rem}{Remark}[section]
\begin{document}
\numberwithin{equation}{section}

 \title[On the global  $2$-holonomy for a $2$-connection on a $2$-bundle]{
On the global $2$-holonomy for a $2$-connection on a $2$-bundle}
\author {Wei Wang}

\begin{abstract}A crossed module   constitutes a strict  $2$-groupoid $\mathcal{G}$  and a $\mathcal{G}$-valued  cocycle on a manifold defines a $2$-bundle.
A $2$-connection on this $2$-bundle is given by a  Lie algebra $\mathfrak g$ valued $1$-form $A $ and a  Lie algebra $\mathfrak h$ valued $2$-form $B $ over each coordinate chart   together with $2$-gauge transformations between them, which satisfy   the   compatibility condition. Locally, the path-ordered integral of $A $ gives us the local $1$-holonomy, and the surface-ordered  integral of $(A ,B )$ gives us the local $2$-holonomy. The transformation of local $2$-holonomies from one coordinate chart to another is provided by the transition $2$-arrow, which is constructed from a $2$-gauge transformation.
  We can use the transition $2$-arrows and  the $2$-arrows   provided by the $\mathcal{G}$-valued  cocycle    to glue such local $2$-holonomies together to get a global one, which  is well defined. Namely we give an explicit algorithm for calculating  the global  $2$-holonomy.
\end{abstract}

\thanks{Supported by National Nature Science
Foundation
  in China (No. 11571305)}
 \maketitle
\tableofcontents
\section{Introduction}
Higher gauge theory is a generalization of gauge theory that describes the dynamics of  higher dimensional extended
objects. See e.g. \cite{BH11}  \cite{BS}  \cite{GP04} \cite{P03}  for    $2$-gauge theory and \cite{MP11}  \cite{SW13} \cite{Wa} for    $3$-gauge theory. It involves   higher algebraic structures and higher geometrical structures in mathematics: higher groups, higher bundles (gerbes) and higher connections, etc. (cf. e.g.  \cite{ACJ} \cite{Br94} \cite{BW}  \cite{Br10} \cite{CLS} \cite{Ju} \cite{NW} and references therein).
An important physical quantity in   $2$-gauge theory is the Wilson surface  \cite{CR} \cite{P03}. This is a $2$-dimensional generalization of Wilson loop or holonomy in differential geometry. We will discuss the global  $2$-holonomy for a $2$-connection on a $2$-bundle.

Let us recall    definitions of   $2$-bundles and    $2$-connections.  Suppose that $ (G, H,\alpha, \rhd)$ is a  crossed module, where  $\alpha:
     H \rightarrow G$ is a homomorphism of Lie groups  and $\rhd$ is a  smooth left action   of $G$ on $H $ by automorphisms.
   Similarly,   $ (\mathfrak g , \mathfrak h,\alpha, \rhd)$ is  a differential  crossed module, where    $\alpha:
     \mathfrak h \rightarrow \mathfrak g$ is a homomorphism of Lie algebras and $\rhd$ is a  smooth left action of $\mathfrak g $ on $  \mathfrak h $ by automorphisms.
A {\it local $2$-connection} over an open set $U$ is  given by a $\mathfrak g$-valued $1$-form $A$ and a $\mathfrak h$-valued $2$-form $B$ over $U$ such that
\begin{equation}\label{eq:2-connection}
 dA+A\wedge A=\alpha(B).
\end{equation}
A {\it $2$-gauge transformation} from a local $2$-connection $(A,B)$ to another one $(A',B')$ is given by a $G$-valued function $g$ and a  $\mathfrak h$-valued $1$-form $\varphi$ such that
\begin{equation}\label{eq:gauge-transformations} \begin{split}g \rhd A'&=
 -\alpha( \varphi)+A   + d g\cdot g^{-1},\\
g \rhd B'&=B-d\varphi-A\rhd\varphi+ \varphi\wedge\varphi .
\end{split}\end{equation}

Given a   crossed module $ (G, H,\alpha, \rhd)$, there exists   an associated strict $2$-groupoid denoted     by $\mathcal{G}$.
 A {\it $2$-bundle} over a manifold $M$ is given by  a  nonabelian  {\it $\mathcal{G}$-valued  cocycle} on $M$. This is
a  collection of $(U_i,g_{ij},f_{ijk} )$, where $\{U_i\}_{i\in I}$ is an open cover of the manifold $M$,
$g_{ij}: U_i\cap U_j \rightarrow G$ and $f_{ijk}: U_i\cap U_j\cap U_k \rightarrow H$ are smooth maps satisfying
\begin{equation}\label{eq:cocycle1}
   \alpha\left(f_{ijk}^{-1}\right)g_{ij}g_{jk}=g_{ik},
\end{equation}
and   the {\it $2$-cocycle condition}
\begin{equation}\label{eq:cocycle2}
g_{ij} \rhd f_{jkl} f_{ijl}= f_{ijk}f_{ikl}.
\end{equation}
 A {\it $2$-connection} on this $2$-bundle over $M$ is given by a collection of local  $2$-connections $(A_i, B_i)$ over  each coordinate chart $U_i$, together with
  a $2$-gauge transformation $(g_{ij},  a_{ij})$ over  each  intersection $U_i\cap U_j$  from the local $2$-connection $(A_i, B_i)$ to another one $(A_j, B_j)$.  They  satisfy  the following {\it compatibility condition}:
\begin{equation}\label{eq:a-f}
   a_{ij}+g_{ij}\rhd a_{jk}=f_{ijk} a_{ik}f_{ijk}^{-1}+A_i\rhd f_{ijk}f_{ijk}^{-1}+d f_{ijk} f_{ijk}^{-1},
\end{equation}
over each triple intersection $U_i\cap U_j\cap U_k$. Note that  minus signs in (\ref{eq:gauge-transformations}) become plus   if $\varphi$ is replaced by $-\varphi$. See  also Remark \ref{rem:order} and \ref{rem:gauge} for this form of $2$-gauge-transformations and the compatibility condition.

Given  a $\mathfrak g$-valued $1$-form $A$ on an open set $U  $,  the $1$-holonomy $F_{A }(\rho)$   along a Lipschitzian  path $\rho:[a,b]\rightarrow U $ is well defined. It is    given by the path-ordered  integral.  More precisely,    $F_A(\rho)$ is   the unique solution to the ODE
\begin{equation}\label{eq:F-A}
  \frac d{dt} F_A\left(\rho_{[a,t]}\right)= F_A\left(\rho_{[a,t]}\right)\rho^* A_t\left(\frac \partial {\partial t}\right)
\end{equation} with the initial condition $F_A(\rho_{[a,t]})|_{t=a}=1_G$,
where $\rho_{[a,t]}$ is the restriction of the curve $\rho$ to $[a,t]$  and $\rho^* A_t$ is the value of the pull back  $\mathfrak g$-valued $1$-form  $\rho^* A$ at   $t\in [a,b]$.
    Moreover, we can integrate   the $2$-connection
  $(A_i,B_i)$ along a surface $\gamma: [0,1]^2\longrightarrow U_i $ to get a $2$-arrow in $\mathcal{G}$, called  the {\it local $2$-holonomy}. It is a  surface-ordered  integral.
  If we denote the boundary of $\gamma$ as follows
  \begin{equation}\label{eq:square-1}
     \xy
(0,0 )*+{  x_1 }="1";
(30, 0)*+{  x_2 }="2";
(0,30 )*+{  y_1 }="3";
(30, 30)*+{  y_2 }="4";
{\ar@{->}_{\gamma^d } "1";"2" };
{\ar@{->}_{ \gamma^l } "3";"1" };
{\ar@{->}^{\gamma^r }  "4";"2" };
{\ar@{->}^{\gamma^u  } "3";"4" };
\endxy,
\end{equation}
 the  local $2$-holonomy is a $2$-arrow in $\mathcal{G}$:
\begin{equation}\label{eq:2-holonomy-i}
     \xy
(0,0 )*+{\scriptscriptstyle \bullet }="1";
(35, 0)*+{\scriptscriptstyle \bullet }="2";
(0,30 )*+{\scriptscriptstyle \bullet }="3";
(35, 30)*+{\scriptscriptstyle \bullet }="4";
{\ar@{->}_{ F_{  A_i}(\gamma^d) } "1";"2" };
{\ar@{->}_{ F_{A_i}(\gamma^l)  } "3";"1" };
{\ar@{->}^{ F_{A_i}(\gamma^r) }  "4";"2" };
{\ar@{->}^{ F_{A_i} (\gamma^u) } "3";"4" };
{\ar@{=>}^{H_{A_i,B_i}(\gamma)} (32,27);(4, 2) };\endxy.
\end{equation}

It was proved  by Schreiber  and Waldorf \cite{SW11} that there exists a bijection  between $2$-connections on the trivial $2$-bundle and $2$-functors (play the role of $2$-holonomy):
\begin{equation*}
   \left \{{\rm smooth} \hskip 1mm  2{\rm -functors\hskip 1mm} \mathcal{P}_2(M)\rightarrow \mathcal{G} \right\}\cong\{ A\in   \Lambda^1(M,\mathfrak g), B\in\Lambda^1(M,\mathfrak h);dA+A\wedge A=\alpha(B) \} ,
\end{equation*}
where $\mathcal{P}_2(M)$ is path $2$-groupoid  of   manifold $M$. The local $1$- and $2$-holonomies are well defined. See also Martins-Picken \cite{MP10} for the theory of  local $1$- and $2$-holonomies. The problem is how to define the global $2$-holonomy for a $2$-connection on a  nontrivial $2$-bundle.  This is known for Abelian $2$-bundles by Mackaay-Picken \cite{MP02}.
Schreiber  and  Waldorf \cite{SW} proved the equivalence of several $2$-categories associated a  $2$-connection    to show the existence of the transport $2$-functor, which plays the role of  global $2$-holonomy.  Parzygnat \cite{Pa} studied its generalization,  explicit computations   and application to magnetic monopoles.
On the other hand, Martins  and  Picken
\cite{MP} introduced the   notion of parallel transport  by using the language of double groupoids. They also give the method of  glueing local $2$-holonomies  to get a global one for the cubical bundles  (see also Soncini-Zucchini \cite{SZ} for this approach).   Recently
 Arias Abad  and  Sch\"atz  \cite{AS}
compared these two approaches   locally.
In this paper we will give an elementary approach to this problem, including an explicit algorithm for calculating  the global  $2$-holonomy.

As a model, let us consider first how to glue local $1$-holonomies to get a global one. Recall that a {\it $1$-connection}   on $M$ is given by a collection of local  $1$-forms $ A_i $ over   coordinate charts  $U_i$, together with
transition functions $ g_{ij} $   on $U_{i }\cap U_j$, which  satisfy the  $1$-cocycle condition. They    satisfy  the following {\it compatibility condition}:
 \begin{equation*}
    A_j=g_{ij}^{-1}A_ig_{ij}+g_{ij}^{-1}dg_{ij}
 \end{equation*} over $U_i\cap U_j$.
 Let $\rho:[0,1]\rightarrow M$ be a loop, i.e., $\rho(0)=\rho(1)$. We divide the interval $[0,1]$ into several subintervals $I_i:=[t_i,t_{i+1}]$, $i=1,\ldots N$,  such that the image $\rho(I_i)$ is contained in a  coordinate chart   denoted by $U_i$. We have local $1$-holonomies $F_{A_i}(\rho_{I_i})$.  We glue $F_{A_{i-1}}(\rho_{I_{i-1 }})$ with $F_{A_{i }}(\rho_{I_{i }})$ by the gauge transformation $g_{(i-1)i }(x)$ at point $x=\rho(t_i)$ to get the following path:
 \begin{equation}\label{eq:globle-1-holonomy}
     \xy 0;/r.17pc/:
(0,0 )*+{\scriptscriptstyle \bullet }="1";
(30, 0)*+{\scriptscriptstyle \bullet}="2";
(30,-30 )*+{\scriptscriptstyle \bullet }="3";
(60, -30)*+{\scriptscriptstyle \bullet }="4";
(60, -60)*+{\scriptscriptstyle \bullet }="5";
(90, -60)*+{\scriptscriptstyle \bullet }="6";
(90, -80)*+{\scriptscriptstyle \bullet }="7";(120, -80)*+{\scriptscriptstyle \bullet }="8";(120, -100)*+{\scriptscriptstyle \bullet }="9";
(150, -100)*+{\scriptscriptstyle \bullet }="10";(150,  0)*+{\scriptscriptstyle \bullet }="11";
{\ar@{->}_{F_{A_1}(\rho_{I_1}) } "1";"2" };
{\ar@{.}_{   } "2";"3" };
{\ar@{->}^{F_{A_{i-1}}(\rho_{I_{i-1}})  }  "3";"4" };
{\ar@{-->}|-{g_{(i-1)i}(x)  } "4";"5" };
{\ar@{->}_{F_{A_{i }}(\rho_{I_{i }}) } "5";"6" };
{\ar@{-->}_{   } "6";"7" };{\ar@{.}_{   } "7";"8" };{\ar@{.}_{   } "8";"9" };
{\ar@{-->}_{F_{A_N}(\rho_{I_N}) } "9";"10" };{\ar@{-->}_{g_{N1}(\rho(1))  } "10";"11" };
\endxy.
\end{equation}
 The composition of  elements of $G$ along this path   is the global $1$-holonomy of    the connection along the loop $\rho$. Its conjugacy  class is independent of
the choice of the open sets $U_i$ containing the paths $\rho(I_i)$. This is because that if  we use $U_{i'}$ and $A_{i'}$ instead of $U_{i }$ and $A_{i }$, respectively,  we have the following commutative diagram
 \begin{equation}\label{eq:eq-path}
     \xy
(30, 0 )*+{\scriptscriptstyle \bullet }="3";
(60,  0)*+{\scriptscriptstyle \bullet }="4";
(60, -30)*+{\scriptscriptstyle \bullet }="5";
(90, -30)*+{\scriptscriptstyle \bullet }="6";
(90, -60)*+{\scriptscriptstyle \bullet }="7";(75, -15)*+{\scriptscriptstyle\bullet }="8";
(105, -15)*+{\scriptscriptstyle\bullet }="9"; (66, -20)*+{ \scriptscriptstyle g_{i'  i}(x) }="09";(94, -22)*+{\scriptscriptstyle  g_{i'  i}(y) }="19";
{\ar@{->}^{\scriptscriptstyle F_{A_{i-1}}(\rho_{I_{i-1}})  }  "3";"4" };
{\ar@{-->}_{\scriptscriptstyle g_{(i-1) i}(x)  } "4";"5" };
{\ar@{-->}_{\scriptscriptstyle F_{A_{i }}(\rho_{I_{i }}) } "5";"6" };
{\ar@{-->}_{\scriptscriptstyle  g_{ i (i+1) }(y)  } "6";"7" };
{\ar@{~>}^{ \scriptscriptstyle g_{  (i-1)i'}(x)  }  "4";"8" };{\ar@{<-}_{  }  "5";"8" };{\ar@{~>}^{\scriptscriptstyle  F_{A_{i' }}(\rho_{I_{i  }})  }  "8";"9" };{\ar@{->}_{  }  "9";"6" };{\ar@{~>}^{\scriptscriptstyle  g_{i'  (i+1)}(y)  }  "9";"7" };
\endxy
\end{equation}where $x=\rho(t_{i })$, $y=\rho(t_{i+1 })$.
Here the  $1$-cocycle condition implies the commutativity of  two  triangles, and $g_{i'  i}$ as a gauge transformation provides the commutative quadrilateral.
The wavy path is what we obtain when  $U_{i }$ and $A_{i }$ are replaced by  $U_{i' }$ and $A_{i' }$, respectively. So the $1$-arrows represented by the wavy and dotted paths coincide. When $U_1=U_N$ is replaced by $U_{1'}$, we get the conjugacy of the   global $1$-holonomy by the element $g_{1'1}(\rho(0))$.

To construct the the global  $2$-holonomy, we consider a surface given by the union of the mapping $\gamma$ in (\ref{eq:square-1}) and a mapping $\widetilde{\gamma}:[0,1]^2\longrightarrow U_j $,
\begin{equation*}
     \xy 0;/r.17pc/:
(0,0 )*+{  x_2 }="1";
(30, 0)*+{  x_3 }="2";
(0,30 )*+{ y_2 }="3";
(30, 30)*+{  y_3 }="4";
{\ar@{->}_{\widetilde{\gamma}^d } "1";"2" };
{\ar@{->}_{ \widetilde{\gamma}^l } "3";"1" };
{\ar@{->}^{\widetilde{\gamma}^r }  "4";"2" };
{\ar@{->}^{\widetilde{\gamma}^u  } "3";"4" };
\endxy,
\end{equation*}
such that the left path $\widetilde{\gamma}^l $ above coincides with the right path $\gamma^r$   in (\ref{eq:square-1}). Then we also have the following  $2$-arrow
 \begin{equation}\label{eq:2-holonomy-j}
     \xy
(0,0 )*+{\scriptscriptstyle \bullet }="1";
(40, 0)*+{\scriptscriptstyle \bullet }="2";
(0,30 )*+{\scriptscriptstyle \bullet }="3";
(40, 30)*+{\scriptscriptstyle \bullet }="4";
{\ar@{->}_{\scriptscriptstyle  F_{  A_j}(\widetilde{\gamma}^d) } "1";"2" };
{\ar@{->}_{\scriptscriptstyle  F_{A_j}(\widetilde{\gamma}^l)  } "3";"1" };
{\ar@{->}^{\scriptscriptstyle  F_{A_j}(\widetilde{\gamma}^r) }  "4";"2" };
{\ar@{->}^{\scriptscriptstyle  F_{A_j} (\widetilde{\gamma}^u) } "3";"4" };
{\ar@{=>}^{\scriptscriptstyle H_{A_j,B_j}(\widetilde{\gamma})} (36,26);(4, 2) };\endxy
\end{equation}
in $\mathcal{G}$ by the surface-ordered  integration of   the $2$-connection $(A_j,B_j)$ over $U_j$. The path $\widetilde{\gamma}^l$ coincides with $\gamma^r$,    but the local $1$-holonomy $F_{A_j}\left(\widetilde{\gamma}^l\right)$ in (\ref{eq:2-holonomy-j}) is usually  different from  $F_{A_i}(\gamma^r)$ in (\ref{eq:2-holonomy-i}). So we cannot glue the $2$-arrows in (\ref{eq:2-holonomy-i}) and (\ref{eq:2-holonomy-j}) directly.
But we can integrate the $2$-gauge transformation $(g_{ij},  a_{ij})$  along the path $\rho=\gamma^r=\widetilde{\gamma}^l$ in $U_i\cap U_j$
to get the $2$-arrow
\begin{equation}\label{eq:2-gauge transformation-curve}
     \xy 0;/r.22pc/:
(30, 0)*+{  \bullet }="2";
(30, 30)*+{  \bullet }="4";
{\ar@{-->}_{   g_{ij}(y_2 ) } "4";"2" };
(60,0 )*+{  \bullet }="11";
(60,30 )*+{  \bullet }="13";
{\ar@{-->}^{   g_{ij}(x_2 ) } "13";"11" };
{\ar@{->}^{  F_{A_i}(\rho) } "4";"13" };
{\ar@{->}_{ F_{A_j}(\rho) }  "2";"11" };
{\ar@{==>}^{  \psi_{ij}(\rho)} (56,26);(34, 2) };
 \endxy
\end{equation} in the $2$-groupoid $\mathcal{G}$. We call  this $2$-arrow
(\ref{eq:2-gauge transformation-curve}) the {\it transition $2$-arrow along the path $\rho$}.
It can be used to connect   two arrows (\ref{eq:2-holonomy-i}) and (\ref{eq:2-holonomy-j}) to get
\begin{equation*}
     \xy
(-10,0 )*+{\scriptscriptstyle \bullet }="1";
(30, 0)*+{\scriptscriptstyle \bullet }="2";
(-10,30 )*+{\scriptscriptstyle \bullet }="3";
(30, 30)*+{\scriptscriptstyle \bullet }="4";
{\ar@{->}_{\scriptscriptstyle  F_{  A_i}( {\gamma}^d) } "1";"2" };
{\ar@{->}_{\scriptscriptstyle  F_{A_i}( {\gamma}^l)  } "3";"1" };
{\ar@{->}^{\scriptscriptstyle  F_{A_i}( {\gamma}^r) }  "4";"2" };
{\ar@{->}^{\scriptscriptstyle  F_{A_i} ( {\gamma}^u)  } "3";"4" };
{\ar@{=>}^{\scriptscriptstyle H_{A_i,B_i}( {\gamma})} (26,26);(-3, 2) };
(70,0 )*+{\scriptscriptstyle \bullet }="11";
(110, 0)*+{\scriptscriptstyle \bullet }="12";
(70,30 )*+{\scriptscriptstyle \bullet }="13";
(110, 30)*+{\scriptscriptstyle \bullet }="14";
{\ar@{->}_{\scriptscriptstyle  F_{  A_j} (\widetilde{\gamma}^d) } "11";"12" };
{\ar@{->}^{\scriptscriptstyle  F_{A_j} (\widetilde{\gamma}^l) } "13";"11" };
{\ar@{->}^{\scriptscriptstyle  F_{A_j} (\widetilde{\gamma}^r)}  "14";"12" };
{\ar@{->}^{\scriptscriptstyle  F_{A_j} (\widetilde{\gamma}^u)  } "13";"14" };
{\ar@{=>}^{\scriptscriptstyle H_{A_j,B_j}(\widetilde{\gamma})} (107,26);(77, 2) };
{\ar@{-->}^{\scriptscriptstyle g_{ij}(y_2 ) } "4";"13" };
{\ar@{-->}_{\scriptscriptstyle g_{ij}(x_2 )}  "2";"11" };
{\ar@{==>}^{\scriptscriptstyle \psi_{ij}^{-1}( {\gamma}^r)} (66,26);(34, 2) };
 \endxy.
\end{equation*}
Now consider $4$ adjacent rectangles   $\gamma^{(\alpha)}:[0,1]^2\longrightarrow U_{ \alpha} $, $\alpha=i,j,k,l$,
\begin{equation}\label{eq:4-adjacent}
   \xy 0;/r.17pc/:
(0,0 )*+{\scriptscriptstyle x_1 }="1";
(30, 0)*+{\scriptscriptstyle x_2 }="2";
(0,30 )*+{\scriptscriptstyle y_1 }="3";
(30, 30)*+{\scriptscriptstyle y_2 }="4";
{\ar@{->}|-{  } "1";"2" };
{\ar@{->}|-{   } "3";"1" };
{\ar@{->}|-{  }  "4";"2" };
{\ar@{->}|-{   } "3";"4" };
(60,0 )*+{\scriptscriptstyle x_3 }="11";
(60,30 )*+{\scriptscriptstyle y_3 }="13";
{\ar@{->}|-{    } "13";"11" };
{\ar@{->}|-{  } "4";"13" };
{\ar@{->}|-{ }  "2";"11" };
 (0,60 )*+{\scriptscriptstyle z_1 }="2-1";
(30, 60)*+{\scriptscriptstyle z_2 }="2-2";
{\ar@{->}|-{   } "2-1";"2-2" };
(60,60 )*+{\scriptscriptstyle z_3 }="2-11";
{\ar@{->}|-{ }  "2-2";"2-11" };
{\ar@{->}|-{   } "2-1";"3" };
{\ar@{->}|-{  } "2-2";"4" };{\ar@{->}|-{  } "2-11";"13" };
(15,15 )*+{\scriptscriptstyle \gamma^{(i)} }="01";(45,15 )*+{\scriptscriptstyle \gamma^{(j)} }="02";(15,45 )*+{\scriptscriptstyle \gamma^{(l)} }="03";(45,45 )*+{\scriptscriptstyle \gamma^{(k)} }="04";
 \endxy
\end{equation} in four different coordinate charts.
We can connect  the local $2$-holonomies   by using   the transition $2$-arrows along their common boundaries to get the following diagram:
\begin{equation}\label{eq:2-holonomy-ijkl}
     \xy 0;/r.25pc/:
(0,0 )*+{\scriptscriptstyle \bullet}="1";
(30, 0)*+{\scriptscriptstyle\bullet}="2";
(0,30 )*+{\scriptscriptstyle \bullet }="3";
(30, 30)*+{\scriptscriptstyle \bullet}="4";
{\ar@{->}_{\scriptscriptstyle  F_{  A_i} } "1";"2" };
{\ar@{->}_{\scriptscriptstyle  F_{A_i}  } "3";"1" };
{\ar@{->}^{\scriptscriptstyle  F_{A_i} }  "4";"2" };
{\ar@{->}_{\scriptscriptstyle  F_{A_i}  } "3";"4" };
{\ar@{=>}^{\scriptscriptstyle  H_{A_i,B_i}\left( {\gamma}^{( i)}\right)} (26,26);(1, 5) };
(60,0 )*+{\scriptscriptstyle \bullet }="11";
(90, 0)*+{\scriptscriptstyle \bullet }="12";
(60,30 )*+{\scriptscriptstyle \bullet }="13";
(90, 30)*+{\scriptscriptstyle \bullet }="14";
{\ar@{->}_{\scriptscriptstyle  F_{  A_j} } "11";"12" };
{\ar@{->}^{\scriptscriptstyle  F_{A_j}  } "13";"11" };
{\ar@{->}^{\scriptscriptstyle  F_{A_j} }  "14";"12" };
{\ar@{->}_{\scriptscriptstyle  F_{A_j}  } "13";"14" };
{\ar@{=>}^{\scriptscriptstyle H_{A_j,B_j}\left( {\gamma}^{ (j)}\right)} (86,26);(61, 5) };
{\ar@{-->}^{\scriptscriptstyle g_{ij}(y_2 ) } "4";"13" };
{\ar@{-->}_{\scriptscriptstyle g_{ij}( x_2)}  "2";"11" };
{\ar@{==>}^{\scriptscriptstyle \psi_{ij}^{-1}} (56,26);(34, 2) };
 (0,60 )*+{\scriptscriptstyle \bullet }="2-1";
(30, 60)*+{\scriptscriptstyle \bullet }="2-2";
(0,90 )*+{\scriptscriptstyle \bullet }="2-3";
(30, 90)*+{\scriptscriptstyle \bullet}="2-4";
{\ar@{->}_{\scriptscriptstyle  F_{  A_l} } "2-1";"2-2" };
{\ar@{->}_{\scriptscriptstyle  F_{A_l}  } "2-3";"2-1" };
{\ar@{->}^{\scriptscriptstyle  F_{A_l} }  "2-4";"2-2" };
{\ar@{->}^{\scriptscriptstyle  F_{A_l}  } "2-3";"2-4" };
{\ar@{=>}^{\scriptscriptstyle H_{A_l,B_l}\left( {\gamma}^{ (l)}\right)} (26,86);(1, 65) };
(60,60 )*+{\scriptscriptstyle\bullet }="2-11";
(90, 60)*+{\scriptscriptstyle \bullet }="2-12";
(60,90 )*+{\scriptscriptstyle \bullet }="2-13";
(90, 90)*+{\scriptscriptstyle \bullet }="2-14";
{\ar@{->}_{ \scriptscriptstyle F_{  A_k} } "2-11";"2-12" };
{\ar@{->}^{\scriptscriptstyle  F_{A_k}  } "2-13";"2-11" };
{\ar@{->}^{\scriptscriptstyle  F_{A_k} }  "2-14";"2-12" };
{\ar@{->}^{\scriptscriptstyle  F_{A_k}  } "2-13";"2-14" };
{\ar@{=>}^{\scriptscriptstyle H_{A_k,B_k}\left( {\gamma}^{( k)}\right)} (86,86);(61, 65) };
{\ar@{-->}^{\scriptscriptstyle g_{lk}( z_2) } "2-4";"2-13" };
{\ar@{-->}_{\scriptscriptstyle g_{lk}( y_2)}  "2-2";"2-11" };
{\ar@{==>}^{\scriptscriptstyle \psi_{lk}^{-1}} (56,86);(34,62) };
{\ar@{-->}_{\scriptscriptstyle  g_{ li}(y_1 ) } "2-1";"3" };
{\ar@{-->}^{\scriptscriptstyle  g_{ li}(y_2 ) } "2-2";"4" };
{\ar@{==>}^{\scriptscriptstyle \psi_{li}  } (26,56);( 4,32) };
{\ar@{-->}_{\scriptscriptstyle  g_{ kj}(y_2 ) } "2-11";"13" };
{\ar@{-->}^{\scriptscriptstyle  g_{ kj}(y_3 ) } "2-12";"14" };
{\ar@{==>}^{\scriptscriptstyle \psi_{kj} } (86,56);( 64,32) };
 \endxy.
\end{equation}
We add the following $2$-arrow in $\mathcal{G}$ in the central rectangle:
\begin{equation}\label{eq:2-holonomy-ijkl'}
     \xy
    (60, 0 )*+{\scriptscriptstyle \bullet }="13";(30,  0)*+{\scriptscriptstyle \bullet }="4";
  (30, 30)*+{\scriptscriptstyle\bullet }="2-2";
(60,30 )*+{\scriptscriptstyle \bullet }="2-11";
{\ar@{-->}|-{\scriptscriptstyle  g_{ lj}(y_2 ) } "2-2";"13" };
{\ar@{==>}_{\scriptscriptstyle f_{lkj}(y_2 ) } (58,28);( 46,16) };
{\ar@{==>}^{\scriptscriptstyle f_{lij}^{-1}(y_2 ) } (43,13);(32, 2) };
 {\ar@{-->}_{\scriptscriptstyle g_{ij}(y_2 ) } "4";"13" };
{\ar@{-->}^{\scriptscriptstyle g_{lk}( y_2)}  "2-2";"2-11" };
{\ar@{-->}_{\scriptscriptstyle  g_{ li}(y_2 ) } "2-2";"4" };
{\ar@{-->}^{\scriptscriptstyle  g_{ kj}(y_2 ) } "2-11";"13" };
 \endxy
\end{equation} where $f_{lkj}(y_2 )$ and $f_{lij}(y_2 )$ are  provided by the $\mathcal{G}$-valued  cocycle of the $2$-bundle. Note that diagrams
(\ref{eq:2-holonomy-ijkl})-(\ref{eq:2-holonomy-ijkl'}) are similar to figure 3   in \cite{MP}, p. 3358,  for the cubical $2$-holonomy, where the $2$-arrow  in the central rectangle in (\ref{eq:2-holonomy-ijkl}) is provided directly by the definition of   $2$-cubical bundles. It is not a composition.

Now fix coordinate charts $\{U_i\}$ of $M$.
Let $\gamma:[0,1]^2\longrightarrow M$ be a Lipschitzian  mapping. To define the global $2$-holonomy, we divide the square $[0,1]^2$ into the union of
small rectangles $\Box_{ab}:=[t_a,t_{a+1}]\times [s_b,s_{b+1}]$, $a=0,\ldots,N $, $b=0,\ldots,M$, where $0=t_0<t_1<\cdots<t_{N }=1$, $0=s_0<s_1<\cdots<s_{M}=1$. We choose the rectangles sufficiently small so that $\gamma(\Box_{ab})$ is contained in some coordinate chart $U_i$ for each small rectangle  $\Box_{ab}$. We also require $\gamma(\Box_{a0})$ and $\gamma(\Box_{aM})$ are in the same coordinate chart for each $a$.
For any two adjacent rectangles whose images under $\gamma$ are contained in two different coordinate charts, we use the transition $2$-arrow along their common path to glue these two local $2$-holonomies (the  transition $2$-arrow is the identity when they are in the same coordinate chart).  In this construction, there exist an extra rectangle for any $4$ adjacent rectangles as in (\ref{eq:2-holonomy-ijkl}).
We use the $2$-arrows provided by the $\mathcal{G}$-valued  cocycle as in (\ref{eq:2-holonomy-ijkl'}) to fill them. The resulting  $2$-arrow is denoted by  ${\rm Hol} (\gamma)$ and
   its $H$-element is denoted by ${\rm Hol}_\gamma$. We will assume $\gamma$ to be a loop in the loop space $\mathcal{L}M$, i.e., $\gamma(0,\cdot)\equiv\gamma(1,\cdot)$, $\gamma(\cdot,0 )\equiv \gamma(\cdot,1 )$.
     Denote $H/\sim$ by $H/[G,H]$, where $h \sim h'$ when $h =g\rhd h'$ for some $g\in G$. In fact,  $H/[G,H]$ is commutative (cf.  \cite{SW}, Lemma 5.8).
\begin{thm} For  a loop $\gamma$ in the loop space $\mathcal{L}M$, the global $2$-holonomy
  ${\rm Hol}_\gamma$  constructed above, as an element of  $H/[G,H]$, is well-defined. In particular  when $\gamma$ is a sphere, ${\rm Hol}_\gamma$ is in $\ker \alpha $.
\end{thm}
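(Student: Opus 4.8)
The plan is to separate the well\nobreakdash-definedness of ${\rm Hol}_\gamma$ into two independent invariance statements and then combine them. First I would check that, once the subdivision $\{\Box_{ab}\}$ and a valid assignment of charts are fixed, all the $2$\nobreakdash-arrows entering the construction actually compose: the transition $2$\nobreakdash-arrow $\psi_{ij}(\rho)$ of (\ref{eq:2-gauge transformation-curve}) has source $F_{A_i}(\rho)$ and target $F_{A_j}(\rho)$ along each common edge $\rho$, which is exactly what repairs the mismatch between $F_{A_i}(\gamma^r)$ in (\ref{eq:2-holonomy-i}) and $F_{A_j}(\widetilde\gamma^l)$ in (\ref{eq:2-holonomy-j}); the corner fillers of (\ref{eq:2-holonomy-ijkl'}) have matching boundary $1$\nobreakdash-arrows; hence ${\rm Hol}(\gamma)$ is a genuine $2$\nobreakdash-arrow of $\mathcal{G}$ and ${\rm Hol}_\gamma\in H$ is its $H$\nobreakdash-component. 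It then remains to show this class in $H/[G,H]$ does not depend on the two choices: the rectangular subdivision and the chart attached to each cell. Since any two rectangular subdivisions have a common rectangular refinement (union the cut points in each direction) and a valid chart assignment induces a valid one on any refinement (each small cell inheriting the chart of the coarse cell containing it), it suffices to prove: (i) \emph{refinement invariance} --- with the chart assignment fixed and extended, ${\rm Hol}_\gamma$ in $H/[G,H]$ is unchanged under refining the subdivision; and (ii) \emph{chart invariance} --- with the subdivision fixed, changing the chart of a single rectangle $\Box_{ab}$, permissible whenever $\gamma(\Box_{ab})\subset U_i\cap U_{i'}$, leaves ${\rm Hol}_\gamma$ in $H/[G,H]$ unchanged. (Any two valid chart assignments on a fixed subdivision are joined by a finite sequence of single\nobreakdash-rectangle changes through valid assignments, so (ii) for one rectangle is enough.)

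For (i) the inputs are three compositionality facts. The local $2$\nobreakdash-holonomy $H_{A,B}$ of (\ref{eq:2-holonomy-i}) is multiplicative under horizontal and vertical subdivision of a rectangle lying in a single chart --- the standard gluing law for the surface\nobreakdash-ordered integral. The transition $2$\nobreakdash-arrow $\psi_{ij}(\rho)$ is compatible with concatenation of $\rho$ and with horizontal composition, and $\psi_{ii}(\rho)$ is an identity. And the corner fillers of (\ref{eq:2-holonomy-ijkl'}), built from the $f_{\bullet\bullet\bullet}$ of the $\mathcal{G}$\nobreakdash-valued $2$\nobreakdash-cocycle, compose correctly under refinement by virtue of the $2$\nobreakdash-cocycle condition (\ref{eq:cocycle2}); in particular, at an interior corner created by subdividing one cell all four surrounding rectangles carry the same chart, so the new filler is built from $f_{iii}$\nobreakdash-type data, which is trivial for a normalised cocycle (from (\ref{eq:cocycle1})--(\ref{eq:cocycle2})). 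Assembling these, the refined composite equals the original composite as a $2$\nobreakdash-arrow of $\mathcal{G}$, hence has the same $H$\nobreakdash-component and the same class in $H/[G,H]$.

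Step (ii) is the heart of the matter and the main obstacle; it is the two\nobreakdash-dimensional counterpart of the commutative diagram (\ref{eq:eq-path}). Changing the chart of $\Box_{ab}$ from $U_i$ to $U_{i'}$ inserts a transition $2$\nobreakdash-arrow $\psi_{i'i}$ along each of the four edges of $\Box_{ab}$, replaces $H_{A_i,B_i}$ by $H_{A_{i'},B_{i'}}$ whiskered by these four, and replaces the four corner fillers by ones carrying the index $i'$. I would show that the ``before'' and ``after'' composites are connected by a prism\nobreakdash-shaped arrangement of $2$\nobreakdash-arrows over $\Box_{ab}$: the four side faces of the prism are filled using the compatibility condition (\ref{eq:a-f}) --- which says precisely that the path\nobreakdash-integrated $2$\nobreakdash-gauge transformations satisfy a triple\nobreakdash-overlap identity, the higher analogue of the commutative quadrilateral furnished by a gauge transformation in (\ref{eq:eq-path}) --- while the $2$\nobreakdash-cocycle condition (\ref{eq:cocycle2}) and the relation (\ref{eq:cocycle1}) handle the four corners, where the $f_{\bullet\bullet\bullet}$'s get reshuffled. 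The upshot should be that the two composites agree up to whiskering by $1$\nobreakdash-arrows and an overall action of some $g\in G$ via $\rhd$; both effects vanish in $H/[G,H]$, which is why the statement is formulated there (it also absorbs the conjugation ambiguity coming from the choice of a starting corner). The genuinely laborious part is the bookkeeping of the $H$\nobreakdash-elements, the $\rhd$\nobreakdash-actions and the $f_{\bullet\bullet\bullet}$'s inside this prism, and in particular matching the orientations of the corner fillers with the conventions of (\ref{eq:2-holonomy-ijkl})--(\ref{eq:2-holonomy-ijkl'}); I do not expect a further conceptual obstruction once the diagram is drawn.

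Finally, for a loop $\gamma$ in $\mathcal{L}M$ the identifications $\gamma(0,\cdot)\equiv\gamma(1,\cdot)$ and $\gamma(\cdot,0)\equiv\gamma(\cdot,1)$ glue the opposite edges of the total composite, and the requirement that $\gamma(\Box_{a0})$ and $\gamma(\Box_{aM})$ (and the analogous cells at the vertical ends) lie in a common chart makes this gluing compatible with the chart data; together with (i) and (ii), ${\rm Hol}(\gamma)$ is then a well\nobreakdash-defined $2$\nobreakdash-endomorphism and ${\rm Hol}_\gamma$ is well defined up to $\rhd$\nobreakdash-action and conjugation, i.e. as an element of $H/[G,H]$, which is moreover commutative by \cite{SW}, Lemma~5.8. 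If in addition $\gamma$ is a sphere, i.e. $\gamma$ is constant on $\partial[0,1]^2$, then every $1$\nobreakdash-holonomy appearing along the boundary of the total square is $F_A$ of a constant path, hence equal to $1_G$ by (\ref{eq:F-A}); so the source and target $1$\nobreakdash-arrows of ${\rm Hol}(\gamma)$ are both $1_G$, and a $2$\nobreakdash-arrow of $\mathcal{G}$ from $1_G$ to $1_G$ is an $h\in H$ with $\alpha(h)=1_G$. Therefore ${\rm Hol}_\gamma\in\ker\alpha$, which completes the argument.
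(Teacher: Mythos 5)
Your overall decomposition (refinement invariance plus single\nobreakdash-rectangle chart invariance, with $H/[G,H]$ absorbing whiskerings and conjugation) is the same strategy the paper follows, but step (ii) as you describe it cannot be closed with the ingredients you list, and the missing ingredient is the heart of the proof. The compatibility condition (\ref{eq:a-f}) --- or rather its path\nobreakdash-integrated form, the commutative cylinder (\ref{eq:cylinder}) of Proposition \ref{prop:modification}, which is itself a statement requiring proof and not the pointwise identity (\ref{eq:a-f}) --- only controls how the three transition $2$\nobreakdash-arrows $\psi_{i'i},\psi_{ij},\psi_{i'j}$ along a common edge interact on the triple overlap; it says nothing about the two surface\nobreakdash-ordered integrals $H_{A_i,B_i}\bigl(\gamma|_{\Box_{ab}}\bigr)$ and $H_{A_{i'},B_{i'}}\bigl(\gamma|_{\Box_{ab}}\bigr)$ themselves. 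The face of your prism carrying these two $2$\nobreakdash-arrows is the cube (\ref{eq:cube}), and what is needed there is the transformation law (\ref{eq:gauge-2}) of Proposition \ref{prop:transformation-2-holonomies}: the local $2$\nobreakdash-holonomy computed with $(A_{i'},B_{i'})$ equals the one computed with $(A_i,B_i)$ conjugated by the $2$\nobreakdash-gauge transformations integrated along the four edges. This is the genuinely hard analytic input (proved in the paper by passing to the wreath\nobreakdash-product connection $\mathfrak{A}=(A,\varphi)$ and comparing ODEs), and it is exactly what you defer as ``bookkeeping''; without it the before and after composites cannot be identified at all, in $H$ or in $H/[G,H]$. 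The paper's chart\nobreakdash-invariance argument is precisely the assembly of one such cube, four cylinders and four cocycle tetrahedra (\ref{eq:tetrahedron}) into the $3$\nobreakdash-cell (\ref{eq:hol-replace}), together with a separate treatment of the boundary rectangles $\Box_{a0},\Box_{aM}$, where an extra $2$\nobreakdash-arrow $h_0$ survives and only conjugation invariance in $H/[G,H]$ removes it --- the one effect you do anticipate.

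Two further points. First, well\nobreakdash-definedness in the sense of the theorem also includes independence of reparametrization of the loop $\gamma$ in $\mathcal{L}M$, which you do not address; it is not a formality, since the paper needs pulled\nobreakdash-back curved quadrilaterals, a curved version of the composition formulae of Lemma \ref{lem:H_AB-t}, and a limiting argument to trade a transition $2$\nobreakdash-arrow along a curved edge for one along a straight edge. Second, your sphere argument is not correct as stated: even when $\gamma$ is constant on $\partial[0,1]^2$, the source and target $1$\nobreakdash-arrows of ${\rm Hol}(\gamma)$ are composites that contain the transition functions $g_{ij}$ evaluated at the basepoint (and in the paper's setting the left and right edges need not be constant at all), so they are not $1_G$; what is true, and suffices, is that source and target coincide, whence $\alpha({\rm Hol}_\gamma)=1_G$. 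A small additional remark on step (i): at a corner where all four sub\nobreakdash-rectangles share one chart the construction inserts identities, so refinement invariance follows from Lemma \ref{lem:H_AB-t} alone, and no appeal to a normalized cocycle with $f_{iii}=1$ --- which the paper does not assume --- is needed.
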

When $\gamma$ is a sphere, $\gamma( \cdot, 0 )\equiv\gamma(\cdot,1 )\equiv *$ is a fixed point. So if we write ${\rm Hol} (\gamma)$ as the $2$-arrow $(g,{\rm Hol}_\gamma)$ in $\mathcal{G}$ for some $g\in G$, its target is also $g$. This implies that $\alpha({\rm Hol}_\gamma)=1_H$.

To show the well-definedness of ${\rm Hol}_\gamma$, we have to prove that it is independent of the choice of the coordinate charts $\{U_i\}$,  division of the square $[0,1]^2$ into the union of
small rectangles $\Box_{ab}$,  the choice of the coordinate chart $U_i$ for each rectangle  $\Box_{ab}$ such that $\gamma(\Box_{ab})\subset U_i$  and reparametrization of the  loop $\gamma$ in the loop space $\mathcal{L}M$.

In Section 2, we recall   definitions of a crossed module, a differential  crossed module, a strict $2$-category and the construction of  the strict $2$-groupoid $\mathcal{G}$ associated to a crossed module. In Section 3 and 4, we develop the theory of path-ordered and surface-ordered integrals. We use the method  in \cite{SW11} (and similarly that in  \cite{MP10}), where the authors only consider the  local $2$-holonomies for bigons. A {\it bigon} is a mapping $\gamma:[0,1]^2\longrightarrow M$ such that its left and right boundaries     degenerate  to two  points. In our case, after division of the mapping $\gamma:[0,1]^2\longrightarrow U$, we have to consider general  Lipschitzian  mappings $ \Box_{ab}\longrightarrow U$.
In Section 3, we discuss the local $1$-holonomy along the loop as the boundary of a mapping  $\gamma: [0,1]^2\longrightarrow U $ and obtain its differentiation in terms of $1$-curvatures. We also give the transformation law of  local $1$-holonomies under a $2$-gauge transformation. In Section 4, we construct the local $2$-holonomy along  a mapping and give the transformation law of   local $2$-holonomies under a  $2$-gauge transformation, which is  a commutative cube. We also introduce the   transition $2$-arrow along a path   in the intersection $U_i\cap U_j$,  which is constructed from a $2$-gauge transformation $(g_{ij}, a_{ij})$.
 The  compatibility cylinder of three transition $2$-arrows along a path in the triple intersection  $U_i\cap U_j\cap U_k$ is commutative. The $\mathcal{G}$-valued  $2$-cocyle condition gives us a   commutative tetrahedron.    The commutative cubes, the compatibility cylinders   and the $2$-cocyle tetrahedra  are used in the last section to show the well-definedness of the global $2$-holonomy. From $3$-cells (\ref{eq:hol-replace})-(\ref{eq:hol-replace-t}) as a $3$-dimensional version of (\ref{eq:eq-path}), it is quite intuitionistic  to see that the global $2$-holonomy is independent of the choice of
  the  coordinate chart $U_i$ for each rectangle  $\Box_{ab}$ such that $\gamma(\Box_{ab})\subset U_i$.

  \begin{rem}
  I considered the problem of constructing the global $2$-holonomy based on the construction of the local $2$-holonomies given by Schreiber  and  Waldorf \cite{SW09}. After I found how to glue   local $2$-holonomies together  and checked its invariance   under the  change of coordinate charts, I realized that the problem had already been solved by Schreiber  and  Waldorf \cite{SW} via $2$-groupoids by introducing transport $2$-functors,
and even earlier
by  Martins  and  Picken \cite{MP} in the case of cubical bundles  via double groupoids. These two approaches are developed further by Parzygnat \cite{Pa} and Soncini  and  Zucchini  \cite{SZ}. So the theorem is not new. But I think my approach is still interesting, because we give  an explicit algorithm for calculating the global $2$-holonomy and explain intuitively the algorithm independent of the choice of
   coordinate charts, and our approach is geometric and completely elementary, i.e.  only basic concepts of   $2$-category theory are involved. This paper is self-contained.

 Schreiber  and  Waldorf also defined global $2$-holonomies along general surfaces (cf. \cite{SW} section 5). It is interesting to find an explicit algorithm using this approach.
  It is also interesting to use this approach to find an explicit algorithm for calculating  the global  $3$-holonomy for a $3$-connection on a $3$-bundle \cite{MP11} \cite{SW13} \cite{Wa}. But the geometry involved is more complicated, because we have to handle $4$-dimensional cubes and simplexes as in \cite{Wa} \cite{Wa1}.
  \end{rem}

\section{(Differential) crossed modules  and $2$-categories}
\subsection{Crossed modules and differential  crossed modules}
A {\it crossed module} $ (G, H,\alpha, \rhd)$  of Lie groups  is given by a homomorphism of Lie groups
$\alpha : H  \rightarrow G $ together with a smooth left action $\rhd$ of $G$ on $H $ by automorphisms, such that:  (1) for each $ g \in G$ and $h \in H $, we have
\begin{equation}\label{eq:alpha-h}
   \alpha(g \rhd h) = g\alpha(h)g^{-1};
\end{equation}
(2) for any $ f,h\in H $, we have
\begin{equation}\label{eq:alpha}
\alpha( f)\rhd h =f hf^{-1}.
\end{equation}
Here the smooth left action $\rhd$ of $G$ on $H $ by automorphisms means that we have
\begin{equation}\label{eq:action}
 (gg')\rhd h= g\rhd (g' \rhd h) \qquad{\rm and}  \qquad g\rhd(hh')=g\rhd h \cdot g\rhd h' ,
\end{equation} for any $g,g'\in G$, $h,h'\in H$.
In particular, we have
\begin{equation}\label{eq:g-1-H}
   g\rhd 1_H=1_H,\qquad (g \rhd h)^{-1}=g \rhd( h^{-1}).
\end{equation}
A {\it differential  crossed module} is given by  Lie algebras $ \mathfrak g $ and  $  \mathfrak h$ and a homomorphism of Lie algebras
$
  \alpha_*:   \mathfrak h \rightarrow \mathfrak g,
$
together with a smooth left action  $\rhd$ of $\mathfrak g$ on $\mathfrak h $   by automorphisms,   such that:
\\
(1) for any  $x \in   \mathfrak g $, $u \in   \mathfrak h $, we have $\alpha_*(x\rhd u )= [x ,\alpha_*( u )]$;
\\
(2) for any     $v,u \in   \mathfrak h $, we have $\alpha_* (v)\rhd u= [ v,u]$.
\\
Here the smooth left action $\rhd$ of $\mathfrak g$ on $\mathfrak h $ by automorphisms means
  that for any  $x,y \in   \mathfrak g $, $u,v \in   \mathfrak h$, we have
\begin{equation*}
   x\rhd [u ,v]=[x\rhd u ,v]+[u ,x\rhd v]\qquad{\rm and}  \qquad
   [x,y]\rhd u  = x \rhd (y \rhd u )- y \rhd (x \rhd u ) .
\end{equation*}

Without   loss of generality, we assume that  groups $G$ and $H$ are matrix groups. In this case, a product of group elements is realized  as a product  of matrices. Moreover, their Lie algebras $\mathfrak g$ and $\mathfrak h $   also consist of matrices.
The smooth left action $\rhd$ of $G$ on $H $
  induces an  action  of $G$ on $ \mathfrak h $ and an  action  of $ \mathfrak g $ on $H $ by
\begin{equation}\label{eq:induce-act}
   g\rhd y=\left.\frac d{dt}\right|_{t=0} g\rhd \exp(ty),\qquad x\rhd h=\left.\frac d{dt}\right|_{t=0}  \exp(tx)\rhd h,
\end{equation} where $ y\in \mathfrak h, x\in \mathfrak g$,
respectively. And
$
   \alpha_*(x)=\left.\frac d{dt}\right|_{t=0}\alpha(\exp(tx)).
$
By abuse of notations, we will also denote $\alpha_*$ by $\alpha $.
In particular, for any $x\in \mathfrak g$, it follows from (\ref{eq:g-1-H}) that
\begin{equation}\label{eq:rhd-1}
  x\rhd 1_H=0.
\end{equation}

Let
$G\ltimes H$ be the {\it wreath product} of groups $G$ and $H$ given by the action $\rhd$, i.e.
 \begin{equation}\label{eq:product-wreath0}
  (g_1,h_1)\cdot (g_2,h_2):= (g_1g_2,g_1\rhd h_2\cdot h_1).
 \end{equation} This product is associative since we have
  \begin{equation}\label{eq:associative-wreath-group}\begin{split}
   [ (g_1,h_1)\cdot (g_2,h_2)] \cdot (g_3,h_3)&=   (g_1 g_2 g_3, (g_1 g_2)\rhd h_3 \cdot g_1\rhd h_2\cdot h_1  )\\& =   (g_1,h_1)\cdot [(g_2,h_2) \cdot (g_3,h_3)], \end{split}\end{equation}
 by using (\ref{eq:action}),
 and
 \begin{equation}\label{eq:inverse}
(g,h)^{-1}=\left(g^{-1},g^{-1}\rhd h^{-1}\right).
 \end{equation}
Set $ g_j = \exp (sX), h_j=\exp (sX)$ in (\ref{eq:product-wreath0}), $j=1$ or $2$, where $X\in \mathfrak g$, $Y\in \mathfrak h$. Then differentiate it with respect to
$s$ at $s=0$ to get
 \begin{equation}\label{eq:Y-h}
    (X,Y)\cdot (g ,h )= (Xg ,X\rhd h+hY),\qquad (g ,h )\cdot  (X,Y)= (g X,g\rhd Y \cdot h).
 \end{equation}
Similarly, we have
 \begin{equation}\label{eq:X-Y}
(X,Y)\cdot (X' ,Y')= (XX' ,X\rhd Y'+Y'Y),
 \end{equation}which provides us the  Lie algebraic structure for the wreath product $\mathfrak g\ltimes\mathfrak h$.

 \begin{lem} \label{lem:Ad}  For any $(g,h)\in G\ltimes H$ and $(X,Y) \in\mathfrak g\ltimes\mathfrak h$, we have
 \begin{equation}\label{eq:Ad}
    Ad_{(g,h)}(X,Y)=\left(Ad_{ g }X,(Ad_{ g }X)\rhd h^{-1}\cdot h+ Ad_{ h^{-1}} (  g\rhd Y)\right).
 \end{equation}
 \end{lem}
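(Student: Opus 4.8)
The plan is to read off $Ad_{(g,h)}(X,Y)$ directly from its definition as the velocity at $t=0$ of the conjugation curve $c(t):=(g,h)\cdot(\exp(tX),\exp(tY))\cdot(g,h)^{-1}$ in $G\ltimes H$, using only the explicit product rule (\ref{eq:product-wreath0}) and the inverse formula (\ref{eq:inverse}). First I would write out $c(t)$ inside the group. By (\ref{eq:product-wreath0}),
\[
  (g,h)\cdot(\exp(tX),\exp(tY))=\bigl(g\exp(tX),\,(g\rhd\exp(tY))\cdot h\bigr);
\]
multiplying on the right by $(g,h)^{-1}=(g^{-1},g^{-1}\rhd h^{-1})$ and applying (\ref{eq:product-wreath0}) once more, together with the action axiom $(gg')\rhd h=g\rhd(g'\rhd h)$ from (\ref{eq:action}) to collapse $\bigl(g\exp(tX)\bigr)\rhd\bigl(g^{-1}\rhd h^{-1}\bigr)=\bigl(g\exp(tX)g^{-1}\bigr)\rhd h^{-1}$, one obtains
\[
  c(t)=\Bigl(\,g\exp(tX)g^{-1},\ \ \bigl(g\exp(tX)g^{-1}\bigr)\rhd h^{-1}\ \cdot\ \bigl(g\rhd\exp(tY)\bigr)\ \cdot\ h\,\Bigr).
\]

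Then I would differentiate $c(t)$ at $t=0$. The first slot immediately gives $\frac{d}{dt}\big|_{0}\,g\exp(tX)g^{-1}=Ad_g X$. In the second slot the $H$-component is a product of three factors; at $t=0$ the first equals $(gg^{-1})\rhd h^{-1}=h^{-1}$, the second equals $g\rhd\exp(0)=1_H$ by (\ref{eq:g-1-H}), and the third is the constant $h$, so the Leibniz rule leaves exactly the two terms $\bigl[\frac{d}{dt}\big|_{0}(g\exp(tX)g^{-1})\rhd h^{-1}\bigr]\cdot h+h^{-1}\cdot\bigl[\frac{d}{dt}\big|_{0}(g\rhd\exp(tY))\bigr]\cdot h$. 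Since $G$ and $H$ are matrix groups one has $g\exp(tX)g^{-1}=\exp(t\,Ad_gX)$ on the nose, so the definition (\ref{eq:induce-act}) of the induced action of $\mathfrak g$ on $H$ identifies the first derivative as $(Ad_gX)\rhd h^{-1}$, and the definition (\ref{eq:induce-act}) of the induced action of $G$ on $\mathfrak h$ identifies the second as $g\rhd Y$. Hence the second slot equals $(Ad_gX)\rhd h^{-1}\cdot h+h^{-1}(g\rhd Y)h=(Ad_gX)\rhd h^{-1}\cdot h+Ad_{h^{-1}}(g\rhd Y)$, which is precisely (\ref{eq:Ad}).

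I do not expect a genuine obstacle here: the result is essentially a one-line computation once $c(t)$ has been written out explicitly. The only place that wants attention is the Leibniz differentiation of the $H$-component, where one must keep in mind that it is a triple product with two nonconstant factors, use (\ref{eq:g-1-H}) (and the unit axiom for $\rhd$) to evaluate the third and the second factors at $t=0$, and notice that the trailing right translation by $h$ together with the conjugation by $h^{-1}$ are exactly what return both derivative terms into $\mathfrak h$. One also needs the action axioms (\ref{eq:action}) to simplify the group-level expression before differentiating. An equivalent, curve-free derivation would be to substitute the infinitesimal product rules (\ref{eq:Y-h}) directly into $(g,h)(X,Y)(g,h)^{-1}$; it produces the same cancellations and the same formula.
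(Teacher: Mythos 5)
Your proposal is correct and is essentially the same argument as in the paper: the paper also computes the conjugation curve $Ad_{(g,h)}(\exp(sX),\exp(sY))=\left(g\exp(sX)g^{-1},\left(g\exp(sX)g^{-1}\right)\rhd h^{-1}\cdot g\rhd\exp(sY)\cdot h\right)$ via (\ref{eq:product-wreath0})--(\ref{eq:inverse}) and differentiates at $s=0$. Your additional care with the Leibniz rule on the triple product and the identification of the derivatives via (\ref{eq:induce-act}) just makes explicit what the paper leaves implicit.
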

 \begin{proof} Note that by using the multiplication law (\ref{eq:product-wreath0})-(\ref{eq:inverse}), we have
 \begin{equation*}\begin{split}
  Ad_{(g,h)}(\exp (sX ),\exp(sY ))&=  (g,h) (\exp (sX ),\exp(sY ))\left(g^{-1},g^{-1}\rhd h^{-1}\right)\\&
    =\left(g\exp (sX )g^{-1},\left(g\exp (sX )g^{-1}\right)\rhd h^{-1} \cdot g\rhd\exp(sY )\cdot h\right) .
 \end{split}\end{equation*}
Then take derivatives with respect to $s$ at $s=0$ to get (\ref{eq:Ad}).
 \end{proof}
 \subsection{Strict  $2$-categories}
 A {\it $2$-category} is a category enriched over the category of all small categories. In particular,
a strict $2$-category  $  \mathcal{ C }$    consists of collections $ \mathcal C_0$ of objects, $ \mathcal C_1$ of arrows  and $ \mathcal C_2$ of $2$-arrows,
together with

$\bullet$ functions $s_n, t_n :  \mathcal C_i\rightarrow  \mathcal C_n$ for all $0 \leq n < i \leq 2$,  called
the {\it $n$-source} and {\it $n$-target},

$\bullet$ functions $\#_n :  \mathcal C_{n+1} \times \mathcal C_{n+1}\rightarrow  \mathcal C_{n+1}$, $n=0,1$, called the {\it  (vertical) $n$-composition},

$\bullet$ a function  $\#_0 :  \mathcal C_2  \times \mathcal C_{2}\rightarrow  \mathcal C_2$, called the {\it  (horizontal) $0$-composition},

$\bullet$ a function  $ 1_{*}   :  \mathcal C_i \rightarrow  \mathcal C_{i+1 }$,  $i=0,1$, called the {\it identity}.

Two arrows $\gamma$ and  $\gamma' $ are called  {\it $n$-composable} if the $n$-target of $\gamma$ coincides with the  $n$-source of $\gamma'$.
For example, two  $2$-arrows $\phi$ and $\psi$ are called {\it  $1$-composable} if the   $1$-target of $\phi$ coincides with the $1$-source  of $\psi$. In this case, their vertical composition
$
  \phi\#_1 \psi
$ is $
   \xy
(-12,0)*+{x}="4";
(12,0)*+{y}="6";
{\ar@{->}|-{B} "4";"6"};
{\ar@/^1.55pc/^{A} "4";"6"};
{\ar@/_1.55pc/_{C} "4";"6"};
{\ar@{=>}^<<{\phi} (0,6)*{};(0,1)*{}} ;
{\ar@{=>}^<<{\psi} (0,-1)*{};(0,-6)*{}} ;
\endxy,
$
where $A=s_1(\phi)$, $B=t_1(\phi)=s_1(\psi)$, $C=t_1(\psi)$, $x=s_0(\phi)=s_0(\psi)$, etc..
Two  $2$-arrows $\phi$ and $\psi$ are called {\it (horizontally) $0$-composable} ¡¡ if the   $0$-target of $\phi$ coincides with the $0$-source  of $\psi$. In this case, their horizontal composition
$
  \phi\#_0 \psi
$ is
$
     \xy
(-12,0)*+{x}="4";
(12,0)*+{y}="6"; (36,0)*+{z}="8";
{\ar@/^1.55pc/^{A} "4";"6"};
{\ar@/_1.55pc/_{C} "4";"6"};
{\ar@{=>}^{\phi} (0,6)*{};(0,-6 )*{}} ; {\ar@/^1.55pc/^{B} "6";"8"};
{\ar@/_1.55pc/_{D} "6";"8"};
{\ar@{=>}^{\psi} (24,6)*{};(24,-6 )*{}} ;
\endxy  .
 $
   In particular, when $\phi=1_A$,  we call $1_A\#_{0}\psi$    {\it     whiskering from left by the $1$-arrow $A$}, and denote it by $A\#_{0}\psi$:
$
\qquad \xy
(-10,0)*+{ y }="1";(-30,0)*+{ x }="0";
(10,0)*+{z  }="2";
  {\ar@{=>}^{ \psi } (0, 4)*{};(0,-4)*{}} ;
 {\ar@/^1.35pc/^{ B  } "1";"2" };
  {\ar@/_1.35pc/_{  D } "1";"2" };{\ar@{->}^{A   } "0";"1" };
\endxy
$.
  Similarly,   we define     {\it    whiskering from  right  by a $1$-arrow}.

The identities satisfy
\begin{equation}\label{eq:identities}   \begin{split}
  & 1_x\#_0 A=A=  A\#_0 1_y  , \qquad\quad
   1_A\#_1 \phi=\phi=  \phi\#_1 1_B  ,
  \end{split}\end{equation}   for any  $1$-arrow $ A:x\longrightarrow y$ and   any $2$-arrow $\phi:A \Longrightarrow B$.
The composition $\#_p$ satisfies the {\it associativity}
\begin{equation}\label{eq:composition-associativity}
   (\phi\#_p \psi)\#_p\omega=  \phi\#_p (\psi \#_p\omega),
\end{equation}
 if they are $p$-composable, for $p=0$ or $1$.

 The horizontal composition satisfies the {\it interchange law}:
 \begin{equation}\label{eq:interchanging-law}
    (A\#_0\psi)\#_1(\phi\#_0D)=\phi \#_0\psi=(\phi\#_0 B)\#_1(C\#_0\psi ),
 \end{equation}
\begin{equation*}
     \xy
(-12,0)*+{x}="4";
(12,0)*+{y}="6"; (36,0)*+{z}="8";
{\ar@/^1.55pc/ "4";"6"|-{A}};
 {\ar@/^1.55pc/^{B} "6";"8"};
{\ar@/_1.55pc/_{D} "6";"8"};
{\ar@{=>}^{\psi} (24,6)*{};(24,-6 )*{}} ;
(-12,-10)*+{x}="04";
(12,-10)*+{y}="06"; (36,-10)*+{z}="08";
{\ar@/^1.55pc/^{A} "04";"06"};
{\ar@/_1.55pc/_{C} "04";"06"};
{\ar@{=>}^{\phi} (0,-4)*{};(0,-16 )*{}} ;
{\ar@/_1.55pc/ "06";"08"_{D}};
\endxy \xy0;/r.22pc/:
  (0,-7)*+{ }="1";
(20,-7)*+{ }="2";
{\ar@{=}^{ }   "1" ;"2" };
   \endxy  \xy
(-12,0)*+{x}="4";
(12,0)*+{y}="6"; (36,0)*+{z}="8";
{\ar@/^1.55pc/^{A} "4";"6"};
{\ar@/_1.55pc/_{C} "4";"6"};
{\ar@{=>}^{\phi} (0,6)*{};(0,-6 )*{}} ; {\ar@/^1.55pc/ "6";"8"|-{B}};
(-12,-10)*+{x}="14";
(12,-10)*+{y}="16"; (36,-10)*+{z}="18";
{\ar@/_1.55pc/ "14";"16"_{C}};
{\ar@/^1.55pc/^{B} "16";"18"};
{\ar@/_1.55pc/_{D} "16";"18"};
{\ar@{=>}^{\psi} (24,-4)*{};(24,-16 )*{}} ;
\endxy
 \end{equation*} namely,
the vertical composition of the left two $2$-arrows coincides with the vertical composition of the right two $2$-arrows. They are both equal to the horizontal composition $\phi \#_0\psi$.  The interchange law allows us  to change the order of compositions of $2$-arrows, up to whiskerings.

The  interchange law (\ref{eq:interchanging-law}) is a special case of the following more general {\it compatibility condition} for different compositions. If $(\beta,\beta'),(\gamma ,\gamma' ) \in \mathcal{C}_{k }\times \mathcal{C}_{k } $ are  $p$-composable and $(\beta ,  \gamma),(\beta' ,\gamma' )\in
\mathcal{C}_{k }\times  \mathcal{C}_{k } $ are $q$-composable, $p,q=0,1$,
then we have
\begin{equation}\label{eq:asso2}
   (\beta\#_{p }\beta') \#_{q} (\gamma \#_{p }\gamma' )=  (\beta \#_{q}  \gamma)\#_{p }(\beta'  \#_{q}\gamma' ),
   \qquad \xy
(-12,0)*+{\bullet}="4";
(12,0)*+{\bullet}="6";
{\ar@/^ 1.55pc/^{ } "4";"6"};
{\ar@/_ 1.55pc/ "4";"6"};
{\ar@{->}|-{ } "4";"6"};
 {\ar@{=>}^{\beta } (0,6)*{};(0,1)*{}} ;
  {\ar@{=>}^{\gamma } (0,-2)*{};(0,-6)*{}} ;
(36,0)*+{\bullet}="16";
{\ar@/^ 1.55pc/^{ } "6";"16"};
{\ar@/_ 1.55pc/ "6";"16"};
{\ar@{->}|-{ } "6";"16"};
 {\ar@{=>}^{\beta' } (24,6)*{};(24,1)*{}} ;
  {\ar@{=>}^{\gamma' } (24,-2)*{};(24,-6)*{}} ;
\endxy.
\end{equation} Here $p=0,q=1$ in the right diagram.
 The first identity of the interchange law (\ref{eq:interchanging-law}) is exactly the condition (\ref{eq:asso2}) with
$
    p=0 ,  q=1 ,    \beta=1_A ,   \beta' =\psi ,    \gamma=\phi ,  \gamma'=1_D ,
$
by using the property (\ref{eq:identities}) for identities. It is similar for the second identity in (\ref{eq:interchanging-law}) . (\ref{eq:identities}) (\ref{eq:composition-associativity}) and (\ref{eq:asso2}) are the  axioms that a strict  $2$-category should satisfy.

A $1$-arrow
 $A: x \rightarrow y $ is
called  {\it   invertible}, if there exists another $1$-arrow
 $B
 : y \rightarrow x$ such that $1_x =A\#_0B $
 and
$  B\#_0A
 = 1_y$. A strict  $2$-category in which every $1$-arrow is invertible is called a {\it strict $2$-groupoid}.
A $2$-arrow $\varphi:
A\Rightarrow B$ is called {\it invertible}  if there exists another
$2$-arrow $\psi :
B\Rightarrow A$
  such that $\psi\#_1\varphi = 1_B$
  and $\varphi\#_1 \psi = 1_A$.
$\psi$ is uniquely determined and called the {\it inverse of $\varphi$}.

\subsection{The strict $2$-groupoid $\mathcal{G}$ associated to a crossed module}
\begin{prop}\label{prop:2-category-crossed-module}
  A   crossed module $ (G, H,\alpha, \rhd)$ constitutes  a strict $2$-groupoid with only one object $\bullet$, $1$-arrows given by elements of $G$ and $2$-arrows given by elements $(g,h)\in G\times H $
\begin{equation*}
  \xy
(-12,0)*+{\bullet}="4";
(12,0)*+{\bullet}="6";
{\ar@/^ 1.35pc/^{g} "4";"6"};
{\ar@/_ 1.35pc/_{\alpha(h^{-1})g } "4";"6"};
 {\ar@{=>}^{h } (0,4)*{};(0,-4)*{}} ;
\endxy.
\end{equation*}
\end{prop}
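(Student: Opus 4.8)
The plan is to write the $2$-groupoid structure down explicitly and then verify, one by one, the axioms (\ref{eq:identities}), (\ref{eq:composition-associativity}) and (\ref{eq:asso2}). With the single object $\bullet$, the $1$-arrows are the elements $g\in G$, with $0$-composition $g\#_0 g':=gg'$ and identity $1_\bullet:=1_G$; on $1$-arrows this is already a groupoid, each $g$ being invertible with inverse $g^{-1}$. A $2$-arrow is a pair $(g,h)\in G\times H$ with $1$-source $g$ and $1$-target $\alpha(h^{-1})g$ (and $0$-source $=0$-target $=\bullet$), the identity $2$-arrow on $g$ being $1_g:=(g,1_H)$, which has source and target $g$ since $\alpha(1_H)=1_G$. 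For $1$-composable $2$-arrows $(g,h): g\Rightarrow\alpha(h^{-1})g$ and $(g',h'): g'\Rightarrow\alpha(h'^{-1})g'$ (so $g'=\alpha(h^{-1})g$) I set $(g,h)\#_1(g',h'):=(g,hh')$; as $\alpha$ is a homomorphism its target is $\alpha(h'^{-1})\alpha(h^{-1})g=\alpha\bigl((hh')^{-1}\bigr)g$, as required. The horizontal composition I take to be the wreath product (\ref{eq:product-wreath0}), $(g_1,h_1)\#_0(g_2,h_2):=(g_1g_2,(g_1\rhd h_2)h_1)$, and whiskerings are the special cases $1_A\#_0\psi$, $\phi\#_0 1_B$.

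First I would check that $\#_0$ on $2$-arrows is \emph{well defined}, i.e.\ that the $1$-target of $(g_1,h_1)\#_0(g_2,h_2)$ equals the composite $t_1(g_1,h_1)\,t_1(g_2,h_2)=\alpha(h_1^{-1})g_1\,\alpha(h_2^{-1})g_2$ of the two targets. Moving $g_1$ past $\alpha(h_2^{-1})$ by (\ref{eq:alpha-h}) gives $\alpha(h_1^{-1})g_1\,\alpha(h_2^{-1})g_2=\alpha\bigl(h_1^{-1}(g_1\rhd h_2^{-1})\bigr)g_1g_2=\alpha\bigl(((g_1\rhd h_2)h_1)^{-1}\bigr)g_1g_2$, which is exactly the target of $(g_1g_2,(g_1\rhd h_2)h_1)$. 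The identity axioms (\ref{eq:identities}) are then immediate: $1_\bullet\#_0 g=g=g\#_0 1_\bullet$ because $G$ is a group, and $1_g\#_1(g,h)=(g,h)=(g,h)\#_1 1_{g'}$ from the definition of $\#_1$. Associativity (\ref{eq:composition-associativity}) of $\#_1$ is associativity of the product in $H$, of $\#_0$ on $1$-arrows is associativity in $G$, and of $\#_0$ on $2$-arrows is the associativity of the wreath product already proved in (\ref{eq:associative-wreath-group}).

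The step needing care is the compatibility condition (\ref{eq:asso2}), of which the interchange law (\ref{eq:interchanging-law}) is the special case with whiskerings. I would take $\beta=(g_1,h_1)$, $\beta'=(\tilde g,h_1')$ and $\gamma=(\alpha(h_1^{-1})g_1,h_2)$, $\gamma'=(\alpha(h_1'^{-1})\tilde g,h_2')$, chosen so that $\beta\#_1\gamma$ and $\beta'\#_1\gamma'$ are defined; the well-definedness computation above shows $\beta\#_0\beta'$ and $\gamma\#_0\gamma'$ are then $1$-composable. Expanding both sides of $(\beta\#_0\beta')\#_1(\gamma\#_0\gamma')=(\beta\#_1\gamma)\#_0(\beta'\#_1\gamma')$, the $G$-components are $g_1\tilde g$ on either side; for the $H$-components the left side is $(g_1\rhd h_1')h_1\,\bigl((\alpha(h_1^{-1})g_1)\rhd h_2'\bigr)h_2$ and the right side is $\bigl(g_1\rhd(h_1'h_2')\bigr)h_1h_2=(g_1\rhd h_1')(g_1\rhd h_2')h_1h_2$ by the second rule of (\ref{eq:action}), so after cancellation the identity reduces to $h_1\bigl((\alpha(h_1^{-1})g_1)\rhd h_2'\bigr)h_1^{-1}=g_1\rhd h_2'$. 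This is precisely where the crossed module relations come in: by the first rule of (\ref{eq:action}), $(\alpha(h_1^{-1})g_1)\rhd h_2'=\alpha(h_1^{-1})\rhd(g_1\rhd h_2')$, and by (\ref{eq:alpha}) this equals $h_1^{-1}(g_1\rhd h_2')h_1$, so conjugating back by $h_1$ gives $g_1\rhd h_2'$. I expect this step, namely keeping the composability hypotheses consistent and tracking the order of the $H$-factors throughout, to be the only delicate point; everything else is routine. Finally, since each $1$-arrow $g$ is invertible (inverse $g^{-1}$), the strict $2$-category constructed above is a strict $2$-groupoid; moreover every $2$-arrow $(g,h)$ is $\#_1$-invertible with inverse $(\alpha(h^{-1})g,h^{-1})$.
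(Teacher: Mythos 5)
Your proposal is correct and follows essentially the same route as the paper: the same vertical and horizontal (wreath-product) compositions, associativity via (\ref{eq:associative-wreath-group}), and the compatibility/interchange condition reduced to the identity $h_1\bigl((\alpha(h_1^{-1})g_1)\rhd h_2'\bigr)h_1^{-1}=g_1\rhd h_2'$, which is exactly the paper's verification of (\ref{eq:interchanging-cross}) via (\ref{eq:alpha}) and (\ref{eq:action}). You merely write out explicitly the target-matching and identity checks that the paper declares ``easy to check,'' so there is no substantive difference.
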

We denote this strict $2$-groupoid  by $\mathcal{G}$.
Any two $1$-arrows  $g:\bullet\longrightarrow\bullet$ and  $g':\bullet\longrightarrow\bullet$ are $0$-composable and
$g\#_0g'=gg'$.
The $1$-source of   $2$-arrow $(g,h)$ is  $g$, while its $1$-target  is $\alpha(h^{-1}) g$.

The {\it vertical composition} of two $2$-arrows $(g,h)$ and $(g',h')$ is
\begin{equation}\label{eq:vertical-composition}(g,h)\#_1(g',h'):=(g,hh')\qquad
  \xy
(-12,0)*+{\bullet}="4";
(12,0)*+{\bullet}="6";
{\ar@/^ 1.55pc/^{g} "4";"6"};
{\ar@/_ 1.55pc/ "4";"6"};
{\ar@{->}|-{g'} "4";"6"};
 {\ar@{=>}^{h } (0,6)*{};(0,1)*{}} ;
  {\ar@{=>}^{h' } (0,-2)*{};(0,-6)*{}} ;
\endxy,
\end{equation} if they are $1$-composable, i.e.,
  $g'=\alpha(h^{-1}) g $. This composition is well defined since their targets are   equal, i.e. $\alpha(h'{}^{-1})\alpha(h^{-1}) g=\alpha((hh')^{-1}) g$.
The  {\it horizontal composition} is
\begin{equation}\label{eq:interchange}
 (g,h)\#_0(g',h'):=(gg', g\rhd h'\cdot h)\qquad \xy
(-8,0)*+{\bullet}="4";
(8,0)*+{\bullet}="6";(24,0)*+{\bullet}="8";
{\ar@/^ .85pc/^{g} "4";"6"};
{\ar@/_ .85pc/_{\alpha(h^{-1}) } "4";"6"};
{\ar@/^ .85pc/^{g'} "6";"8"};
{\ar@/_ .85pc/_{ } "6";"8"};
{\ar@{=>}^{h } (0,2)*{};(0,-2)*{}} ;
{\ar@{=>}^{h' } (16,2)*{};(16,-2)*{}} ;
\endxy.
\end{equation}
 This is exactly the multiplication of the wreath product $ G\ltimes H$ in (\ref{eq:product-wreath0}). So it satisfies the associativity (\ref{eq:composition-associativity}) by (\ref{eq:associative-wreath-group}).
Note that for any two $2$-arrows, their horizontal composition always exists.
When $h=1_H$ or $h'=1_H$ in (\ref{eq:interchange}), we have $2$-arrows
  \begin{equation} \label{eq:wisker} (gg', g\rhd h' ):\qquad
  \xy
(-8,0)*+{\bullet}="4";
(8,0)*+{\bullet}="6";(24,0)*+{\bullet}="8";
{\ar@{->}^g "4";"6"};
{\ar@/^ .85pc/^{g'} "6";"8"};
{\ar@/_ .85pc/_{ } "6";"8"};
{\ar@{=>}^{h' } (16,2)*{};(16,-2)*{}} ;
\endxy,\qquad\qquad  (gg',   h) :\qquad
 \xy
(-8,0)*+{\bullet}="4";
(8,0)*+{\bullet}="6";(24,0)*+{\bullet}="8";
{\ar@/^ .85pc/^{g} "4";"6"};
{\ar@/_ .85pc/_{ } "4";"6"};
{\ar@{->}^{g'} "6";"8"};
{\ar@{=>}^{h } (0,2)*{};(0,-2)*{}} ;
\endxy,
\end{equation}
respectively. They are  {\it whiskering} from left and  right   by a $ 1$-arrow,  respectively. From above we see that whiskering from right by a $ 1$-arrow    is always trivial in $\mathcal{G}$. We have identities
$
 1_\bullet=1_G , 1_g=(g,1_H).
$
The horizontal composition satisfies the {\it interchange law}:
\begin{equation} \label{eq:interchanging-cross}
 (gg', g\rhd h'\cdot h)=(gg',h\cdot[\alpha(h^{-1}) g]\rhd h' ).
\end{equation}
This is because
 \begin{equation*}
    g\rhd h'\cdot h=h Ad_{h^{-1}}(g\rhd h')=h\cdot  \alpha(h^{-1})\rhd (g\rhd h')=h\cdot [\alpha(h^{-1}) g]\rhd h' ,
 \end{equation*}
by   (\ref{eq:alpha}) and   left action $\rhd$  of $G$ on $H $.

It is easy to check that $\mathcal{G}$ satisfies axioms (\ref{eq:identities}) (\ref{eq:composition-associativity}) and (\ref{eq:asso2}). So it is a strict $2$-category. Moreover, it is a strict $2$-groupoid.

\begin{rem}  \label{rem:order}  Proposition \ref{prop:2-category-crossed-module} is well known. But here we write compositions of $1$- or $2$-arrows in  the natural order, which is different from that in  \cite{MP10} \cite{SW11} \cite{SW}. It has the advantage that
the order of a product of group elements is the same as that of corresponding arrows  appearing  in  the   diagram. But this makes our formulae of $2$-gauge-transformations in (\ref{eq:gauge-transformations}) and the compatibility conditions
(\ref{eq:a-f})  a little bit different from the standard ones.
\end{rem}

The condition (\ref{eq:cocycle1}) in the definition of
    a  nonabelian  $\mathcal{G}$-valued   cocycle is equivalent to say that $f_{ijk}$ defines a $2$-arrow
 \begin{equation*}
\left(g_{ij}  g_{jk}, f_{ijk}\right):\qquad     \xy
(-10,-10)*+{\scriptscriptstyle  \bullet}="1";
(20,-10)*+{\scriptscriptstyle \bullet}="2";
(10, 10)*+{\scriptscriptstyle \bullet}="3";
{\ar@{->}_{\scriptscriptstyle g_{ik}  } "1";"2" };
{\ar@{->}^{\scriptscriptstyle g_{ij} } "1";"3" };
{\ar@{<-}_{\scriptscriptstyle g_{jk}} "2";"3" };
{\ar@{<=}_{\scriptscriptstyle  f_{ijk} } (8, -8)*{};(10,  3)*{}}
\endxy
\end{equation*} in $\mathcal{G}$,
while the $2$-cocycle condition (\ref{eq:cocycle2})
    is equivalent to commutativity of the following tetrahedron:
   \begin{equation}\label{eq:tetrahedron0}
     \xy  0;/r.20pc/:
(0,0 )*+{\scriptscriptstyle k}="1";
(40, 0)*+{\scriptscriptstyle l}="2";
( 55,25)*+{\scriptscriptstyle j}="3";
(20,45)*+{\scriptscriptstyle i}="4";
 {\ar@{->} "1";"2"_{\scriptscriptstyle  g_{kl}  } };
{\ar@{-->}|-{   } "3";"1"};
{\ar@{<-} "2";"3"_{\scriptscriptstyle    g_{jl} } };
{\ar@{<-}^{\scriptscriptstyle   g_{ik} } "1";"4" };
{\ar@{->}|-{\scriptscriptstyle    g_{ il}  } "4"; "2"};
{\ar@{<-} "3";"4"_{\scriptscriptstyle  g_{ij}} };
{\ar@{<==}^{\scriptscriptstyle f_{jkl} } (43,  12)*{};( 8,2)*{}};
{\ar@{<= }_{\scriptscriptstyle  f_{ijl}  } (32,  25)*{};(51,25)*{}};
{\ar@{=>}^{\scriptscriptstyle  f_{ikl } } (7,  8)*{};(28,20)*{}};
( 23,-8)*+{ \scriptstyle    {\rm The \hskip 3mm  tetrahedron} \,\, T^i_{jkl}  }="30";
{\ar@{==>}_{\scriptscriptstyle  f_{ijk } } (43,29)*{};(17,31)*{}};
 \endxy
\end{equation}
 i.e.,
\begin{equation*}
   \left(g_{ij}g_{jk} g_{kl}, g_{ij}\rhd f_{jkl}\right)\#_1 \left (g_{ij}g_{j l},    f_{ijl}\right)=\left(g_{ij}g_{jk} g_{kl},   f_{ijk }\right)\#_1  \left(g_{ik}g_{kl},   f_{ikl}\right).
\end{equation*}
\begin{rem}
Here and in the sequel, the commutativity of a $3$-cell implies  that one $2$-arrow can be described as the composition of other $2$-arrows,
 some of which are inverted.\end{rem}
\subsection{The local $2$-connections}
 Given a Lie algebra $\mathfrak k$ ($\mathfrak k$=$\mathfrak g$ or $\mathfrak h$), we denote by $A^k(U, \mathfrak k)$ the space of all $\mathfrak k$-valued differential $k$-forms on an open set $U$. For $K \in \Lambda^{k }(U, \mathfrak k) $, we can write $K=\sum_a K^aX_a $ for some scalar differential $k$-forms $K^a $  and elements $  X_a $'s   of $ \mathfrak k$.   Since  $\mathfrak k$  is assumed to be a matrix Lie algebra,  we have $[X,X']=XX'-X'X$ for any $ X,X'\in \mathfrak k$.

For $K=\sum_a K^aX_a, M=\sum_b M^bX_b\in \Lambda^{1 }(U, \mathfrak g) $, define
\begin{equation}\label{eq:wedge}\begin{split}  K\wedge  M :& =\sum_{a,b} K^a\wedge M^b  X_a  X_b,\qquad\qquad dK=\sum_a dK^a X_a,
\end{split}\end{equation}
and for   $\Psi  =\sum_b \Psi^b Y_b \in \Lambda^{s} (U, \mathfrak h)$, define
\begin{equation}\label{eq:rhd}
     K  \rhd \Psi : =\sum_{a,b} K^a\wedge \Psi^b X_a\rhd Y_b.
 \end{equation}
The {\it $1$-curvature $2$-form} and {\it $2$-curvature $3$-form}   are defined as
\begin{equation*}\label{eq:3-curvature} \begin{split}
  \Omega^A: & =dA+A\wedge A,\\
    \Omega_2^{(A,B)}:& =dB+A\rhd B,
\end{split} \end{equation*} respectively.
Under the $2$-gauge transformation (\ref{eq:gauge-transformations}), these curvatures transform as follows:
\begin{equation*}\begin{split}
\Omega^{A'}- \alpha(  {B}')   & =g^{-1} \rhd\left(\Omega^A -\alpha(  B)\right),
 \\ \Omega_2^{(A',B')}   & =g^{-1} \rhd\Omega_2^{[A,B]}+ [\Omega^{A'} - \alpha(     B')]  {\rhd}\varphi ,
\end{split} \end{equation*}
 (cf. \cite{BH11} \cite {Wa}).
  The {\it fake $1$-curvature} is $ \Omega^A - \alpha( B)$. We only consider   $2$-connections with vanishing fake $1$-curvatures, i.e.  (\ref{eq:2-connection}) holds. In this case the $2$-curvature $3$-form is covariant under   $2$-gauge transformations (\ref{eq:gauge-transformations}).

\section{The local $1$-holonomy }
\subsection{The local $1$-holonomy along a loop and its variation} By the definition of $1$-holonomy in (\ref{eq:F-A}), it is easy to see that
\begin{equation}\label{eq:composition-paths}
   F_A(\rho\# \widetilde{\rho})= F_A(\rho ) F_A( \widetilde{\rho}),
\end{equation}
  where $\#$ is the composition of two paths. We use the natural order, i.e.  we write $\rho \#\widetilde{\rho }$ if the endpoint of $\rho $ coincides with the starting point of $\widetilde{\rho }$.

Now consider a surface given by a Lipschitzian mapping $\gamma: [0,1]^2\longrightarrow U $. We denote by $\gamma_{ [t_1,t_2];  s  } $ the curve given by the mapping $\gamma$ restricted to the horizontal interval $ [t_1,t_2]\times\{ s\}  $, and denote by $\gamma_{ t;[s_1,s_2] } $ the curve given by the mapping $\gamma$ restricted to the vertical interval $\{ t\} \times[s_1,s_2]$. Also denote by  $\gamma_{ t; s  } $ the point $\gamma(t,s)$.
In the following   we will also use the notations
\begin{equation}\label{eq:gamma-+}
   \gamma^-_{t;s}:=\gamma_{ 0; [0,  s] }\#\gamma_{[0,t] ; s },\qquad \gamma^+_{t;s}:=\gamma_{ [0,t];  0 }\#\gamma_{ t; [0,  s] },
\end{equation}
  for the lower and upper  boundaries of the surface $\gamma $ restricted to $[0,t]\times[0,s]$, respectively.

The $1$-holonomy along the loop as  the boundary of the surface $\gamma:[0,t]\times[s_0,s]\longrightarrow U $ is

\begin{equation}\label{eq:u}\begin{split}
 u_{A,s_0}(t, s):&=F_A\left(\gamma_{ 0; [s_0,s]}\right)\cdot F_A\left(\gamma_{  [0,t] ; s }\right) \cdot F_A\left(\gamma_{ t;[s_0,s]}\right) ^{-1}\cdot F_A\left(\gamma_{ [0,t] ; s_0    }\right)^{-1}\\& =F_A\left(\gamma_{ 0; [s_0,s]}\#\gamma_{ [0,t] ; s  }\right)\cdot
F_A\left(\gamma_{[0,t] ; s_0    }\#\gamma_{ t ; [s_0,s]}\right)^{-1},
 \end{split}\end{equation} for $s\geq s_0$.\begin{equation}\label{eq:gamma}
       \xy 0;/r.18pc/:
(0,0)*+{ }="1";
(30,0)*+{ }="2";
(0,30)*+{ }="3";
(30,30)*+{ }="4";
(0,50)*+{ }="5";
(30,50)*+{ }="6";
(60,50)*+{ }="7";(75,50)*+{ }="8";(0,-5 )*+{ }="17";(0, -20 )*+{ }="18";
{\ar@{->}"1";"2"_{\gamma_{ [0,t] ;s } }};
{\ar@{->}  "3";"4"|-{\gamma_{  [0,t];s_0  }   } };
{\ar@{->}  "3";"1"_{\gamma_{ 0; [s_0,s]} } };
{\ar@{->}  "4";"2"^{\gamma_{t ;  [s_0,s]}  }};
 {\ar@{<--}"6";"5"_{\gamma_{ [0,t];   0  }   }};
{\ar@{-->}"6";"4"^{ \gamma_{ t; [0,s_0]} }};
{\ar@{-->}"5";"3"_{ \gamma_{ 0; [0,s_0 ]} }};
{\ar@{->}  "7";"8"^{t}};
{\ar@{->}  "17";"18"_{s}};
 \endxy \end{equation} When $  s_0=0$, denote
\begin{equation*}
     u_{A }(t ,s ):= u_{A, 0}( t ,s )=F_A\left( \gamma^-_{t;s}\right)F_A\left( \gamma^+_{t;s}\right)^{-1}.
\end{equation*}
From the above diagram (\ref{eq:gamma}), $u_{A }(t, s )$ is the composition of $1$-holonomies of two loops. Namely,
\begin{equation}\label{eq:uu}
  u_{A }(t, s )=Ad_{F_A\left(\gamma_{0;[0,s_0 ] }\right)} u_{A,s_0}(t ,s ) \cdot    u_{A }(t,  s_0 ).
\end{equation}

The following proposition tells us how the $1$-holonomy $u_{A,s_0}(t, s)$ changes as $s $ increase for fixed $t$ (cf. lemma B. 1  of \cite{SW09}).
\begin{prop} $ u_{A,s_0}$  satisfies the following ODE of second order:
 \begin{equation}\label{eq:2-diff-u}
 \begin{split}
 \left. \frac {\partial^2 u_{A,s_0}} {\partial t\partial s} \right|_{(t,s_0 )}
=&- Ad_{F_A (\gamma_{  [0,t];s_0   })} \gamma^*\Omega^A_{(   t,s_0  )}\left(\frac \partial {\partial t},\frac \partial {\partial s}\right).
 \end{split}
\end{equation}
   \end{prop}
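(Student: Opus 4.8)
The plan is to prove the second-order ODE (\ref{eq:2-diff-u}) by differentiating the explicit product formula (\ref{eq:u}) for $u_{A,s_0}(t,s)$ with respect to $s$ first, and then with respect to $t$, evaluating at $s=s_0$. First I would note that at $s=s_0$ the loop degenerates: $\gamma_{0;[s_0,s_0]}$ and $\gamma_{t;[s_0,s_0]}$ are constant paths, so $u_{A,s_0}(t,s_0)=1_G$ for all $t$. Consequently $\partial u_{A,s_0}/\partial t$ vanishes identically at $s=s_0$, which is why the first nontrivial term is the mixed second derivative.

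\medskip
Next I would compute $\partial u_{A,s_0}/\partial s$ at a general point $(t,s)$. Using (\ref{eq:composition-paths}) to split $u_{A,s_0}$ as a product of four factors $F_A(\gamma_{0;[s_0,s]})$, $F_A(\gamma_{[0,t];s})$, $F_A(\gamma_{t;[s_0,s]})^{-1}$, $F_A(\gamma_{[0,t];s_0})^{-1}$, I apply the defining ODE (\ref{eq:F-A}) to the $s$-dependence of each. The first and third factors are holonomies along vertical segments, so their $s$-derivatives are governed directly by $\gamma^*A$ evaluated on $\partial/\partial s$; the second factor is a holonomy along a horizontal path whose $s$-dependence requires the standard variational formula for holonomy under a homotopy of paths, which produces an integral along $\gamma_{[0,t];s}$ of the conjugated curvature $\gamma^*\Omega^A(\partial/\partial t,\partial/\partial s)$ — this is precisely the content cited from Lemma B.1 of \cite{SW09}. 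The fourth factor is independent of $s$. Assembling these, $\partial u_{A,s_0}/\partial s$ at $(t,s_0)$ collapses (since the vertical segments have zero length there) to exactly the holonomy-conjugated integral $\int_0^t Ad_{F_A(\gamma_{[0,\tau];s_0})}\gamma^*\Omega^A(\partial/\partial t,\partial/\partial s)\,d\tau$, up to conjugation by $F_A(\gamma_{[0,t];s_0})$ coming from how the base point of the variation formula is normalized.

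\medskip
Then I would differentiate this expression in $t$ and evaluate at $s=s_0$. Differentiating an integral of the form $\int_0^t (\cdots)\,d\tau$ in its upper limit simply returns the integrand at $\tau=t$, giving the right-hand side of (\ref{eq:2-diff-u}); one must check that the conjugating factor $Ad_{F_A(\gamma_{[0,t];s_0})}$ and the $Ad_{F_A(\gamma_{[0,\tau];s_0})}$ inside the integral combine correctly so that only the single $Ad_{F_A(\gamma_{[0,t];s_0})}$ survives. Care is needed with the order of the two differentiations: because $u_{A,s_0}(t,s_0)\equiv 1_G$, the terms in $\partial_t\partial_s u$ that would involve products with $\partial_t u$ or $\partial_s u$ drop out at $s=s_0$, so the mixed partial is unambiguous and equals the naive expression.

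\medskip
The main obstacle I expect is establishing the variational formula for the $s$-dependence of the horizontal-path holonomy $F_A(\gamma_{[0,t];s})$ cleanly in our sign and ordering conventions — i.e. deriving carefully that $\partial_s F_A(\gamma_{[0,t];s}) = F_A(\gamma_{[0,t];s})\cdot\big[\int_0^t Ad_{F_A(\gamma_{[\tau,t];s})^{-1}}\big(\text{boundary}\ A\text{-terms} - \gamma^*\Omega^A(\partial/\partial\tau,\partial/\partial s)\big)\,d\tau\big]$ or the analogous identity, and then tracking the base-point conjugations through the four-factor product. Everything else is bookkeeping: the degeneracy at $s=s_0$ kills the boundary contributions and all but one $Ad$-factor, leaving exactly (\ref{eq:2-diff-u}).
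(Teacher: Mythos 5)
Your plan has a genuine gap, and it sits exactly where you yourself place ``the main obstacle'': the variation formula for the horizontal holonomy, $\partial_s F_A(\gamma_{[0,t];s})$. You invoke it as a ``standard variational formula'' (Lemma B.1 of \cite{SW09}) but never derive it, and within this paper it is not available as an input: the proposition you are asked to prove \emph{is} the differential form of that formula. Indeed, immediately after the proposition the paper integrates (\ref{eq:2-diff-u}) to get $\partial_s u_{A,s_0}|_{(t,s_0)}=-\int_0^t Ad_{F_A(\gamma_{[0,\tau];s_0})}\gamma^*\Omega^A_{(\tau,s_0)}\left(\frac{\partial}{\partial\tau},\frac{\partial}{\partial s}\right)d\tau$ and thence the ODE (\ref{eq:ODE-u})--(\ref{eq:ODE-u'}); deducing the proposition from the cited variation formula therefore rests on an unproved statement of the same depth as the one to be proved, and is circular relative to the paper's logic. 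The bookkeeping you leave open is not peripheral either: if the curvature integral for $\partial_s F_A(\gamma_{[0,t];s})$ carries an outer $t$-dependent conjugation (as it does in some normalizations), then differentiating in the upper limit does \emph{not} simply return the integrand but produces additional commutator terms; and the sign you defer is substantive, since with the conventions (\ref{eq:wedge}) and (\ref{eq:F-A}) the computation produces $-Ad_{F_A(\gamma_{[0,t];s_0})}\gamma^*\Omega^A\left(\frac{\partial}{\partial t},\frac{\partial}{\partial s}\right)$, which is the sign actually used downstream in (\ref{eq:ODE-u}).

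The paper's proof shows that no such external lemma is needed, and this is the content your proposal is missing. One differentiates the four-factor product (\ref{eq:u}) in $s$ using only the defining ODE (\ref{eq:F-A}) for the two vertical holonomies, leaving $\partial_s F_A(\gamma_{[0,t];s})$ as an unevaluated term; one then differentiates in $t$, rewriting $\partial_t\partial_s F_A(\gamma_{[0,t];s})=\partial_s\bigl(F_A(\gamma_{[0,t];s})\,\gamma^*A_{(t,s)}\left(\frac{\partial}{\partial t}\right)\bigr)$ by equality of mixed partials together with the ODE in $t$, and evaluates at $s=s_0$ using $F_A(\gamma_{t;[s_0,s]})|_{s=s_0}=1_G$ and $\partial_t F_A(\gamma_{t;[s_0,s]})|_{s=s_0}=0$. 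In the resulting expression every term containing the unknown $\partial_s F_A(\gamma_{[0,t];s_0})$, and every term containing $\gamma^*A_{(0,s_0)}\left(\frac{\partial}{\partial s}\right)$, cancels in pairs, leaving $F_A(\gamma_{[0,t];s_0})\bigl[\partial_s\gamma^*A\left(\frac{\partial}{\partial t}\right)-\partial_t\gamma^*A\left(\frac{\partial}{\partial s}\right)-\bigl[\gamma^*A\left(\frac{\partial}{\partial t}\right),\gamma^*A\left(\frac{\partial}{\partial s}\right)\bigr]\bigr]F_A(\gamma_{[0,t];s_0})^{-1}$, i.e.\ exactly the conjugated curvature. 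If you wish to keep your route you must first prove the variation formula in the paper's ordering and sign conventions, which amounts to carrying out precisely this cancellation argument; the honest fix is therefore to do the direct computation rather than cite it.
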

\begin{proof} Differentiate (\ref{eq:u}) with respect to $s$ to get
\begin{equation*}
 \begin{split}
  \frac \partial {\partial s}  u_{A,s_0}( t ,s )=&F_A\left(\gamma_{ 0; [s_0,s]}\right)\left[\gamma^*A_{(0, s )}\left(\frac \partial {\partial s}\right)F_A\left(\gamma_{  [0,t] ; s  }\right) + \frac \partial {\partial s}  F_A\left(\gamma_{  [0,t] ; s   }\right)\right. \\&
\left.-F_A\left(\gamma_{  [0,t] ; s  }\right)\cdot \gamma^*A_{(t, s)}\left(\frac \partial {\partial s}\right) \right]F_A\left(\gamma_{   t;[s_0,s] }\right)^{-1} F_A\left(\gamma_{  [0,t] ; s _0    }\right)^{-1},
 \end{split}
\end{equation*}
by using the ODE (\ref{eq:F-A}). Note that by definition, we have
\begin{equation*}\left.
  F_A\left(\gamma_{t; [s_0,s]}\right)\right|_{s= s_0}=1_G,\qquad
 \left.   \frac \partial {\partial t}
  F_A\left(\gamma_{ t;[s_0,s]}\right)\right|_{s= s_0}=0.
\end{equation*}
Then differentiate the above identity with respect to $t$  and take $s=s_0$ to get
\begin{equation*}
 \begin{split}
 \frac {\partial^2 u_{A,s_0}} {\partial t\partial s}  =  & \left[\gamma^*A_{(0,s_0)}\left(\frac \partial {\partial s}\right)F_A\left(\gamma_{  [0,t];s_0   }\right)\gamma^*A_{(   t,s_0 )}\left(\frac \partial {\partial t}\right)
+\frac \partial {\partial s}  F_A\left(\gamma_{  [0,t];s_0     }\right)\gamma^* A_{(   t,s_0 )}\left(\frac \partial {\partial t}\right)\right.
 \\&\left.+  F_A\left(\gamma_{  [0,t];s_0   }\right)\frac \partial {\partial s} \gamma^* A_{(   t,s_0 )}\left(\frac \partial {\partial t}\right)   \right]F_A\left(\gamma_{  [0,t];s_0   }\right)^{-1}
 \\&-F_A\left(\gamma_{  [0,t];s_0   }\right)\left[\gamma^*A_{(   t,s_0 )}\left(\frac \partial {\partial t}\right)\gamma^*A_{(   t,s_0 )}\left(\frac \partial {\partial s}\right) +\frac \partial {\partial t}\gamma^*A_{(   t,s_0 )}\left(\frac \partial {\partial s}\right)   \right]F_A\left(\gamma_{  [0,t];s_0   }\right)^{-1}\\&
 -\left[  \gamma^*A_{(0,s_0)}\left(\frac \partial {\partial s}\right)
 F_A\left(\gamma_{  [0,t];s_0   }\right)+\frac \partial {\partial s}  F_A\left(\gamma_{  [0,t];s_0     }\right)   - F_A\left(\gamma_{  [0,t];s_0   }\right) \gamma^*A_{(   t,s_0 )}\left(\frac \partial {\partial s}\right)
\right]\\&\quad\cdot\gamma^*A_{(   t,s_0 )}\left(\frac \partial {\partial t}\right)F_A\left(\gamma_{  [0,t];s_0   }\right)^{-1}
\\=& F_A\left(\gamma_{  [0,t];s_0   }\right)\left[  \frac \partial {\partial s} \gamma^* A\left(\frac \partial {\partial t}\right)-\gamma^*A\left(\frac \partial {\partial t}\right)\gamma^*A\left(\frac \partial {\partial s}\right) -\frac \partial {\partial t}\gamma^*A \left(\frac \partial {\partial s}\right)\right.\\&\qquad\qquad\qquad\quad+\left.\gamma^* A\left(\frac \partial {\partial s}\right)
\gamma^* A(\frac \partial {\partial t}) \right]_{(   t,s_0 )} F_A\left(\gamma_{  [0,t];s_0   }\right)^{-1}
 \\
=& -Ad_{F_A (\gamma_{  [0,t];s_0   })}\gamma^* \Omega^A_{(   t,s_0  )}\left(\frac \partial {\partial t},\frac \partial {\partial s}\right).
 \end{split}
\end{equation*}
The result is proved.
\end{proof}

The proposition implies that
\begin{equation*}
  \left. \frac \partial {\partial s}  u_{A,s_0}\right|_{(   t,s_0  )}= -\int_0^tAd_{F_A (\gamma_{ [0,\tau];  s_0 })} \gamma^* \Omega^A_{(\tau ,s_0 )}\left(\frac \partial {\partial \tau},\frac \partial {\partial s}\right) d\tau.
\end{equation*}
Differentiate  both sides of (\ref{eq:uu})  with respect to $s$, then take  $s_0=s$ and use the above formula to get
  \begin{equation}\label{eq:ODE-u}
   \frac \partial{\partial s} u_{A }(t,s)=-\mathscr A_t(s)u_{A }( t,s),
\end{equation}
with
\begin{equation}\label{eq:ODE-u'}\begin{split}
    \mathscr A_t (s):&  =  \int_0^t  Ad_{F_A (\gamma^-_{\tau;s})}  \gamma^* \Omega^A_{(  \tau ,s)}\left(\frac \partial {\partial \tau},\frac \partial {\partial s}\right)d\tau,
\end{split}\end{equation}
if we use the notation $\gamma^-_{\tau;s}$ in (\ref{eq:gamma-+}) and $Ad_{F_A (\gamma_{ 0; [0,  s] })}  Ad_{F_A (\gamma_{ [0,\tau];  s})} =Ad_{F_A (\gamma^-_{\tau;s})} $.
Now define a corresponding $\mathfrak h$-valued $1$-form
\begin{equation}\label{eq:B}\begin{split}
   \mathscr B_t(s):&
  =   \int_0^t  F_A (\gamma^-_{\tau;s})\rhd \gamma^*B_{(\tau,s  )}\left(\frac \partial {\partial \tau},\frac \partial {\partial s}\right)d\tau.
\end{split}\end{equation}
Then, it is easy to see that
\begin{equation}\label{eq:B-A}
   \alpha(\mathscr B_t(s))= \mathscr A_t(s),
\end{equation}
by applying $\alpha$ to (\ref{eq:B}) and  using (\ref{eq:2-connection}), (\ref{eq:alpha-h}).

\subsection{The transformation law of   local $1$-holonomies under a  $2$-gauge transformation}
Suppose that  $\rho: [a,b]\longrightarrow U $ be a Lipschitzian  curve. Let  $(A,B)$ and $(A',B')$ be two local $2$-connection over $U$ such that
$(g,\varphi)$ is a $2$-gauge transformation (\ref{eq:gauge-transformations}) from   $(A,B)$ to   $(A',B')$.
To construct the  $2$-arrow relating $1$-holonomies $F_A(\rho)$ and $F_{A'}(\rho)$, we  define an  $H$-valued function $h(\rho_{[a,b]})$ satisfying the following ODE
\begin{equation}\label{eq:h}
   \frac d{d t} h\left(\rho_{[a,t]}\right)= F_A\left(\rho_{[a,t]}\right)\rhd \rho^*\varphi_t\left(\frac \partial {\partial t}\right)\cdot h\left(\rho_{[a,t]}\right)
\end{equation} with initial value   $1_H$.
Then
\begin{equation*}
   \left(F_{A }(\rho_{[a,t]})g(\rho(t)), h(\rho_{[a,t]})\right)
\end{equation*}
   is a $2$-arrow in $\mathcal{G}$ by the following proposition. We call it    the {\it $2$-gauge transformation along the curve $\rho_{[a,t]}$} associated to the $2$-gauge transformation (\ref{eq:gauge-transformations}) (cf. the pseudonatural transformation in \cite{SW11}).

\begin{prop} \label{prop:target-matching} Suppose that $(g,\varphi)$ is $2$-gauge transformation (\ref{eq:gauge-transformations}) from   $(A,B)$ to   $(A',B')$.
Then  $h\left(\rho_{[a,t]}\right)$ satisfies the target-matching condition
\begin{equation}\label{eq:target-matching}
  \alpha\left(h\left(\rho_{[a,t]}\right)^{-1}\right)F_A\left(\rho_{[a,t]}\right) g(\rho(t))= g(\rho(a))F_{A'}(\rho_{[a,t]}),
\end{equation} and satisfies the following composition formula  \begin{equation}\label{eq:composition}
    h\left(\rho_{[a,t+\tau]}\right)=F_A\left(\rho_{[a,t]}\right)\rhd h\left(\rho_{[t,t+\tau]}\right)\cdot h\left(\rho_{[a,t]}\right),
\end{equation}which corresponds to the diagram
\begin{equation}\label{eq:h-composition}
       \xy
(-10,0)*+{\scriptscriptstyle \bullet }="1";
(30,0)*+{\scriptscriptstyle  \bullet}="2";
(-10,30)*+{\scriptscriptstyle  \bullet}="3";
(30,30)*+{\scriptscriptstyle  \bullet}="4";
(70,0)*+{\scriptscriptstyle  \bullet}="5";
(70,30)*+{\scriptscriptstyle  \bullet}="6";
{\ar@{->}"1";"2"_{\scriptscriptstyle F_{A'}\left(\rho_{[a,t]}\right)  } };
{\ar@{->}  "3";"4"^{\scriptscriptstyle F_{A }\left(\rho_{[a,t]}\right)  } };
{\ar@{->}  "3";"1"_{\scriptscriptstyle g(\rho(a)) } };
{\ar@{->}  "4";"2"^{\scriptscriptstyle g(\rho(t)) }};
{\ar@{=>}^{\scriptscriptstyle h\left(\rho_{[a,t]}\right) } (26,26)*{};( -8, 4)*{}} ;
{\ar@{-->}"2";"5"_{  }};
{\ar@{-->}"4";"6"^{\scriptscriptstyle F_{A }\left(\rho_{[t,t+\tau]}\right)   }};
{\ar@{-->}"6";"5"^{\scriptscriptstyle  g(\rho(t+\tau)) }};
{\ar@{==>}^{\scriptscriptstyle   h \left(\rho_{[t,t+\tau]} \right) } (66,26)*{};(34,4)*{}} ;
\endxy.\end{equation}
\end{prop}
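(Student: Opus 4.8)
The plan is to establish the two claims of Proposition \ref{prop:target-matching} separately, both by the uniqueness of solutions to first-order ODEs.

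For the target-matching condition \eqref{eq:target-matching}, I would first observe that at $t=a$ both sides reduce to $g(\rho(a))$, since $h(\rho_{[a,a]})=1_H$, $F_A(\rho_{[a,a]})=1_G$ and $F_{A'}(\rho_{[a,a]})=1_G$. It therefore suffices to show that both sides satisfy the same linear ODE in $t$. Writing $P(t):=\alpha(h(\rho_{[a,t]})^{-1})F_A(\rho_{[a,t]})g(\rho(t))$ and $Q(t):=g(\rho(a))F_{A'}(\rho_{[a,t]})$, I would differentiate $P$ using \eqref{eq:F-A}, \eqref{eq:h}, the product rule for $\alpha(h^{-1})$, and the identities $\alpha(x\rhd u)=[x,\alpha(u)]$ and $\alpha(g\rhd h)=g\alpha(h)g^{-1}$ (the group-level \eqref{eq:alpha-h} applied pointwise, together with its differential). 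After collecting terms, the derivative of $P$ should come out to $P(t)$ times $g(\rho(t))^{-1}$ applied to the combination $-\alpha(\rho^*\varphi) + \rho^*A + dg\cdot g^{-1}$ evaluated along $\rho$; by the first gauge-transformation equation in \eqref{eq:gauge-transformations} this combination equals $g\rhd A'$, so $\frac{d}{dt}P(t) = P(t)\,\rho^*A'_t(\partial/\partial t)$, which is exactly the ODE \eqref{eq:F-A} satisfied by $Q(t)=g(\rho(a))F_{A'}(\rho_{[a,t]})$. Hence $P\equiv Q$.

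For the composition formula \eqref{eq:composition}, I would fix $t$ and regard both sides as functions of $\tau$. At $\tau=0$ both equal $h(\rho_{[a,t]})$, since $h(\rho_{[t,t]})=1_H$ and $F_A(\rho_{[t,t]})=1_G$, and $x\rhd 1_H=1_H$. Differentiating the right-hand side $F_A(\rho_{[a,t]})\rhd h(\rho_{[t,t+\tau]})\cdot h(\rho_{[a,t]})$ in $\tau$: only the factor $h(\rho_{[t,t+\tau]})$ depends on $\tau$, and by \eqref{eq:h} its derivative is $F_A(\rho_{[t,t+\tau]})\rhd\rho^*\varphi(\partial/\partial\tau)\cdot h(\rho_{[t,t+\tau]})$. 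Applying $F_A(\rho_{[a,t]})\rhd(-)$ and using multiplicativity of the action \eqref{eq:action} together with the composition law \eqref{eq:composition-paths} for $F_A$, the derivative becomes $F_A(\rho_{[a,t+\tau]})\rhd\rho^*\varphi(\partial/\partial\tau)$ times the right-hand side; this is precisely the ODE \eqref{eq:h} that $h(\rho_{[a,t+\tau]})$ satisfies in the variable $t+\tau$. Uniqueness then gives the identity, and the diagram \eqref{eq:h-composition} is just its pictorial rendering: the outer square is the $2$-gauge transformation along $\rho_{[a,t+\tau]}$, obtained by whiskering and vertically composing the two inner squares.

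I expect the main obstacle to be the bookkeeping in the derivative of $P(t)$ in the first part: one must carefully handle the non-commutativity of matrix products, the minus sign coming from differentiating $h^{-1}$, and the interplay of $\alpha$ with the actions $\rhd$ of both $G$ on $\mathfrak h$ and $\mathfrak g$ on $\mathfrak h$, so that the gauge-transformation relation \eqref{eq:gauge-transformations} can be substituted cleanly. Everything else is a routine uniqueness-of-ODE argument. A minor point to keep in mind is that $\rho$ is only Lipschitzian, so all the ODEs are understood in the integral (Carathéodory) sense and the differentiations are valid for almost every $t$, which is enough to conclude equality of the (absolutely continuous) solutions.
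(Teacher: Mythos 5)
Your proposal is correct and follows essentially the same route as the paper: for \eqref{eq:target-matching} the paper sets $\beta(t)=g(\rho(a))^{-1}\alpha(h_t^{-1})F_A(\rho_{[a,t]})g(\rho(t))$ (your $P(t)$ up to the constant left factor $g(\rho(a))$), differentiates, substitutes the first equation of \eqref{eq:gauge-transformations} to get $\beta'=\beta\,\rho^*A'(\partial/\partial t)$, and concludes by uniqueness of the ODE; for \eqref{eq:composition} it likewise differentiates $\sigma(\tau)=F_A(\rho_{[a,t]})\rhd h(\rho_{[t,t+\tau]})\cdot h(\rho_{[a,t]})$ and compares with \eqref{eq:h} via \eqref{eq:composition-paths} and the multiplicativity of $\rhd$. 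Your remarks on the bookkeeping with $\alpha$, $\rhd$ and on the Carathéodory interpretation for Lipschitz paths are consistent with, and only slightly more explicit than, what the paper does.
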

\begin{proof}
Set
\begin{equation*}
\beta(t):=g_a^{-1}\alpha\left(h_t^{-1}\right)F_A(t) g_t ,
\end{equation*} where   $h_t=h\left(\rho_{[a,t]}\right)$, $F_A(t):=F_A\left(\rho_{[a,t]}\right)$ and $g_t= g(\rho(t))$.
 Differentiating it with respect to $t$, we get
\begin{equation*} \begin{split}
   \beta'(t) =&- g_a^{-1} \alpha(h_t^{-1})\alpha\left( \frac {d h_t }{d t} \right)\alpha(h_t^{-1})F_A(t) g_t + g_a^{-1}\alpha(h_t^{-1})F_A(t)\rho^* A_t\left(\frac \partial{\partial t}\right) g_t\\&+
g_a^{-1}\alpha(h_t^{-1})F_A(t) \frac {dg_t}{d t}\\=&\beta(t)\left[ - \alpha\left(g_t^{-1}\rhd \rho^*\varphi_t\left(\frac \partial {\partial t}\right)\right)  +g_t^{-1}\rho^* A_t\left(\frac \partial{\partial t}\right)  g_t+ g_t^{-1}d g_t\left(\frac \partial{\partial t}\right) \right]\\=&\beta(t)\rho^* A'\left(\frac \partial{\partial t}\right)
 \end{split}\end{equation*}
 by  the $2$-gauge transformation (\ref{eq:gauge-transformations}) at the point $\rho(t)$, and
 \begin{equation*}\begin{split}
     \alpha\left( \frac {d h_t }{d t} \right)\alpha\left(h_t^{-1}\right)F_A(t) g_t&= \alpha\left(F_A(t)\rhd\rho^*\varphi_t\left(\frac \partial {\partial t}\right) \cdot  h_t\right )\alpha(h_t^{-1})F_A(t) g_t\\&
     = F_A(t)  \alpha\left(\rho^*\varphi_t\left(\frac \partial {\partial t}\right)\right)  g_t
     = F_A(t)g_t  \alpha\left(g_t^{-1}\rhd\rho^*\varphi_t\left(\frac \partial {\partial t}\right)\right)  ,
 \end{split} \end{equation*}by using the ODE (\ref{eq:h}) satisfied by  $h_t$ and  (\ref{eq:alpha-h}).
 And $\beta(a)=1_G$. So  $ \beta(t)$ and $F_{A'}(\rho_{[a,t]})$ satisfy the same ODE with the same initial condition. They must be identical. (\ref{eq:target-matching}) is proved.

To show (\ref{eq:composition}), set
\begin{equation*}
   \sigma(\tau):=F_A\left(\rho_{[a,t]}\right)\rhd h\left(\rho_{[t,t+\tau]}\right)\cdot h\left(\rho_{[a,t]}\right).
\end{equation*}
Then $\sigma(0)=h\left(\rho_{[a,t]}\right)$ and
\begin{equation*}
  \begin{split}
 \frac d{d \tau}  \sigma(\tau) =&F_A\left(\rho_{[a,t]}\right)\rhd\left [F_A(\rho_{[t,t+\tau]})\rhd \rho^*\varphi_{ t+\tau}\left(\frac \partial{\partial \tau}\right)\cdot h\left(\rho_{[t,t+\tau]}\right)\right]\cdot h\left(\rho_{[a,t]}\right)\\&
 =F_A(\rho_{[a,t+\tau]})\rhd  \rho^*\varphi_{ t+\tau}\left(\frac \partial{\partial \tau}\right)\sigma(\tau),
\end{split}\end{equation*}
by using (\ref{eq:composition-paths}) and (\ref{eq:h}). So $\sigma(\tau)$ and $h\left(\rho_{[a,t+\tau]}\right)$ satisfy the same ODE with the same initial condition. They must be identical. (\ref{eq:composition}) is proved.
\end{proof}

\begin{rem}
(1)
Differentiating (\ref{eq:composition}) with respect to $\tau$ at $\tau=0$, we get (\ref{eq:h}). Here
 \begin{equation*}
    \left.\frac d{d \tau}\right|_{\tau=0} h\left(\rho_{[t,t+\tau]}\right)= \rho^*\varphi_t\left(\frac \partial {\partial t}\right)  .
 \end{equation*}
On the other hand, differentiating (\ref{eq:target-matching}) with respect to $t$ at $t=a$, we  get the first formula of the $2$-gauge transformation (\ref{eq:gauge-transformations}).

(2) By the natural order of compositions, the Lie algebra element  in   ODE (\ref{eq:F-A}) for the local $1$-holonomy
and that in ODE (\ref{eq:h-A-B}) for the local $2$-holonomy are on the right of   products, but the Lie algebra element in   ODE (\ref{eq:h}) for $h$ is on the left of a product. This is because that the horizontal composition (\ref{eq:interchange}) (i.e. the wreath product) changes the order of $H$-elements.
\end{rem}

\section{The local $2$-holonomy }

\subsection{The local $2$-holonomy: the surface-ordered  integral}Given a $2$-connection $(A,B)$ over an open set $U$, to construct the {\it local $2$-holonomy} along a Lipschitzian mapping $\gamma:[0,1]^2\longrightarrow U$, we define  an $H$-valued function $ H_{A,B}(t,s)$ satisfying the ODE
\begin{equation}\label{eq:h-A-B}
\frac d{ds}  H_{A,B}(t,s)=  H_{A,B}(t,s)\mathscr B_t(s)
\end{equation}for fixed $t$,
with the initial condition $ H_{A,B}(t,0)\equiv 1_H$, where $\mathscr B_t(s)$ is  the $\mathfrak h$-valued function given by (\ref{eq:B}).
Denote
 \begin{equation*}
    {\rm Hol}\left(\gamma|_{[0,t]\times[0,s]}\right):=\left(F_A \left(\gamma^+_{  t; s  }\right),H_{A,B}(t,s)\right)\end{equation*}
  which is called the {\it local $2$-holonomy} along the mapping $\gamma|_{[0,t]\times[0,s]}$.

\begin{lem} \label{lem:H_AB-t} (1) $\left(F_A \left(\gamma^+_{  t; s  }\right),H_{A,B}(t,s)\right)$ is a $2$-arrow with target $F_A \left(\gamma^-_{  t; s  }\right )$ in $\mathcal{G}$.  Namely the $H$-element $ H_{A,B}(t,s)$ satisfies the target-matching condition
\begin{equation}\label{eq:h-A-B-matching}
  \alpha(H_{A,B}(t,s)^{-1})F_A \left(\gamma^+_{  t; s  }\right) = F_A \left(\gamma^-_{  t; s  }\right ).
\end{equation}

(2)   $ H_{A,B}(t,s) $ satisfies the following composition formulae:
\begin{equation}\label{eq:H_AB-t}
   H_{A,B}(t+t',s)=F_A \left(\gamma_{[0,  t]; 0 }\right)\rhd\widehat{H}_{A,B}(t',s)\cdot  H_{A,B}(t,s)
\end{equation}which corresponds to the diagram
\begin{equation*}
       \xy 0;/r.17pc/:
(-10,0)*+{\bullet }="1";
(40,0)*+{ \bullet}="2";
(-10,40)*+{ \bullet}="3";
(40,40)*+{ \bullet}="4";
(90,0)*+{ \bullet}="5";
(90,40)*+{ \bullet}="6";
{\ar@{->}"1";"2"_{\scriptscriptstyle F_A \left( \gamma_{  s ;[0,t]  } \right)  } };
{\ar@{->}  "3";"4"^{\scriptscriptstyle  F_A \left(\gamma_{ [0,t]; 0 } \right) } };
{\ar@{->}  "3";"1"_{\scriptscriptstyle  F_A \left(\gamma_{ 0;[0,s] } \right ) } };
{\ar@{->}  "4";"2"|-{\scriptscriptstyle  F_A \left( \gamma_{t;[0,s] }\right) }};
{\ar@{=>}^{  H_{A,B}(t,s) } (36,36)*{};(-8, 4)*{}} ;
{\ar@{-->}"2";"5"_{ F_A ( {\gamma}_{[t,t+t'];s}) }};
{\ar@{-->}"4";"6"^{ \scriptscriptstyle F_A \left( \gamma_{ [t,t+t']; 0 } \right) }};
{\ar@{-->}"6";"5"_{  }};
{\ar@{==>}^{    \widehat{H}_{A,B}(t',s) } (86,36)*{};(42,2)*{}} ;
\endxy,\end{equation*} where $\widehat{H}_{A,B}$ is the $H$-element of the local $2$-holonomy   associated to the mapping $\widehat{\gamma}(\cdot,\cdot)= \gamma(t+\cdot,\cdot) $ for fixed $t$; and
\begin{equation}\label{eq:H_AB-s}
   H_{A,B}(t ,s+s')= H_{A,B}(t,s)\cdot F_A \left(\gamma_{ 0;[0,s] } \right )\rhd \widetilde{  H}_{A,B}(t ,s')
\end{equation}which corresponds to the diagram
 \begin{equation*}
    \xy 0;/r.17pc/:
(0,0)*+{\bullet }="1";
(50,0)*+{\bullet }="2";
(0,30)*+{\bullet }="3";
(50,30)*+{ \bullet}="4";
(0,60)*+{\bullet }="5";
(50,60)*+{\bullet }="6";
{\ar@{-->}"1";"2"_{ \scriptscriptstyle F_A\left(\gamma_{ [0,t]; s +s'}\right) }};
{\ar@{-->}  "3";"4"^{}};
{\ar@{-->}  "3";"1"_{\scriptscriptstyle F_A\left(\gamma_{ 0; [s ,s+s']}\right) } };
{\ar@{-->}  "4";"2"^{\scriptscriptstyle F_A\left(\gamma_{ t; [s ,s+s']} \right) }};
 {\ar@{<-}"6";"5"_{\scriptscriptstyle F_A \left( \gamma_{ [0,t];0  } \right)  }   };
{\ar@{->}"6";"4"{}};
{\ar@{->}"5";"3"_{ \scriptscriptstyle F_A \left( \gamma_{ 0; [0,s]  } \right)  }  };
{\ar@{=>}^{\hskip 2mm {H}_{A,B}(t ,s) } (46,58)*{};( 2,32)*{}} ;{\ar@{==>}^{ \hskip 2mm  \widetilde{H}_{A,B}(t,s') } (46,26)*{};(4, 4)*{}} ;
 \endxy
\end{equation*}where $ \widetilde{ H}_{A,B}$ is the $H$-element of the local $2$-holonomy   associated to the mapping $ \widetilde{{\gamma}}(\cdot ,\cdot)= \gamma(\cdot ,s+\cdot) $ for fixed $s$.
\end{lem}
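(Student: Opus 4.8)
The plan is to prove all three parts by uniqueness of solutions of the linear ODE (\ref{eq:h-A-B}) in the relevant variable, the remaining variables held fixed. For (1), I would introduce the $G$-valued function $k(s):=\alpha(H_{A,B}(t,s)^{-1})$ and show $k(s)=u_A(t,s)=F_A(\gamma^-_{t;s})F_A(\gamma^+_{t;s})^{-1}$; the target-matching identity (\ref{eq:h-A-B-matching}) is just a rearrangement of this, and that $(F_A(\gamma^+_{t;s}),H_{A,B}(t,s))$ is then a $2$-arrow of $\mathcal G$ with target $F_A(\gamma^-_{t;s})$ is immediate from Proposition \ref{prop:2-category-crossed-module}. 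At $s=0$ both $k$ and $u_A(t,\cdot)$ equal $1_G$, since $H_{A,B}(t,0)=1_H$ and, by (\ref{eq:gamma-+}), $\gamma^-_{t;0}$ and $\gamma^+_{t;0}$ both reduce to $\gamma_{[0,t];0}$ up to constant subpaths. Differentiating (\ref{eq:h-A-B}) gives $\frac{d}{ds}H_{A,B}(t,s)^{-1}=-\mathscr B_t(s)H_{A,B}(t,s)^{-1}$; applying the Lie group homomorphism $\alpha$ and using (\ref{eq:B-A}) turns this into $\frac{d}{ds}\alpha(H_{A,B}(t,s)^{-1})=-\mathscr A_t(s)\alpha(H_{A,B}(t,s)^{-1})$, which is exactly the ODE (\ref{eq:ODE-u}) for $u_A(t,\cdot)$. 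By uniqueness $k=u_A(t,\cdot)$, which proves (1).

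For (2) I would fix $t,t'$ and set $\sigma(s):=F_A(\gamma_{[0,t];0})\rhd\widehat{H}_{A,B}(t',s)\cdot H_{A,B}(t,s)$, so that $\sigma(0)=1_H=H_{A,B}(t+t',0)$ by (\ref{eq:g-1-H}). Differentiating in $s$ and using (\ref{eq:h-A-B}) for both $H_{A,B}(t,\cdot)$ and $\widehat{H}_{A,B}(t',\cdot)$ together with the derivation property $g\rhd(hh')=(g\rhd h)(g\rhd h')$ from (\ref{eq:action}), one finds $\frac{d}{ds}\sigma(s)=\sigma(s)\,[\,Ad_{H_{A,B}(t,s)^{-1}}(F_A(\gamma_{[0,t];0})\rhd\widehat{\mathscr B}_{t'}(s))+\mathscr B_t(s)\,]$, where $\widehat{\mathscr B}_{t'}$ is the form (\ref{eq:B}) built from the shifted surface $\widehat{\gamma}(\cdot,\cdot)=\gamma(t+\cdot,\cdot)$. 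The crux is to identify this bracket with $\mathscr B_{t+t'}(s)$; for that I would (a) use $Ad_{h^{-1}}u=\alpha(h)^{-1}\rhd u$, obtained by differentiating (\ref{eq:alpha}); (b) invoke part (1) and the path-composition rule (\ref{eq:composition-paths}) to compute $\alpha(H_{A,B}(t,s))^{-1}F_A(\gamma_{[0,t];0})=F_A(\gamma_{0;[0,s]})F_A(\gamma_{[0,t];s})F_A(\gamma_{t;[0,s]})^{-1}$; and (c) substitute $\tau=t+\tau'$ in the definition (\ref{eq:B}) of $\widehat{\mathscr B}_{t'}(s)$, whose holonomy prefactors are $F_A(\gamma_{t;[0,s]})F_A(\gamma_{[t,\tau];s})$ because the left edge of $\widehat{\gamma}$ is $\gamma_{t;[0,s]}$. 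Then the factor $F_A(\gamma_{t;[0,s]})^{-1}$ from (b) cancels the $F_A(\gamma_{t;[0,s]})$ from (c), while $F_A(\gamma_{[0,t];s})F_A(\gamma_{[t,\tau];s})=F_A(\gamma_{[0,\tau];s})$ and $F_A(\gamma_{0;[0,s]})F_A(\gamma_{[0,\tau];s})=F_A(\gamma^-_{\tau;s})$ by (\ref{eq:composition-paths}), so the bracket becomes $\mathscr B_t(s)+\int_t^{t+t'}F_A(\gamma^-_{\tau;s})\rhd\gamma^* B_{(\tau,s)}(\partial_\tau,\partial_s)\,d\tau=\mathscr B_{t+t'}(s)$. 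Hence $\sigma$ and $H_{A,B}(t+t',\cdot)$ solve the same linear ODE with the same initial value, giving (\ref{eq:H_AB-t}).

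Part (3) is the same argument run in the variable $s'$, with $t,s$ fixed: put $\tau(s'):=H_{A,B}(t,s)\cdot F_A(\gamma_{0;[0,s]})\rhd\widetilde{H}_{A,B}(t,s')$, so $\tau(0)=H_{A,B}(t,s)$ again by (\ref{eq:g-1-H}). Differentiating and using (\ref{eq:h-A-B}) and (\ref{eq:action}) reduces the matter to the single identity $F_A(\gamma_{0;[0,s]})\rhd\widetilde{\mathscr B}_t(s')=\mathscr B_t(s+s')$, which follows directly from the definition (\ref{eq:B}), the rule (\ref{eq:composition-paths}), and the fact that the left edge of $\widetilde{\gamma}(\cdot,\cdot)=\gamma(\cdot,s+\cdot)$ is $\gamma_{0;[s,s+\cdot]}$, so that $F_A(\gamma_{0;[0,s]})F_A(\gamma_{0;[s,s+s']})=F_A(\gamma_{0;[0,s+s']})$. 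Thus $\frac{d}{ds'}\tau(s')=\tau(s')\mathscr B_t(s+s')$, which is the ODE for $H_{A,B}(t,s+\cdot)$, and uniqueness yields (\ref{eq:H_AB-s}).

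The one place where real work is needed is the bracket identification in (2): the curvature integral $\widehat{\mathscr B}_{t'}(s)$ of the shifted surface $\widehat{\gamma}$ has holonomy prefactors based at $\gamma(t,0)$ along its left edge $\gamma_{t;[0,s]}$, whereas the tail $\int_t^{t+t'}$ of $\mathscr B_{t+t'}(s)$ uses prefactors based at $\gamma(0,0)$ along $\gamma_{0;[0,s]}\#\gamma_{[0,t];s}$; the conjugation by $H_{A,B}(t,s)$ that the interchange law of $\mathcal G$ forces into the computation is precisely the base-point change relating the two, and this is exactly where the target-matching identity of part (1) is indispensable. Everything else reduces to uniqueness for linear ODEs and to the path-composition rule (\ref{eq:composition-paths}) together with the elementary crossed-module identities (\ref{eq:alpha}), (\ref{eq:action}) and (\ref{eq:g-1-H}).
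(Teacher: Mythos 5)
Your proposal is correct and follows essentially the same route as the paper: part (1) by comparing $\alpha(H_{A,B}(t,\cdot)^{-1})$ with $u_A(t,\cdot)$ via uniqueness for the linear ODE (\ref{eq:ODE-u}), and parts (2)–(3) by differentiating the proposed right-hand sides, using the target-matching condition of (1) together with (\ref{eq:composition-paths}) to identify the resulting coefficient with $\mathscr B_{t+t'}(s)$ (resp. $\mathscr B_t(s+s')$), exactly as in the paper's computation of $I_1+I_2$. Your write-up of the $s$-direction formula (\ref{eq:H_AB-s}) just makes explicit what the paper dismisses as "similar."
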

\begin{proof} (1) It is sufficient to show that $
 \alpha( H_{A,B}(t,s)^{-1} )=u_{A }(t,s  ).
$
By (\ref{eq:h-A-B}),  we have\begin{equation*}
\frac d{ds}  H_{A,B}(t,s)^{-1}= -\mathscr B_t(s)H_{A,B}(t,s)^{-1}.
\end{equation*}
So $\alpha( H_{A,B}(s)^{-1})$ satisfies the ODE
\begin{equation*}
   \frac d{ds}\alpha( H_{A,B}(t,s)^{-1})= -\mathscr A_t(s)  \alpha( H_{A,B}(t,s)^{-1}),
\end{equation*}
with $ H_{A,B}(t,0)^{-1}=1_H$. Comparing it with  (\ref{eq:ODE-u}), we see that  $  \alpha( H_{A,B}(t,s)^{-1})$ and $u_{A }(t,s  )$ satisfy the same ODE with the same initial condition. So they must be identical.

(2) We denote by the right hand side of (\ref{eq:H_AB-t}) as $\beta(s)$. Then,
\begin{equation*}\begin{split}
   \beta'(s)=&F_A \left(\gamma_{[0,  t]; 0 }\right)\rhd\left[ \widehat{H}_{A,B}(t',s)\int_0^{t'}  F_A (\gamma_{  t; [0, s] }\# {\gamma}_{[t,t+\tau];s})\rhd \gamma^*B_{(t+\tau,s  )}\left(\frac \partial {\partial \tau},\frac \partial {\partial s}\right)d\tau\right]\cdot  H_{A,B}(t,s)\\&+\beta(s)\int_0^t  F_A (\gamma^-_{\tau;s})\rhd \gamma^*B_{(\tau,s  )}\left(\frac \partial {\partial \tau},\frac \partial {\partial s}\right)d\tau=:I_1+I_2,
\end{split}\end{equation*}
and
\begin{equation*}\begin{split}
  I_1&= \beta(s)Ad_{H_{A,B}(t,s)^{-1}}\int_0^{t'}  F_A\left (\gamma_{[0,  t]; 0 }\#\gamma_{  t; [0, s] }\# {\gamma}_{[t,t+\tau];s}\right)\rhd \gamma^*B_{(t+\tau,s  )}\left(\frac \partial {\partial \tau},\frac \partial {\partial s}\right)d\tau\\&= \beta(s)\int_0^{t'} \left  [\alpha\left(H_{A,B}(t,s)^{-1}\right)F_A \left (\gamma_{   t ;   s  }^+\right)\cdot F_A \left ({\gamma}_{[t,t+\tau];s}\right)\right]\rhd \gamma^*B_{(t+\tau,s  )}\left(\frac \partial {\partial \tau},\frac \partial {\partial s}\right)d\tau\\&
  = \beta(s)\int_t^{t+t'}  F_A \left(\gamma^-_{\kappa;s}\right) \rhd \gamma^*B_{( \kappa,s  )}\left(\frac \partial {\partial \kappa},\frac \partial {\partial s}\right)d\kappa,
\end{split}\end{equation*} by using  target-matching condition
  (\ref{eq:h-A-B-matching}) for $H_{A,B}$.
Thus the sum of $I_1$ and $I_2$ is exactly $ \beta(s)\mathscr B_{t+t'}(s)$. The result follows. The proof of (\ref{eq:H_AB-s}) is similar.
\end{proof}

\begin{rem}  Differentiating (\ref{eq:h-A-B-matching}) with respect to $t$ and $ s$ at $t=s=0$, we  get the vanishing (\ref{eq:2-connection}) of the fake $1$-curvature.
\end{rem}

\subsection{The transformation law of   local $2$-holonomies under a $2$-gauge transformation}
\begin{prop} \label{prop:transformation-2-holonomies} Under the   $2$-gauge transformation $(g,\varphi)$ from a $2$-connection $(A,B)$ to another one $(A',B')$ in (\ref{eq:gauge-transformations}),  the $H$-elements of the  local $2$-holonomies  satisfy the following   transformation law:
\begin{equation}\label{eq:gauge-2}
 g(\gamma_{0;0})\rhd H_{A',B'}(t,s) = h(\gamma^+_{t;s})^{-1}H_{A,B}(t,s) h(\gamma^-_{t;s}),
\end{equation}
i.e., the following cube
\begin{equation}\label{eq:cube}
     \xy
(0,0 )*+{\scriptscriptstyle \bullet}="1";
(40, 0)*+{\scriptscriptstyle \bullet}="2";
( 15,15)*+{\scriptscriptstyle \bullet}="3";
(55,15)*+{\scriptscriptstyle \bullet}="4";
(0,40 )*+{\scriptscriptstyle \bullet}="5";
(40, 40)*+{\scriptscriptstyle \bullet}="6";
(  15,55)*+{\scriptscriptstyle \bullet}="7";
(55,55)*+{\scriptscriptstyle \bullet}="8";(31,23)*+{\scriptscriptstyle h\left( \gamma_{[0,t];0} \right)^{-1}}="08";
{\ar@{->}_{\scriptscriptstyle F_{A'}\left(\gamma_{[0,t];s}\right)} "1";"2" };
{\ar@{-->} |-{ }"3";"1" };
{\ar@{->}_{\scriptscriptstyle g(\gamma_{0;s}) } "5";"1" };
{\ar@{->} "4";"2"^{\scriptscriptstyle F_{A' }\left(\gamma_{ t ;[0 ,s] }\right) } };
{\ar@{->}|-{ }  "6";"2" };
{\ar@{-->}_{g_0 } "7";"3" };
{\ar@{-->}|-{ } "3";"4" };
{\ar@{->}^{\scriptscriptstyle g(\gamma_{t;0}) } "8";"4" };
{\ar@{->}_{  F_c} "5";"6" };
{\ar@{->}_{\scriptscriptstyle F_{A }\left(\gamma_{ 0;[0,s]}\right) } "7";"5" };
{\ar@{->}^{F_a } "8";"6" };
{\ar@{->}^{ \scriptscriptstyle F_{A }\left(\gamma_{[0,t];0}\right)} "7";"8" };
{\ar@{=>}_{\scriptscriptstyle H_{A,B}(t,s) } (51,53);(4,42) };
{\ar@{==>}_{\scriptscriptstyle H_{A',B'}(t,s) } (51,13);(4, 2) };
{\ar@{==>}_{ } (17,17);(40, 47) };
{\ar@{==>}_{  h_b } (2,37);(13, 17) };
{\ar@{<= }^{  h_a^{-1}} (42,38);(53,18) };
 \endxy
\end{equation}is commutative, where $g_0:=g(\gamma_{ 0;0}), F_a:=F_{A }\left(\gamma_{ t ;[0 ,s] }\right)$, $h_a:=h\left(\gamma_{ t ;[0 ,s] }\right)$, $h_b:=h\left( \gamma_{ 0;[0,s]}\right )$, and $F_c:=F_{A }\left(\gamma_{[0,t];s}\right)$. The front face represents the $2$-arrow given by $h(\gamma_{[0,t];s})$.\end{prop}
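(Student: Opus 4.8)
The plan is to verify the identity (\ref{eq:gauge-2}) by differentiating both sides in $s$ and showing they satisfy the same first–order ODE with the same initial value at $s=0$; the commutativity of the cube (\ref{eq:cube}) is then just the translation of (\ref{eq:gauge-2}) into the $2$-categorical language of $\mathcal G$, using that the four side faces are the commutative squares (bigons) already produced in Proposition \ref{prop:target-matching} (the left and right faces given by $h_b=h(\gamma_{0;[0,s]})$ and $h_a=h(\gamma_{t;[0,s]})$, the top and bottom by $H_{A,B}(t,s)$ and $H_{A',B'}(t,s)$, and the front by $h(\gamma_{[0,t];s})$). So the whole content is the scalar identity (\ref{eq:gauge-2}).

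First I would fix $t$ and set $P(s):=g(\gamma_{0;0})\rhd H_{A',B'}(t,s)$ and $Q(s):=h(\gamma^+_{t;s})^{-1}H_{A,B}(t,s)\,h(\gamma^-_{t;s})$. At $s=0$ both $H_{A',B'}(t,0)$ and $H_{A,B}(t,0)$ are $1_H$; the paths $\gamma^\pm_{t;0}$ both equal $\gamma_{[0,t];0}$, so $h(\gamma^+_{t;0})=h(\gamma^-_{t;0})$ and $Q(0)=1_H=P(0)$. For the derivative of $P$, I use the ODE (\ref{eq:h-A-B}) for $H_{A',B'}$, which involves $\mathscr B'_t(s)=\int_0^t F_{A'}(\gamma'^-_{\tau;s})\rhd \gamma^*B'_{(\tau,s)}(\partial_\tau,\partial_s)\,d\tau$; here $\gamma'^-_{\tau;s}$ denotes the lower-left boundary path built from $A'$. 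For the derivative of $Q$, I differentiate the three factors: $\partial_s H_{A,B}(t,s)=H_{A,B}(t,s)\mathscr B_t(s)$ by (\ref{eq:h-A-B}), while $\partial_s h(\gamma^\pm_{t;s})$ comes from the composition formula (\ref{eq:composition}) of Proposition \ref{prop:target-matching} applied to the path decompositions $\gamma^+_{t;s}=\gamma_{[0,t];0}\#\gamma_{t;[0,s]}$ and $\gamma^-_{t;s}=\gamma_{0;[0,s]}\#\gamma_{[0,t];s}$: only the last leg of each varies with $s$, so $\partial_s h(\gamma^+_{t;s})$ picks up a term $F_A(\gamma^+_{t;s})\rhd(\rho^*\varphi)$–type contribution and $\partial_s h(\gamma^-_{t;s})$ a genuine interior term involving $\int_0^t\partial_t(\cdots)$ because the varying leg $\gamma_{[0,t];s}$ moves transversally.

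The key computation is to collect these terms and recognize the result as $P$ (resp.\ $Q$) times the \emph{same} $\mathfrak h$-valued $1$-form. The natural candidate is $g(\gamma_{0;0})\rhd\mathscr B'_t(s)$ conjugated appropriately, and the engine that makes it work is the second $2$-gauge-transformation equation $g\rhd B'=B-d\varphi-A\rhd\varphi+\varphi\wedge\varphi$ in (\ref{eq:gauge-transformations}), together with the target-matching conditions (\ref{eq:target-matching}) for $h$ and (\ref{eq:h-A-B-matching}) for $H_{A,B}$, which let me convert between the $A$- and $A'$-parallel transports that appear as coefficients in front of the various $\rhd$'s. The main obstacle I anticipate is exactly this bookkeeping: under the chosen natural order of compositions the $H$-elements sit on different sides of products (cf.\ the Remark after Proposition \ref{prop:target-matching}), so one must track $Ad$ and $\rhd$ actions carefully — in particular the identity $(Ad_g X)\rhd h^{-1}\cdot h$ of Lemma \ref{lem:Ad} type rearrangements — and repeatedly use (\ref{eq:alpha-h}), (\ref{eq:alpha}) and the Leibniz property (\ref{eq:action}) of $\rhd$ to move terms past each other. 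A clean way to organize it is to differentiate the \emph{conjugated} quantity $\widetilde Q(s):=h(\gamma^+_{t;s})\,P(s)\,h(\gamma^-_{t;s})^{-1}$ and show $\widetilde Q(s)=H_{A,B}(t,s)$ by matching with (\ref{eq:h-A-B}); this avoids inverting ODEs and isolates the use of the $B$-equation of (\ref{eq:gauge-transformations}) to a single line. Once (\ref{eq:gauge-2}) holds, the cube commutes: the two ways of composing the six faces both compute the $2$-arrow with $1$-source $F_{A'}(\gamma^+_{t;s})g(\gamma_{t;s})$ and $1$-target $g(\gamma_{0;s})F_A(\gamma^-_{t;s})$ determined by the $H$-element on either side of (\ref{eq:gauge-2}), and equality of the two composites is precisely (\ref{eq:gauge-2}) after applying $\alpha$ to check the target-matching (which follows from (\ref{eq:target-matching}) and (\ref{eq:h-A-B-matching})).
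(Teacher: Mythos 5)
Your overall strategy --- fix $t$ and show that both sides of (\ref{eq:gauge-2}) solve the same first-order ODE in $s$ with initial value $1_H$ --- is exactly the paper's, and you correctly locate where the $B$-equation of (\ref{eq:gauge-transformations}) and the target-matching conditions (\ref{eq:target-matching}), (\ref{eq:h-A-B-matching}) must enter. But there is a genuine gap at the step you dismiss as bookkeeping: you assert that $\partial_s h(\gamma^\pm_{t;s})$ ``comes from the composition formula (\ref{eq:composition})''. That formula only concatenates $h$ along the path parameter; for $\gamma^+_{t;s}=\gamma_{[0,t];0}\#\gamma_{t;[0,s]}$ this indeed suffices, but for $\gamma^-_{t;s}=\gamma_{0;[0,s]}\#\gamma_{[0,t];s}$ the leg $\gamma_{[0,t];s}$ is replaced wholesale as $s$ varies, and $\gamma_{[0,t];s+\delta}$ is not an extension of $\gamma_{[0,t];s}$ by a short leg, so (\ref{eq:composition}) gives no formula for $\partial_s h(\gamma_{[0,t];s})$. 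What is required is a variation formula of the same nature as Proposition 3.1 (cf.\ (\ref{eq:2-diff-u}), (\ref{eq:ODE-u})--(\ref{eq:ODE-u'})) but for the $H$-valued transport $h$, i.e.\ a curvature-type term integrated over the swept strip; you gesture at ``an interior term $\int_0^t\partial_t(\cdots)$'' but neither state nor derive it, and without it the ODE for your $Q(s)$ or $\widetilde Q(s)$ cannot even be written down, let alone matched against (\ref{eq:h-A-B}) for $H_{A',B'}$.

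The paper closes exactly this gap by a device absent from your sketch: it rewrites $F(s)=h(\gamma^+_{t;s})^{-1}H_{A,B}(t,s)h(\gamma^-_{t;s})$ as $H_{A,B}(t,s)\mathcal F_s$ as in (\ref{eq:F-reorder}) and identifies $\mathcal F_s$ (Lemma \ref{lem:geometric-interpretation}) with the $H$-component $h^\dag_t(s)$ of the loop holonomy $u_{\mathfrak A}(t,s)$ of the single $\mathfrak g\ltimes\mathfrak h$-valued connection $\mathfrak A=(A,\varphi)$ of (\ref{eq:frak-A}), using $F_{\mathfrak A}(\rho)=(F_A(\rho),h(\rho))$. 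The transversal derivative is then obtained by applying the already-proved variation formula (\ref{eq:ODE-u})--(\ref{eq:ODE-u'}) to $\mathfrak A$, whose curvature is computed from (\ref{eq:gauge-transformations}) to be $\Omega^{\mathfrak A}=(\alpha(B),\,B-g\rhd B')$ as in (\ref{eq:Z_Y_B}), with the adjoint action in $G\ltimes H$ controlled by Lemma \ref{lem:Ad}; this yields (\ref{eq:ODE-u-D})--(\ref{eq:D-g-h}) and hence the ODE (\ref{eq:h-dag}) that your plan needs. If you insist on avoiding the wreath-product packaging, you must derive by hand a second-variation formula for the pair $(F_A,h)$, which is precisely the computation your proposal defers; so the plan is salvageable, but the missing lemma is the heart of the proof, not bookkeeping. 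Your final paragraph translating (\ref{eq:gauge-2}) into commutativity of the cube via the side faces from Proposition \ref{prop:target-matching} is fine.
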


\begin{rem} (1)
 By the   composition formula in (\ref{eq:composition}), we have
\begin{equation}\label{eq:H-1-2}\begin{split}
  h(\gamma^+_{t;s}) &= F_{A  }\left(\gamma_{[0,t];0}\right) \rhd h\left(\gamma_{t;[0,s] }\right) \cdot  h\left(\gamma_{[0,t];0}\right) ,\\
 h(\gamma^-_{t;s}) &=F_{A  }\left(\gamma_{ 0;[0,s]}\right) \rhd h\left(\gamma_{ [0,t];s }\right) \cdot h\left(\gamma_{0;[0,s] }\right) ,\end{split}\end{equation}
 correspond  to the following diagrams:
\begin{equation*}
     \xy 0;/r.17pc/:
(40, 0)*+{\bullet}="2";
( 15,15)*+{\scriptscriptstyle\bullet}="3";
(55,15)*+{\scriptscriptstyle\bullet}="4";
(40, 40)*+{\scriptscriptstyle\bullet}="6";
(  15,55)*+{\scriptscriptstyle\bullet}="7";
(55,55)*+{\scriptscriptstyle\bullet}="8";
{\ar@{->} "4";"2"^{\scriptscriptstyle F_{A' }\left(\gamma_{ t ;[0 ,s] }\right) } };
{\ar@{->}|-{ }  "6";"2" };
{\ar@{-->}|-{ } "7";"3" };
{\ar@{-->}|-{ } "3";"4" };
{\ar@{->}^{\scriptscriptstyle g(\gamma_{t;0}) } "8";"4" };
{\ar@{->}|-{ } "8";"6" };
{\ar@{->}^{\scriptscriptstyle F_{A }\left(\gamma_{[0,t];0}\right)} "7";"8" };
{\ar@{=>}_{  h_a } (42,38);(53,18) };
{\ar@{=>}|-{\scriptscriptstyle h\left(\gamma_{  [0 ,t];0 }\right) } (36,50);(18,18) };
  \endxy \qquad   \qquad   \qquad
     \xy 0;/r.17pc/:
(0,0 )*+{\scriptscriptstyle\bullet}="1";
(40, 0)*+{\scriptscriptstyle\bullet}="2";
( 15,15)*+{\scriptscriptstyle\bullet}="3";
(0,40 )*+{\scriptscriptstyle\bullet}="5";
(40, 40)*+{\scriptscriptstyle\bullet}="6";
(  15,55)*+{\scriptscriptstyle\bullet}="7";(  27,44)*+{ \scriptscriptstyle F_{A }\left(\gamma_{[0,t];s}\right)}="8";
{\ar@{->}_{\scriptscriptstyle F_{A'}\left(\gamma_{[0,t];s}\right)} "1";"2" };
{\ar@{-->} |-{ }"3";"1" };
{\ar@{->}_{\scriptscriptstyle g(\gamma_{0;s})} "5";"1" };
{\ar@{->}^{\scriptscriptstyle g(\gamma_{t;s})}  "6";"2" };
{\ar@{-->}|-{ } "7";"3" };
{\ar@{->}^{ } "5";"6" };
{\ar@{->}_{\scriptscriptstyle F_{A }\left(\gamma_{ 0;[0,s]}\right) } "7";"5" };
{\ar@{=>}|-{\scriptscriptstyle h\left(\gamma_{[0,t];s}\right) } (38,38);(18,5) };
{\ar@{==>}_{  h_b } (3,38);(13,15) };
 \endxy,
\end{equation*}  respectively.
For example, $ h(\gamma^+_{t;s})$ is the $H$-element of the composition of the following two $2$-arrows:
 \begin{equation*} \xy 0;/r.17pc/:
(40, 0)*+{\scriptscriptstyle\bullet}="2";
(55,15)*+{\scriptscriptstyle\bullet}="4";
(40, 40)*+{\scriptscriptstyle\bullet}="6";
(  15,55)*+{\scriptscriptstyle\bullet}="7";
(55,55)*+{\scriptscriptstyle\bullet}="8";
{\ar@{->} "4";"2"^{\scriptscriptstyle F_{A' }\left(\scriptscriptstyle\gamma_{ t ;[0 ,s] }\right) } };
{\ar@{->}|-{ }  "6";"2" };
{\ar@{~>}^{\scriptscriptstyle F_{A  }\left(\gamma_{[0,t];0}\right)} "7";"8" };
{\ar@{->}^{\scriptscriptstyle   g(\gamma_{t;0})  } "8";"4" };
{\ar@{->}|-{ } "8";"6" };
{\ar@{=>}_{  h_a } (42,38);(53,18) };
 \endxy\qquad   \qquad   \qquad
     \xy 0;/r.17pc/:
(40, 0)*+{\scriptscriptstyle\bullet}="2";
( 15,15)*+{\scriptscriptstyle\bullet}="3";
(55,15)*+{\scriptscriptstyle\bullet}="4";
(  15,55)*+{\scriptscriptstyle\bullet}="7";
(55,55)*+{\scriptscriptstyle\bullet}="8";
{\ar@{->}|-{ } "7";"3" };{\ar@{~>} "4";"2"};
{\ar@{->}|-{ } "3";"4" };
{\ar@{->}^{  } "8";"4" };
{\ar@{->}^{\scriptscriptstyle F_{A }\left( \gamma_{[0,t];0}\right)} "7";"8" };
{\ar@{=>}^{\scriptscriptstyle h\left(\gamma_{  [0 ,t];0 }\right) } (45,50);(18,18) };
  \endxy.
\end{equation*}
The left one is whiskered from left by   $1$-arrow $F_{A  }\left(\gamma_{[0,t];0}\right)$, corresponding to the wavy  path, while the right one is trivially  whiskered from right by a $1$-arrow.

  (2)
Differentiating $h(\gamma^+_{t;s})  g (\gamma_{0;0})\rhd H_{A',B'}(t,s) =H_{A,B}(t,s) h(\gamma^-_{t;s})$ in (\ref{eq:gauge-2}) with respect to $t$ and $s$ at $t=s=0$, we  get the second formula of the $2$-gauge transformation (\ref{eq:gauge-transformations}) (cf. subsection 3.3.2 of \cite{SW11}).
 \end{rem}

To prove Proposition \ref{prop:transformation-2-holonomies}, set
\begin{equation}\label{eq:F}
 F(s):=h(\gamma^+_{t;s})^{-1}H_{A,B}(t,s) h(\gamma^-_{t;s}) .
\end{equation}
To show
$
 F(s)=g(\gamma_{0;0})\rhd h_{A',B'}(s),
$
it is sufficient to check that they satisfy the same ODE with the same initial condition.
 To find the ODE  satisfied by $F(s)$, we take derivatives with respect to $s$ on both sides of (\ref{eq:F}).
So we have to know two derivatives $\frac d{ds}h(\gamma^\pm_{t;s})$.  To simplify it,  we rewrite $F(s)$ in the following form:
\begin{equation}\label{eq:F-reorder} \begin{split}
    F(s)&  =H_{A,B}(t,s)\mathcal{F}_s\qquad {\rm with}\quad \mathcal{F}_s = h\left(\gamma^-_{t;s}\right)Ad_{\{H_{A,B}(s)h(
    \gamma^-_{t;s})\}^{-1}}\left( h\left(\gamma^+_{t;s}\right)^{-1}  \right).
    \end{split}\end{equation}
Note that   by   the target-matching conditions
(\ref{eq:target-matching}) and  (\ref{eq:h-A-B-matching}) for $h(\gamma^-_{t;s})$ and $ H_{A,B}(t, s)$, respectively, we see that \begin{equation}\label{eq:F-reorder-1}\begin{split}
 \alpha(   H_{A,B}(t,s) h(\gamma^-_{t;s}))=& F_A \left(\gamma^+_{  t ; s }\right )  \cdot g(\gamma_{ t; s  })\cdot   F_{A '}(\gamma^-_{  t ; s })^{-1}g(\gamma_{ 0;0 })^{-1}=\widetilde{g}_{t;s}^{-1} .
\end{split}\end{equation}
    where
   \begin{equation}\label{eq:widetilde-g}
   \widetilde{g}_{t;s}=g(\gamma_{ 0;0 })\cdot F_{A'  }(\gamma^-_{t;s}) \cdot g (\gamma_{ t; s  })^{-1}\cdot F_{A  }(\gamma^+_{t;s})^{-1}
\end{equation}
corresponds to the dotted loop in the following cube:
   \begin{equation*}
     \xy 0;/r.17pc/:
(-9,0 )*+{\scriptscriptstyle \bullet}="1";
(50, 0)*+{\scriptscriptstyle \bullet}="2";
( 15,15)*+{\scriptscriptstyle \bullet}="3";
(69,15)*+{\scriptscriptstyle \bullet}="4";
(-9,40 )*+{\scriptscriptstyle \bullet}="5";
(50, 40)*+{\scriptscriptstyle \bullet}="6";
(  15,55)*+{\scriptscriptstyle \bullet}="7";
(69,55)*+{\scriptscriptstyle \bullet}="8";(23,10)*+{}="08";
{\ar@{-->}_{ \scriptscriptstyle F_{A '}\left(\gamma_{[0,t]; s }\right) } "1";"2" };
{\ar@{-->}|-{\scriptscriptstyle F_{A' }\left(\gamma_{ 0;[0,s]}\right)}"3";"1" };
{\ar@{-}|-{ } "5";"1" };
{\ar@{-} "4";"2"^{  } };
{\ar@{<--}|-{ \scriptscriptstyle g\left(\gamma_{t;s}\right)^{-1} }  "6";"2" };
{\ar@{-->}|-{ \scriptscriptstyle g(\gamma_{ 0;0 })} "7";"3" };
{\ar@{-}|-{ } "3";"4" };
{\ar@{-}^{  } "8";"4" };
{\ar@{-}_{   } "5";"6" };
{\ar@{-}_{  } "7";"5" };
{\ar@{<--}|-{\scriptscriptstyle F_{A  }\left(\gamma_{ t;[0,s] }\right)^{-1} } "8";"6" };
{\ar@{<--}^{\scriptscriptstyle  F_{A  }\left(\gamma_{ [0,t];0 }\right)^{-1}  } "7";"8" };
 \endxy.
\end{equation*}

So we only need to  find the derivative of the term $ \mathcal{F}_s$. This term has a good geometric interpretation in terms of  the $1$-holonomy of the $\mathfrak g\ltimes \mathfrak h$-valued connection
\begin{equation}\label{eq:frak-A}
    \mathfrak{A}=(A, \varphi) .
\end{equation} See Lemma 3.19 in \cite{SW11} for this method.
As before, let  $u_{\mathfrak{A}}(t,s)$ be  the  $1$-holonomy for the loop as the boundary of the image of the rectangle $[0,t]\times[ 0, s]$ under the mapping $\gamma$, with respect to the $\mathfrak g\ltimes \mathfrak h$-valued $1$-form $\mathfrak{A}$.
Write
\begin{equation}\label{eq:u-frak-A}
   u_{\mathfrak{A}}(t,s)=\left(g_t^\dag(s),h_t^\dag(s)\right) .
\end{equation}
\begin{lem}\label{lem:geometric-interpretation}  We have
$
h_t^\dag(s)=  \mathcal{F}_s $ with $\mathcal{F}_s$ given by (\ref{eq:F-reorder}).
\end{lem}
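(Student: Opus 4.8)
The plan is to identify $\mathcal{F}_s$ as the $H$-component of the wreath-product $1$-holonomy $u_{\mathfrak{A}}(t,s)$ by showing that both satisfy the same first-order ODE in $s$ with the same initial value $1_H$, exactly as in the proof of Proposition on $u_{A,s_0}$ and Lemma \ref{lem:H_AB-t}. First I would record the structural facts we already have in hand: $u_{\mathfrak{A}}(t,s)$, being a $1$-holonomy of the loop bounding $\gamma([0,t]\times[0,s])$ with respect to $\mathfrak{A}=(A,\varphi)$, satisfies the analogue of $(\ref{eq:ODE-u})$, namely $\partial_s u_{\mathfrak{A}}(t,s)=-\mathscr{A}^{\mathfrak{A}}_t(s)\,u_{\mathfrak{A}}(t,s)$ where $\mathscr{A}^{\mathfrak{A}}_t(s)$ is built from $Ad$ of the $\mathfrak{A}$-holonomy along $\gamma^-_{\tau;s}$ applied to the $\mathfrak g\ltimes\mathfrak h$-valued $1$-curvature $\Omega^{\mathfrak A}$. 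Writing $u_{\mathfrak A}(t,s)=(g^\dag_t(s),h^\dag_t(s))$, Lemma \ref{lem:Ad} and $(\ref{eq:X-Y})$ let us extract the $\mathfrak h$-component of this ODE: $\partial_s h^\dag_t(s)$ equals $g^\dag_t(s)$ acting appropriately on the $\mathfrak h$-part of $\mathscr A^{\mathfrak A}_t(s)$ together with lower-order terms coming from the wreath structure. The initial condition is $h^\dag_t(0)=1_H$ since the loop degenerates.

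Next I would differentiate $\mathcal{F}_s$ as defined in $(\ref{eq:F-reorder})$. The three ingredients are $\frac{d}{ds}h(\gamma^-_{t;s})$, $\frac{d}{ds}h(\gamma^+_{t;s})$, and $\frac{d}{ds}H_{A,B}(t,s)$; the first two are governed by the composition formulae $(\ref{eq:H-1-2})$ together with the defining ODE $(\ref{eq:h})$ for $h$ along the relevant edge paths (the $t$-derivative of the horizontal-edge contributions and the $s$-derivative of the vertical-edge contributions), and the third is $(\ref{eq:h-A-B})$. The key simplifying device is the rewriting $(\ref{eq:F-reorder})$ itself, which collects $\mathcal F_s$ as a conjugate so that the $H_{A,B}(t,s)$ factor peels off cleanly, and $(\ref{eq:F-reorder-1})$, which replaces $\alpha(H_{A,B}(t,s)h(\gamma^-_{t;s}))$ by the explicit group element $\widetilde g_{t;s}^{-1}$; this is what converts the $Ad$ by an $\alpha$-image into an honest left action $\rhd$ via $(\ref{eq:alpha})$, matching the form that appears in the wreath-product holonomy. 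After substituting and using $(\ref{eq:2-connection})$ (to handle $\alpha(B)=\Omega^A$) and the gauge-transformation identities $(\ref{eq:gauge-transformations})$ along the boundary edges, the derivative of $\mathcal F_s$ should collapse to precisely the ODE satisfied by $h^\dag_t(s)$.

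The third step is bookkeeping: verify the initial conditions agree ($\mathcal F_0=1_H$ because at $s=0$ both $\gamma^\pm_{t;0}$ reduce to $\gamma_{[0,t];0}$, so $h(\gamma^+_{t;0})^{-1}H_{A,B}(t,0)h(\gamma^-_{t;0})=h(\gamma_{[0,t];0})^{-1}\cdot 1_H\cdot h(\gamma_{[0,t];0})=1_H$), and then invoke uniqueness of solutions to the linear ODE to conclude $\mathcal F_s=h^\dag_t(s)$. I would present the computation as: differentiate $(\ref{eq:F-reorder})$, insert the ODEs for the three factors, use $(\ref{eq:F-reorder-1})$ and $(\ref{eq:alpha})$ to rewrite the conjugations as $\rhd$-actions, use $(\ref{eq:gauge-transformations})$ and $(\ref{eq:2-connection})$ on the edge terms, and observe that what remains is exactly the $\mathfrak h$-component of $-\mathscr A^{\mathfrak A}_t(s)\, u_{\mathfrak A}(t,s)$ as extracted from Lemma \ref{lem:Ad} and $(\ref{eq:X-Y})$.

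The main obstacle I expect is the intermediate algebra in the derivative of $\mathcal F_s$: the conjugation $Ad_{\{H_{A,B}(s)h(\gamma^-_{t;s})\}^{-1}}$ applied to $h(\gamma^+_{t;s})^{-1}$ produces several cross terms when differentiated by the product rule, and one must carefully track which $\rhd$-actions are by $G$-valued holonomies along $\gamma^-$ versus $\gamma^+$ and reconcile them using the composition formulae $(\ref{eq:composition-paths})$, $(\ref{eq:composition})$ and the target-matching conditions $(\ref{eq:target-matching})$, $(\ref{eq:h-A-B-matching})$. The sign conventions (the minus signs in $(\ref{eq:gauge-transformations})$, the order reversal in the wreath product noted in Remark after Proposition \ref{prop:target-matching}) are exactly where an error is most likely, so I would double-check the final ODE against the $t=s=0$ linearization, which should reproduce the second equation of $(\ref{eq:gauge-transformations})$.
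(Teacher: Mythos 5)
Your route is genuinely different from the paper's, and as sketched it has a gap. The paper does not prove the lemma by an ODE-in-$s$ uniqueness argument at all: it first identifies the wreath-product path holonomy, $F_{\mathfrak{A}}(\rho)=(F_A(\rho),h(\rho))$, by comparing the component ODEs (\ref{eq:G-H}) with (\ref{eq:F-A}) and (\ref{eq:h}); then $u_{\mathfrak{A}}(t,s)$ factors, by its very definition and (\ref{eq:composition-paths}), as the product (\ref{eq:A-0-composition}) of $\left(F_A(\gamma^-_{t;s}),h(\gamma^-_{t;s})\right)$ with the inverse of $\left(F_A(\gamma^+_{t;s}),h(\gamma^+_{t;s})\right)$; the $H$-component is read off from (\ref{eq:product-wreath0}), (\ref{eq:inverse}) and the interchange law (\ref{eq:interchanging-cross}), giving (\ref{eq:pr_H-u-A}); and this equals $\mathcal{F}_s$ because $Ad_{\{H_{A,B}(t,s)h(\gamma^-_{t;s})\}^{-1}}$ on $H$ is the action of $\alpha\left(\{H_{A,B}(t,s)h(\gamma^-_{t;s})\}^{-1}\right)=\widetilde{g}_{t;s}$ by (\ref{eq:alpha}) and (\ref{eq:F-reorder-1}). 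In short, once (\ref{eq:H-h}) is in hand the lemma is pure wreath-product algebra, with no differentiation of $\mathcal{F}_s$.

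The concrete gap in your plan is the claim that $\frac{d}{ds}h(\gamma^\pm_{t;s})$ is "governed by the composition formulae (\ref{eq:H-1-2}) together with the defining ODE (\ref{eq:h})". That works for $h(\gamma^+_{t;s})$, whose $s$-dependence only extends the vertical edge, but it fails for $h(\gamma^-_{t;s})$: there the $s$-dependence moves the whole horizontal edge $\gamma_{[0,t];s}$ transversally, and the ODE (\ref{eq:h}) only controls the derivative along the path's own parameter. What you need is a variation-of-holonomy formula for the pair $(F_A,h)$ analogous to (\ref{eq:2-diff-u})/(\ref{eq:ODE-u}) but for the connection $\mathfrak{A}=(A,\varphi)$ -- and obtaining it is exactly what the identification (\ref{eq:H-h}), which your proposal never establishes, is for; the lemma is the device the paper introduces precisely to avoid computing $\frac{d}{ds}h(\gamma^-_{t;s})$ by hand. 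Relatedly, the $\mathfrak h$-component of the ODE (\ref{eq:ODE-u-D}) extracted via Lemma \ref{lem:Ad} is expressed through $F_{\mathfrak{A}}(\gamma^-_{\tau;s})$, and you cannot match it against the $h(\cdot)$'s occurring in $\mathcal{F}_s$ without (\ref{eq:H-h}). So either your computation stalls, or you end up proving (\ref{eq:H-h}) along the way -- at which point the short algebraic argument above finishes the lemma immediately and the ODE comparison in $s$ is superfluous (it belongs to the subsequent proof of Proposition \ref{prop:transformation-2-holonomies}, not to this lemma).
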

\begin{proof}
Recall that for a Lipschitzian curve $\rho: [a,b]\longrightarrow U$ and the $\mathfrak g\ltimes\mathfrak h$-valued $1$-form $\mathfrak{A}$ on $U$, $F_{ \mathfrak{A}}(\rho)$ is the $1$-holonomy satisfying
\begin{equation*}
  \frac d{d\tau} F_{ \mathfrak{A}}\left(\rho_{[a,\tau]}\right)= F_{ \mathfrak{A}}\left(\rho_{[a,\tau]}\right)\rho^*\mathfrak{A}_\tau\left(\frac \partial {\partial \tau}\right).
\end{equation*}
If we write
$
   F_{ \mathfrak{A}}\left(\rho_{[a,\tau]}\right):=\left(\widetilde{g} (\tau), \widetilde{h} (\tau)\right),
$
then  this ODE can be written as
 \begin{equation}\label{eq:G-H}\left\{\begin{array}{rl}
    \frac d{d \tau}\widetilde{g} (\tau)&=\widetilde{g} (\tau)\rho^*A_\tau\left(\frac \partial {\partial\tau}\right),\\
    \frac d{d \tau} \widetilde{h} (\tau)&= \widetilde{g} (\tau)\rhd \rho^*\varphi_\tau \left(\frac \partial {\partial \tau}\right)\cdot \widetilde{h} (\tau),
 \end{array} \right.\end{equation}
 by using (\ref{eq:Y-h}). By comparing ODE's in (\ref{eq:G-H}) with (\ref{eq:h}) and (\ref{eq:F-A}),  we see that $\widetilde{ g} (\tau)=F_A\left(\rho_{[a,\tau]}\right), \widetilde{  h} (\tau)=h\left(\rho_{[a,\tau]}\right)$, i.e.,
 \begin{equation}\label{eq:H-h}
 F_{ \mathfrak{A}}(\rho ) =(F_A (\rho  ),h (\rho   ) ).
 \end{equation}

     Apply  (\ref{eq:H-h}) and the  composition formula  (\ref{eq:composition-paths}) of $1$-holonomies to the boundary of the square $[0, t ]\times [0,   s ]$ to get
   \begin{equation}\label{eq:A-0-composition}
  \left (g_t^\dag(s),h_t^\dag(s) \right)=   u_{\mathfrak{A}}(t,s)=\left(F_{A  }(\gamma^-_{ t ; s }) , h(\gamma^-_{  t ; s })\right)\left(F_{A  }(\gamma^+_{  t ; s }) ,h(\gamma^+_{  t ; s }) \right )^{-1}.
   \end{equation}
Consequently, by the multiplication law (\ref{eq:product-wreath0}) and (\ref{eq:inverse}) of $G\ltimes H$  and the  interchange law
 (\ref{eq:interchanging-cross}), we see that  $h_t^\dag(s)$ as the $H$-element of $u_{\mathfrak{A}}(t,s)$ is equal to
   \begin{equation}\label{eq:pr_H-u-A}\begin{split}
    h_t^\dag(s)= & h(\gamma^-_{ t ; s })\cdot \left[\alpha\left (h(\gamma^-_{ t ; s })^{-1}\right)F_{A  }(\gamma^-_{ t ; s })F_{A  }(\gamma^+_{ t; s })^{-1}\right]\rhd h(\gamma^+_{  t ; s })^{-1}=h(\gamma^-_{ t ; s })\cdot  \widetilde{g}_{t;s}\rhd h(\gamma^+_{  t ; s })^{-1},
  \end{split} \end{equation}where $ \widetilde{g}_{t;s}$ is given by (\ref{eq:widetilde-g}).
   Then the result follows from (\ref{eq:F-reorder})-(\ref{eq:F-reorder-1}) and
 the formula of  $h^\dag_t(s)$ in  (\ref{eq:pr_H-u-A}).
\end{proof}

\begin{rem}
  In definition (\ref{eq:F}),
   $ {F}(s)$  is the    $H$-element of the "vertical" composition of three $2$-arrows in the cube (\ref{eq:cube}). Here we reinterpret the part $\mathcal{F}_s$ as the $H$-element of the "horizontal" composition
   \begin{equation}\label{eq:0-composition}
     \left(F_{A  }(\gamma^-_{ t ; s }) , h(\gamma^-_{  t ; s })\right)\#_0\left(F_{A  }(\gamma^+_{  t ; s }) ,h(\gamma^+_{  t ; s }) \right )^{-1},
   \end{equation}
i.e., the horizontal composition of $2$-arrows corresponding to the left, front, right and back face in the cube (\ref{eq:cube}).
\end{rem}

 {\it Proof of Proposition \ref{prop:transformation-2-holonomies} }. Now
we can write
\begin{equation}\label{eq:F-2}\begin{split}
    F(s)&
      =H_{A,B}(t, s) h_t^\dag(s)
\end{split}\end{equation} by (\ref{eq:F-reorder}) and Lemma \ref{lem:geometric-interpretation}.
We need to find the ODE satisfied by $ h_t^\dag(s)$.
Note that by (\ref{eq:ODE-u})-(\ref{eq:ODE-u'}),  we see that $u_{\mathfrak{A}}(t,s)=\left(g_t^\dag(s),h_t^\dag(s)\right)$ satisfies the ODE
\begin{equation}\label{eq:ODE-u-D}
  \frac d{d s}  u_{\mathfrak{A}}(t,s)=-\mathscr{D}_t(s) u_{\mathfrak{A}}(t,s),
\qquad {\rm with}\quad
   \mathscr{D}_t(s)=\int_0^t Ad_{F_{\mathfrak{A}} (\gamma^-_{ \tau ; s })} \gamma^*\Omega^{\mathfrak{A}}_{(\tau,s  )}\left(\frac \partial {\partial \tau},\frac \partial {\partial s}\right)d\tau,
\end{equation}
where $\Omega^{\mathfrak{A}}$ is the curvature of the $\mathfrak g\ltimes\mathfrak h$-valued connection $\mathfrak{A}$, i.e.
\begin{equation*} \begin{split}
   \Omega^{\mathfrak{A}}&= d\mathfrak{A}+\mathfrak{A}\wedge \mathfrak{A}=(dA, d\varphi)+ (A, \varphi) \wedge (A,  \varphi) \\&  =(dA+ A\wedge A , d\varphi+A\rhd\varphi- \varphi\wedge\varphi ), \end{split}
\end{equation*}by using (\ref{eq:X-Y}) and the definition of wedges in (\ref{eq:wedge})-(\ref{eq:rhd}).
Then we can write
\begin{equation}\label{eq:Z_Y_B}\Omega^{\mathfrak{A}}=(\alpha(B), Y) \qquad {\rm with }\qquad
   Y_p=B_p -g_p\rhd B_p',
\end{equation}
at a point $p$, by the $2$-gauge-transformations (\ref{eq:gauge-transformations}).

If we write
 $
    \mathscr{D}_t(s):=(\mathscr{D}_t^{\mathfrak g}(s), \mathscr{D}_t^{\mathfrak h}(s))\in  \mathfrak g \ltimes \mathfrak h
 $,  (\ref{eq:ODE-u-D}) implies that
\begin{equation*}\begin{split}
  \frac d{d s} \left(g_t^\dag(s),h_t^\dag(s)\right)&=-(\mathscr{D}_t^{\mathfrak g}(s), \mathscr{D}_t^{\mathfrak h}(s))(g_t^\dag(s),h_t^\dag(s))\\&=-(\mathscr{D}_t^{\mathfrak g}(s)g^\dag(s), \mathscr{D}_t^{\mathfrak g}(s)\rhd h_t^\dag(s)+h_t^\dag(s)\mathscr{D}_t^{\mathfrak h}(s)  )
\end{split}\end{equation*}by using (\ref{eq:Y-h}),
i.e. we have
\begin{equation}\label{eq:g-h-dag}\begin{split}
  \frac d{d s}  g_t^\dag(s)& = -\mathscr{D}_t^{\mathfrak g}(s)g_t^\dag(s),\\
  \frac d{d s}  h_t^\dag(s) &= - \mathscr{D}_t^{\mathfrak g}(s)\rhd h_t^\dag(s)-h_t^\dag(s)\mathscr{D}_t^{\mathfrak h}(s) .
 \end{split}\end{equation}The second equation is  ODE for $h_t^\dag(s)$ if we know $\mathscr{D}_t$.
To calculate $\mathscr{D}_t$, note that
\begin{equation}\label{eq:F-U}F_{\mathfrak{A}} (\gamma^-_{  \tau ; s }) =
   \left (F_{ {A}} (\gamma^-_{ \tau ; s }), h(\gamma^-_{ \tau ; s }) \right)
\end{equation} by (\ref{eq:H-h}) and that ¡¡it follows from  Lemma  \ref{lem:Ad} that for  any $G\ltimes H$-valued function $(\widetilde{g},\widetilde{h}) $,
\begin{equation}\label{eq:ad-g-X}\begin{split}
 Ad_{(\widetilde{g},\widetilde{h})}  \Omega^{\mathfrak{A}}&= Ad_{(\widetilde{g},\widetilde{h})}(\alpha(B),Y)
   =\left(  Ad_{ \widetilde{g} }  \alpha(B) , \alpha(\widetilde{g}\rhd B)   \rhd \widetilde{h}^{-1} \cdot \widetilde{h}+Ad_{\widetilde{h}^{-1}}( \widetilde{g}\rhd Y)\right)\\&
    =\left( \alpha(\widetilde{g}\rhd B), \widetilde{g}\rhd B -    \widetilde{h}^{-1}\cdot  \widetilde{g}\rhd B  \cdot\widetilde{h}+Ad_{\widetilde{h}^{-1}}( \widetilde{g}\rhd Y)\right)\\&
    =\left( \alpha(\widetilde{g}\rhd B), \widetilde{g}\rhd B +  Ad_{\widetilde{h}^{-1}}\left[\widetilde{g}\rhd  (Y-B )\right] \right)\\&
    =\left( \alpha(\widetilde{g}\rhd B),  \widetilde{g}\rhd B-Ad_{\widetilde{h}^{-1}}\left[(\widetilde{g}\cdot g_p)\rhd B_p'\right] \right)
\end{split}\end{equation} at point $p=\gamma_{\tau;s}$,
 by $2$-gauge-transformations (\ref{eq:gauge-transformations}), (\ref{eq:Z_Y_B}) and
\begin{equation*}
   \alpha(\widetilde{g}\rhd B)\rhd \widetilde{h}^{-1}=\widetilde{g} \rhd B\cdot\widetilde{h}^{-1}-\widetilde{h}^{-1}\cdot \widetilde{g}\rhd B.
\end{equation*}
Apply (\ref{eq:F-U})-(\ref{eq:ad-g-X}) to $ \mathscr{D}_t$ in  (\ref{eq:ODE-u-D}) to get
 \begin{equation}\label{eq:D-t}\begin{split}
 \mathscr{D}_t(s) &= \int_0^t Ad_{(F_{ {A}} (\gamma^-_{ \tau ; s }), h(\gamma^-_{ \tau ; s })) } \gamma^*\Omega^{\mathfrak{A}}_{(\tau,s  )}\left(\frac \partial {\partial \tau},\frac \partial {\partial s}\right)d\tau\\
   &= \int_0^t\left ( \alpha(F_{ {A}} (\gamma^-_{ \tau ; s }) \rhd \gamma^* B),  F_{ {A}} (\gamma^-_{ \tau ; s })\rhd \gamma^* B\right.\\&\left.\qquad\qquad\qquad\qquad \qquad\qquad- Ad_{   h(\gamma^-_{ \tau ; s })^{-1}} \left[\left (F_{ {A}} (\gamma^-_{ \tau ; s })    g(\gamma_{\tau;s})\right) \rhd\gamma^*B'\right] \right)    d\tau,
\end{split}\end{equation} (here $2$-forms  $\gamma^*B$ and $\gamma^*B'$ take value at $\left(\frac \partial {\partial \tau},\frac \partial {\partial s}\right)$).  Consequently,  we see that
 \begin{equation}\label{eq:D-g-h}\begin{split}
\mathscr{D}_t^{\mathfrak g}(s)&= \int_0^t\alpha\left ( F_{ {A}} (\gamma^-_{ \tau ; s })\rhd \gamma^*B \left(\frac \partial {\partial \tau},\frac \partial {\partial s}\right) \right)d\tau=\alpha(\mathscr B_t(s))= \mathscr A_t(s)    ,
\\
 \mathscr{D}_t^{\mathfrak h}(s)&= \mathscr B_t(s) -\int_0^t   Ad_{h(\gamma^-_{ \tau ; s })^{-1}}\left[ (F_{ {A}} (\gamma^-_{ \tau ; s })g(\gamma_{\tau;s}) )\rhd \gamma^*B'\left(\frac \partial {\partial \tau},\frac \partial {\partial s}\right)\right]       d\tau.
\end{split}\end{equation}

 Now apply (\ref{eq:D-g-h}) to (\ref{eq:g-h-dag}) to get the ODE satisfied by $ h_t^\dag(s)$:
\begin{equation}\label{eq:h-dag}\begin{split}
     \frac d{d s} h_t^\dag(s)=&-\mathscr A_t(s) \rhd h_t^\dag(s)-h_t^\dag(s) \\&  \cdot\left\{ \mathscr B_t(s)- \int_0^t Ad_{ h(\gamma^-_{ \tau ; s })^{-1}}\left[ (F_{ {A}} (\gamma^-_{ \tau ; s })g(\gamma_{\tau;s}) )\rhd \gamma^*B'\left(\frac \partial {\partial \tau},\frac \partial {\partial s}\right)\right]  d\tau\right\}.
 \end{split}\end{equation}
 This integrant can be simplified to be
\begin{equation}\label{eq:h-dag'}\begin{split}
   Ad_{ h(\gamma^-_{ \tau ; s })^{-1}}[( F_{ {A}} (\gamma^-_{ \tau ; s })g(\gamma_{\tau;s})  )\rhd \gamma^* B'_{(\tau,s   )}]&= [\alpha(h(\gamma^-_{ \tau ; s })^{-1}  ) F_{ {A}} (\gamma^-_{ \tau ; s })g(\gamma_{\tau;s}) ]\rhd \gamma^* B'_{(\tau,s   )} \\
   &=[g(\gamma_{0;0}) F_{A'}(\gamma^-_{ \tau ; s })]\rhd \gamma^* B'_{(\tau,s   )},
  \end{split}
\end{equation}\begin{equation*}
      \xy 0;/r.17pc/:
(0,0 )*+{\bullet}="1";
(40, 0)*+{\bullet}="2";
( 15,15)*+{\bullet}="3";
(0,40 )*+{\bullet}="5";
(40, 40)*+{\bullet}="6";
(  15,55)*+{\bullet}="7";
(60, 0)*+{\bullet}="8";
(  60,40)*+{\bullet}="9";(  30,44)*+{\scriptscriptstyle F_{A }\left(\gamma_{[0,\tau];s}\right)}="09";
{\ar@{-->}_{\scriptscriptstyle F_{A'}\left(\gamma_{[0,\tau];s}\right)} "1";"2" };
{\ar@{-->} |-{ }"3";"1" };
{\ar@{->}_{\scriptscriptstyle g(\gamma_{0;s})} "5";"1" };
{\ar@{->}^{\scriptscriptstyle g(\gamma_{\tau;s})}  "6";"2" };
{\ar@{-->}|-{\scriptscriptstyle g(\gamma_{0;0}) } "7";"3" };
{\ar@{->}^{  } "5";"6" };
{\ar@{->}_{\scriptscriptstyle F_{A }\left(\gamma_{ 0;[0,s]}\right) } "7";"5" };
{\ar@{->}^{\scriptscriptstyle g(\gamma_{t;s})}  "9";"8" };
{\ar@{->} |-{ }"2";"8" };
{\ar@{->} |-{ }"6";"9" };
 \endxy
\end{equation*} by the target-matching condition. At last differentiate (\ref{eq:F-2}) with respect to $s$
 and use the ODE (\ref{eq:h-dag}) satisfied by $h^\dag(s) $ and (\ref{eq:h-dag'}) to get
\begin{equation*}\begin{split}
 \frac d{d s}   F(s)=&  H_{A,B}(t,s)\mathscr   B_t(s) h_t^\dag(s)  -H_{A,B}(t,s) \alpha(\mathscr   B_t(s))\rhd h_t^\dag(s)\\&-H_{A,B}(t,s)h_t^\dag(s)
 \left\{ \mathscr B_t(s)-\int_0^t \left[  \left[  g(\gamma_{0;0}) F_{ {A'}} (\gamma^-_{ \tau ; s })  \right]\rhd \gamma^*B'\left(\frac \partial {\partial \tau},\frac \partial {\partial s}\right)\right]   d\tau\right\}\\
=&H_{A,B}(t,s)h_t^\dag(s)\cdot g(\gamma_{0;0})\rhd \mathscr B_t'(s)=F(s)\cdot g(\gamma_{0;0})\rhd \mathscr B_t'(s)
\end{split}\end{equation*}
by
\begin{equation*}\begin{split}
   \alpha(\mathscr   B_t(s))\rhd h_t^\dag(s)&=\mathscr   B_t(s) h_t^\dag(s)- h_t^\dag(s)\mathscr   B_t(s),
 \\
   \mathscr B_t'(s)&=     \int_0^t {F_{A'} (\gamma^-_{  \tau; s })} \rhd \gamma^* B'_{(\tau,s   )}\left(\frac \partial {\partial \tau},\frac \partial {\partial s}\right) d\tau .   \end{split}
\end{equation*}
Now $F(s)$ and $g(\gamma_{0;0})\rhd H_{A',B'}(s)$  satisfy the same ODE with the same initial condition. So they must be identical.\hskip 128mm $\Box$
\subsection{The  compatibility cylinder of transition $2$-arrows}

 For a Lipschitzian curve $\rho: [a,b]\longrightarrow U_i\cap U_j$, define  $\psi_{ij}(\rho_{[a,b]})$ to be  the $H$-element of    the $2$-gauge transformation along the curve $\rho$ (with $\varphi$ replaced by $a_{ij}$ in (\ref{eq:h})),  \begin{equation*}
     \xy 0;/r.17pc/:
(0,0 )*+{\bullet}="1";
(50, 0)*+{\bullet}="2";
(0,40 )*+{\bullet}="5";
(50, 40)*+{\bullet}="6";
{\ar@{->}_{ F_{A_j}\left(\rho_{[a,t]}\right)} "1";"2" };
{\ar@{->}_{ g_{ij}(\rho {( a)}) } "5";"1" };
{\ar@{->}^{ g_{ij}(\rho {( t)})}  "6";"2" };
{\ar@{->}^{ F_{A_i}\left(\rho_{[a,t]}\right)} "5";"6" };
{\ar@{=>}^{\psi_{ij}\left(\rho_{[a,t]}\right)} (46,36);(4, 2) };
 \endxy,
\end{equation*} constructed from the  the $2$-gauge transformation $(g_{ij}, a_{ij})$.
Namely,
  it is  the unique solution to the ODE
\begin{equation}\label{eq:a-ij}
 \frac d{d t}    \psi_{ij}\left(\rho_{[a,t]}\right)= F_{A_{i }}(\rho_{[a,t]})\rhd \gamma^* a_{ij}\left(\frac \partial {\partial t}\right)\psi_{ij}\left(\rho_{[a,t]}\right),
\end{equation}
with initial condition $1_H$. We call
  \begin{equation*}
     \Psi_{ij}(\rho ):=\left(F_{A_i} (\rho )g_{ij}(\rho {( b)}),\psi_{ij}(\rho )\right)\end{equation*}
  the {\it transition $2$-arrow along the path $\rho$}.

\begin{prop} \label{prop:modification} Let $\rho$ be as above, $x=\rho(a)$ and $y=\rho(t)$. If the $2$-gauge transformation $(g_{ij}, a_{ij})$ satisfies  the compatibility condition  (\ref{eq:a-f}), then $ \psi_{ij}(\rho )$ satisfies
   \begin{equation}\label{eq:modification}
     g_{ij}(x)\rhd\psi_{jk}(\rho )= \psi^{-1}_{ij}(\rho )\cdot F_{A_i} (\rho ) \rhd f_{ijk}(y) \psi_{ik}(\rho ) f_{ijk}^{-1}(x).
   \end{equation}
   i.e., the following cylinder \begin{equation}\label{eq:cylinder}
     \xy
(0,0 )*+{\bullet}="1";
(40, 0)*+{\bullet}="2";
( 15,15)*+{\bullet}="3";
(55,15)*+{\bullet}="4";
(0,40 )*+{\bullet}="5";
(40, 40)*+{\bullet}="6";(46, 22)*+{\scriptscriptstyle f_{ijk}(y)}="06";
{\ar@{->}_{\scriptscriptstyle F_{A_k} (\rho )} "1";"2" };
{\ar@{-->} |-{}"3";"1" };
{\ar@{->}_{\scriptscriptstyle g_{ik}(x) } "5";"1" };
{\ar@{->} "4";"2"^{\scriptscriptstyle g_{jk}(y)} };
{\ar@{->}|-{ }  "6";"2" };
{\ar@{-->}^{\scriptscriptstyle F_{A_j} (\rho ) } "3";"4" };
{\ar@{->}^{\scriptscriptstyle  F_{A_i} (\rho ) } "5";"6" };
{\ar@{==>}^{\scriptscriptstyle \psi_{jk}(\rho )} (51,13);(4, 2) };
{\ar@{<==}^{\scriptscriptstyle \psi^{-1}_{ij}(\rho )} (38,38);(17, 17) };
{\ar@{<==}^{\scriptscriptstyle f_{ijk}^{-1}(x)} (13,15);(1, 17) };{\ar@{=>}_{ } (53,16);(41, 20) };
{\ar@{->}^{\scriptscriptstyle g_{ij}(y) } "6";"4" };
{\ar@{->}^{\scriptscriptstyle  g_{ij}(x) } "5";"3" };( 22,-8)*+{   {\rm The \hskip 2mm cylinder }\,\, C_{ij k}  }="30";
 \endxy
\end{equation} is commutative. The front face represents the transition $2$-arrow given by $\psi_{ik}(\rho )$.
\end{prop}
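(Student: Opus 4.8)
The plan is to follow the ODE-uniqueness strategy already used for Proposition~\ref{prop:target-matching} and Proposition~\ref{prop:transformation-2-holonomies}: rewrite the claimed identity as an equality of two $H$-valued functions of the curve parameter, show that both solve the same first-order linear ODE, and check that they agree at the initial point. Concretely, I would recast (\ref{eq:modification}) in the multiplicative form
\[
 \psi_{ij}\bigl(\rho_{[a,t]}\bigr)\cdot\Bigl(g_{ij}(x)\rhd\psi_{jk}\bigl(\rho_{[a,t]}\bigr)\Bigr)\cdot f_{ijk}(x)
 =\Bigl(F_{A_i}\bigl(\rho_{[a,t]}\bigr)\rhd f_{ijk}(\rho(t))\Bigr)\cdot\psi_{ik}\bigl(\rho_{[a,t]}\bigr),
\]
and call the left-hand side $P(t)$ and the right-hand side $Q(t)$, for $t\in[a,b]$. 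At $t=a$ every $\psi_{\bullet\bullet}$ of the constant path is $1_H$ and $1_G\rhd f_{ijk}(x)=f_{ijk}(x)$, so $P(a)=Q(a)=f_{ijk}(x)$.

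Next I would compute $P'(t)$. Differentiating and inserting the defining ODE (\ref{eq:a-ij}) for $\psi_{ij}$ and for $\psi_{jk}$, the contribution of $\psi_{ij}'$ gives $F_{A_i}(\rho_{[a,t]})\rhd\rho^{*}a_{ij}(\partial/\partial t)$ on the far left, times $P(t)$; the contribution of $\psi_{jk}'$ gives, after commuting $\psi_{ij}$ past an $\mathfrak h$-valued element by (\ref{eq:alpha}), a factor $\bigl(\alpha(\psi_{ij})\,g_{ij}(x)\,F_{A_j}\bigr)\rhd\rho^{*}a_{jk}(\partial/\partial t)$ times $P(t)$. The target-matching condition (\ref{eq:target-matching}) for the $2$-gauge transformation $(g_{ij},a_{ij})$ reads $g_{ij}(x)F_{A_j}(\rho_{[a,t]})=\alpha(\psi_{ij}(\rho_{[a,t]})^{-1})F_{A_i}(\rho_{[a,t]})g_{ij}(\rho(t))$, so this second factor simplifies to $F_{A_i}\rhd\bigl(g_{ij}(\rho(t))\rhd\rho^{*}a_{jk}(\partial/\partial t)\bigr)$. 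Adding the two contributions, and pulling back the compatibility condition (\ref{eq:a-f}) along $\rho$ at the point $\rho(t)$, I obtain
\[
 P'(t)=\Bigl\{F_{A_i}\bigl(\rho_{[a,t]}\bigr)\rhd\rho^{*}\bigl(f_{ijk}a_{ik}f_{ijk}^{-1}+A_i\rhd f_{ijk}\cdot f_{ijk}^{-1}+df_{ijk}\cdot f_{ijk}^{-1}\bigr)\!\left(\frac{\partial}{\partial t}\right)\Bigr\}\cdot P(t).
\]

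It then remains to verify that $Q(t)$ obeys the very same ODE. For the factor $F_{A_i}(\rho_{[a,t]})\rhd f_{ijk}(\rho(t))$ I would use the ODE (\ref{eq:F-A}) for $F_{A_i}$ together with $\frac{d}{dt}f_{ijk}(\rho(t))=\rho^{*}(df_{ijk}\cdot f_{ijk}^{-1})(\partial/\partial t)\cdot f_{ijk}(\rho(t))$; a short computation in the spirit of Lemma~\ref{lem:Ad} shows its left-logarithmic derivative is $F_{A_i}\rhd\rho^{*}\bigl(df_{ijk}\cdot f_{ijk}^{-1}+A_i\rhd f_{ijk}\cdot f_{ijk}^{-1}\bigr)(\partial/\partial t)$. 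Differentiating $Q=(F_{A_i}\rhd f_{ijk})\,\psi_{ik}$, inserting (\ref{eq:a-ij}) for $\psi_{ik}$, and using (\ref{eq:alpha-h}) and (\ref{eq:alpha}) to move the $H$-element $F_{A_i}\rhd f_{ijk}$ past the $\mathfrak h$-element $F_{A_i}\rhd\rho^{*}a_{ik}(\partial/\partial t)$ --- which is precisely what generates the conjugated term $F_{A_i}\rhd\rho^{*}(f_{ijk}a_{ik}f_{ijk}^{-1})$ --- produces the same right-hand side as for $P'$. By uniqueness of solutions of a linear ODE with Lipschitz coefficients and $P(a)=Q(a)$, we conclude $P\equiv Q$ on $[a,b]$, which is (\ref{eq:modification}); the commutativity of the cylinder (\ref{eq:cylinder}) is the diagrammatic translation of this identity. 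I expect the main obstacle to be the $Q$-side bookkeeping: differentiating $F_{A_i}\rhd f_{ijk}$ correctly and tracking the three versions of $\rhd$ (group on group, group on Lie algebra, Lie algebra on group) so that the conjugation term coming from (\ref{eq:a-f}) reappears verbatim; the $P$-side computation is comparatively routine.
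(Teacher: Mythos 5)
Your argument is correct and is essentially the paper's proof: an ODE-uniqueness argument built from the defining equation (\ref{eq:a-ij}), the target-matching condition (\ref{eq:target-matching}) for $(g_{ij},a_{ij})$, the compatibility condition (\ref{eq:a-f}), and the crossed-module relations to move $H$-elements past $\mathfrak h$-elements. The only difference is bookkeeping: the paper differentiates the single function $\mu(t)=\psi_{ij}^{-1}(t)\cdot F_{A_i}(\rho_{[a,t]})\rhd f_{ijk}(\rho(t))\cdot\psi_{ik}(t)\cdot f_{ijk}^{-1}(x)$ and identifies its ODE with the one already satisfied by $g_{ij}(x)\rhd\psi_{jk}(t)$, whereas you clear the inverses and differentiate both sides $P$ and $Q$ separately, arriving at the same coefficient.
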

\begin{proof} Denote $\psi_{ij}(t):=\psi_{ij}(\rho_{[0,t]})$, $g_{i }(t):=F_{A_i }(\rho_{[a,t]})$, $  g_{ij}(t): = g_{ij}(\rho(t))$ and $y:=\rho(t)$.  Set
\begin{equation*}
   \mu(t):=\psi_{ij}^{-1}(t)\cdot g_i (t)\rhd f_{ijk}(t)\cdot \psi_{ik}(t)\cdot f_{ijk}^{-1}(x).
\end{equation*}
Then,
\begin{equation*}\begin{split}
 \mu'(t) = &\psi^{-1}_{ij}(t)g_i (t)\rhd \left[ -  \gamma^*a_{ij}\left(\frac \partial {\partial t}\right)  f_{ijk}(t)+  \gamma^* A_i\left(\frac \partial {\partial t} \right)\rhd f_{ijk}(t) + f_{ijk}'(t) \right.\\
 &\qquad\quad\qquad\quad\qquad+\left. f_{ijk}(t)   \gamma^*a_{ik}\left(\frac \partial {\partial t}\right)  \right]\psi_{ik}(t) f_{ijk}^{-1}(x)
  \\
 =&\psi^{-1}_{ij}(t)g_i (t)\rhd \left[ g_{ij}(t)\rhd \gamma^*a_{jk}\left(\frac \partial {\partial t}\right)\cdot f_{ijk}(t)\right]\psi_{ik}(t)f_{ijk}^{-1}(x)\\
 =&Ad_{\psi^{-1}_{ij}(t)}\left[(g_i (t)  g_{ij}(t))\rhd \gamma^*a_{jk}\left(\frac \partial {\partial t}\right)\right]\cdot\mu(t)\\
 =&\left[\alpha(\psi_{ij}(t)^{-1})g_i (t)  g_{ij}(t)\right]\rhd \gamma^*a_{jk}\left(\frac \partial {\partial t}\right)\cdot\mu(t)
\\ = &[g_{ij}(x) g_j (t)]\rhd \gamma^*a_{jk}\left(\frac \partial {\partial t}\right)\cdot\mu(t)
\end{split}\end{equation*}
by using the equation (\ref{eq:a-ij}) satisfied by $\psi_{ij}(t)$, the compatibility condition (\ref{eq:a-f})  and the target-matching condition
(\ref{eq:target-matching}). This is the same ODE satisfied by $ g_{ij}(x)\rhd\psi_{jk}(t)$. The result follows.
\end{proof}
\begin{rem}\label{rem:gauge}
 (1) Differentiating (\ref{eq:modification}) with respect to $t$ at $t=a$, we  get the compatibility condition (\ref{eq:a-f}).

 (2)    The  gauge transformation    (\ref{eq:gauge-transformations}), if $\varphi$ is replaced by $-\varphi$, coincides with that in proposition 3.10 of \cite{SW11}, but with primed and unprimed terms interchanged.

 (3) The union of any $3$ compatibility  cylinders $C_{ijk}$, $C_{jkl}$ and $C_{ijl}$ as in (\ref{eq:cylinder}) over the intersection $U_i\cap U_j\cap U_k\cap U_l$ gives us  the $4$-th  compatibility  cylinder $C_{ikl}$ by their commutativity and commutative tetrahedra  (\ref{eq:tetrahedron}). Hence, the  $4$ compatibility conditions (\ref{eq:modification}) over this intersection  are consistent, and so are their differentiations (\ref{eq:a-f}).
\end{rem}
 \section{The global $2$-holonomy}
 \subsection{Invariance of   global $2$-holonomies under the  change of coordinate charts}
  The $2$-cocycle condition
(\ref{eq:cocycle2}) implies  that
\begin{equation}\label{eq:associativity}
  f_{lkj}\cdot f_{lij}^{-1} = f_{lik }^{-1}\cdot  g_{li}  \rhd f_{ikj}
\end{equation}  by permutation $(i,j,k,l)\rightarrow(l,i,k,j)$, which corresponds to the following diagrams:
\begin{equation}\label{eq:associativity-diagrams}
  \xy 0;/r.17pc/:
(0,0 )*+{\scriptscriptstyle  i}="1";
(40, 0)*+{\scriptscriptstyle  j}="2";
(0,40 )*+{\scriptscriptstyle  l}="5";
(40, 40)*+{\scriptscriptstyle  k}="6";
{\ar@{->}_{ g_{ij}  } "1";"2" };
{\ar@{->}_{ g_{l i } } "5";"1" };
{\ar@{<-}_{ g_{k j} }"2"; "6" };
{\ar@{->}^{ g_{ l k}  } "5";"6" };
{\ar@{<-}|-{ g_{l j }  } "2";"5" };
{\ar@{=>}^{f_{lij}^{-1} } (16,16);( 8,3) };
{\ar@{=>}_{ f_{lk j} } (36,36);(24,24) };
 \endxy \xy
  (0,0)*+{ }="1";
(10,0)*+{ }="2";
{\ar@{=}^{ }   "1"+(0,18);"2"+(0,18)};
   \endxy
    \xy 0;/r.17pc/:
(0,0 )*+{\scriptscriptstyle  i}="1";
(40, 0)*+{\scriptscriptstyle  j}="2";
(0,40 )*+{\scriptscriptstyle  l}="5";
(40, 40)*+{\scriptscriptstyle  k}="6";
{\ar@{->}_{ g_{ij}  } "1";"2" };
{\ar@{->}_{ g_{ l i} } "5";"1" };
{\ar@{<-}_{ g_{k j} }"2"; "6" };
{\ar@{->}^{ g_{ l k}  } "5";"6" };
{\ar@{->}|-{ g_{ i k}  } "1";"6" };
{\ar@{=>}^{f_{ikj} } (36,30);(25,5) };
{\ar@{=>}_{f_{lik }^{-1} } (20,36);(5,12) };
 \endxy ,
\end{equation}
namely, the following tetrahedron is commutative.\begin{equation}\label{eq:tetrahedron}
     \xy  0;/r.22pc/:
(0,0 )*+{\scriptscriptstyle i}="1";
(40, 0)*+{\scriptscriptstyle j}="2";
( 55,25)*+{\scriptscriptstyle k}="3";
(20,45)*+{\scriptscriptstyle l}="4";
 {\ar@{->} "1";"2"_{ \scriptscriptstyle  g_{ij}  } };
{\ar@{<--}|-{   } "3";"1"};
{\ar@{<-} "2";"3"_{\scriptscriptstyle    g_{kj} } };
{\ar@{<-}^{ \scriptscriptstyle  g_{li} } "1";"4" };
{\ar@{->}|-{ \scriptscriptstyle   g_{ lj}  } "4"; "2"};
{\ar@{<-} "3";"4"_{\scriptscriptstyle   g_{lk}} };
{\ar@{==>}^{\scriptscriptstyle f_{i kj} } (43,  12)*{};( 8,2)*{}};
{\ar@{<=}_{\scriptscriptstyle  f_{lkj}  } (32,  25)*{};(51,25)*{}};
{\ar@{<=}^{\scriptscriptstyle  f_{lij }^{-1} } (7,  8)*{};(28,20)*{}};
{\ar@{<==}^{\scriptscriptstyle  f_{lik  }^{-1}   } (13,  22)*{};(33,37)*{}};
( 28,-8)*+{\scriptscriptstyle     {\rm The \hskip 2mm tetrahedron}\,\, T^l_{ikj}  }="30";
 \endxy\qquad  \xy 0;/r.15pc/:
(0,0 )*+{\scriptscriptstyle x_1 }="1";
(30, 0)*+{\scriptscriptstyle x_2 }="2";
(0,30 )*+{\scriptscriptstyle y_1 }="3";
(30, 30)*+{\scriptscriptstyle y_2 }="4";
{\ar@{->}|-{  } "1";"2" };
{\ar@{->}|-{   } "3";"1" };
{\ar@{->}|-{  }  "4";"2" };
{\ar@{->}|-{   } "3";"4" };
(60,0 )*+{\scriptscriptstyle x_3 }="11";
(60,30 )*+{\scriptscriptstyle y_3 }="13";
{\ar@{->}|-{    } "13";"11" };
{\ar@{->}|-{  } "4";"13" };
{\ar@{->}|-{ }  "2";"11" };
(90,0 )*+{\scriptscriptstyle x_4 }="111";
(90,30 )*+{\scriptscriptstyle y_4 }="113";
{\ar@{->}|-{    } "113";"111" };
{\ar@{->}|-{  } "13";"113" };
{\ar@{->}|-{ }  "11";"111" };
 (0,60 )*+{\scriptscriptstyle z_1 }="2-1";
(30, 60)*+{\scriptscriptstyle z_2 }="2-2";
{\ar@{->}|-{   } "2-1";"2-2" };
(60,60 )*+{\scriptscriptstyle z_3 }="2-11";
{\ar@{->}|-{ }  "2-2";"2-11" };
{\ar@{->}|-{   } "2-1";"3" };
(90,60 )*+{\scriptscriptstyle z_4 }="2-111";
{\ar@{->}|-{ }  "2-11";"2-111" };
{\ar@{->}|-{   } "2-111";"113" };
{\ar@{->}|-{  } "2-2";"4" };
{\ar@{->}|-{  } "2-11";"13" };
(15,15 )*+{\scriptscriptstyle \gamma^{(i)} }="01";(45,15 )*+{\scriptscriptstyle \gamma^{(j)} }="02";(15,45 )*+{\scriptscriptstyle \gamma^{(l)} }="03";(45,45 )*+{\scriptscriptstyle \gamma^{(k)} }="04";
(75,15 )*+{\scriptscriptstyle \gamma^{(m)} }="05";(75,45 )*+{\scriptscriptstyle \gamma^{(p)} }="06";
 (0,90 )*+{\scriptscriptstyle w_1 }="3-1";
(30,90)*+{\scriptscriptstyle w_2 }="3-2";
{\ar@{->}|-{   } "3-1";"3-2" };
(60,90 )*+{\scriptscriptstyle w_3 }="3-11";
{\ar@{->}|-{ }  "3-2";"3-11" };
{\ar@{->}|-{   } "3-1";"2-1" };
(90,90 )*+{\scriptscriptstyle w_4 }="3-111";
{\ar@{->}|-{ }  "3-11";"3-111" };
{\ar@{->}|-{   } "3-111";"2-111" };
{\ar@{->}|-{  } "3-2";"2-2" };
{\ar@{->}|-{  } "3-11";"2-11" };
(15,75 )*+{\scriptscriptstyle \gamma^{(l')} }="07";(45,75 )*+{\scriptscriptstyle \gamma^{(k')} }="08";(75,75 )*+{\scriptscriptstyle \gamma^{(p')} }="09";
 \endxy
\end{equation}

Fix a rectangle $\Box_{ab}$ such that $\gamma(\Box_{ab})\subset U_k$. Suppose that the image  $\gamma(\Box_{ab})$ is also contained in the coordinate chart $ U_q$. Let us show that the global $2$-holonomy is invariant if we
 use the $2$-connection $(A_q,B_q)$ on $U_q$ instead of the $2$-connection $(A_k,B_k)$  over $U_k$, when calculate the local $2$-holonomy for $\gamma|_{\Box_{ab}}$.
Now consider $9$ adjacent rectangles in the above.
By our construction, the corresponding  $2$-holonomy is represented by the following diagram:
\begin{equation}\label{eq:glue9}
     \xy  0;/r.19pc/:
(0,0 )*+{\scriptscriptstyle \bullet  }="1";
(30, 0)*+{\scriptscriptstyle \bullet  }="2";
(0,30 )*+{\scriptscriptstyle \bullet  }="3";
(30, 30)*+{\scriptscriptstyle \bullet  }="4";
{\ar@{->}_{\scriptscriptstyle  F_{  A_i} } "1";"2" };
{\ar@{->}_{\scriptscriptstyle  F_{A_i}  } "3";"1" };
{\ar@{->}|-{\scriptscriptstyle  F_{A_i} }  "4";"2" };
{\ar@{->}_{\scriptscriptstyle  F_{A_i}  } "3";"4" };
{\ar@{=>}^{\scriptscriptstyle H_{ i}} (26,26);(4, 2) };
(60,0 )*+{\scriptscriptstyle\bullet  }="11";
(90, 0)*+{\scriptscriptstyle \bullet  }="12";
(60,30 )*+{\scriptscriptstyle \bullet  }="13";
(90, 30)*+{\scriptscriptstyle \bullet  }="14";
{\ar@{->}_{\scriptscriptstyle  F_{  A_j} } "11";"12" };
{\ar@{->}|-{\scriptscriptstyle  F_{A_j}  } "13";"11" };
{\ar@{->}|-{\scriptscriptstyle  F_{A_j} }  "14";"12" };
{\ar@{->}|-{\scriptscriptstyle  F_{A_j}  } "13";"14" };
{\ar@{=>}^{\scriptscriptstyle H_{ j}} (86,26);(64, 2) };
{\ar@{-->}_{\scriptscriptstyle g_{ij}( y_2) } "4";"13" };
{\ar@{-->}_{\scriptscriptstyle g_{ij}( x_2)}  "2";"11" };
{\ar@{==>}^{\scriptscriptstyle \psi_{ij}^{-1}} (56,26);(34, 2) };
(120,0 )*+{\scriptscriptstyle \bullet  }="111";
(150, 0)*+{\scriptscriptstyle \bullet }="112";
(120,30 )*+{\scriptscriptstyle \bullet  }="113";
(150, 30)*+{\scriptscriptstyle \bullet  }="114";
{\ar@{->}_{\scriptscriptstyle  F_{  A_m} } "111";"112" };
{\ar@{->}|-{\scriptscriptstyle  F_{A_m}  } "113";"111" };
{\ar@{->}^{\scriptscriptstyle  F_{A_m} }  "114";"112" };
{\ar@{->}|-{\scriptscriptstyle  F_{A_m}  } "113";"114" };
{\ar@{=>}^{\scriptscriptstyle H_{ m}} (146,26);(124, 2) };
{\ar@{-->}_{\scriptscriptstyle g_{jm}( y_3)  } "14";"113" };
{\ar@{-->}_{\scriptscriptstyle g_{jm}( x_3)  } "12";"111" };
{\ar@{==>}^{\scriptscriptstyle \psi_{jm}^{-1}} (116,26);(94, 2) };
 (0,60 )*+{\scriptscriptstyle \bullet  }="2-1";
(30, 60)*+{\scriptscriptstyle \bullet  }="2-2";
(0,90 )*+{\scriptscriptstyle \bullet  }="2-3";
(30, 90)*+{\scriptscriptstyle \bullet  }="2-4";
{\ar@{->}|-{\scriptscriptstyle  F_{  A_l} } "2-1";"2-2" };
{\ar@{->}_{\scriptscriptstyle  F_{A_l}  } "2-3";"2-1" };
{\ar@{->}|-{\scriptscriptstyle  F_{A_l} }  "2-4";"2-2" };
{\ar@{->}|-{\scriptscriptstyle  F_{A_l}  } "2-3";"2-4" };
{\ar@{=>}^{\scriptscriptstyle H_{ l}} (26,86);(4, 62) };{\ar@{==>}^{\psi_{ l'l}} (26,116);(4, 92) };
(60,60 )*+{\scriptscriptstyle \bullet  }="2-11";
(90, 60)*+{\scriptscriptstyle \bullet  }="2-12";
(60,90 )*+{\scriptscriptstyle \bullet  }="2-13";
(90, 90)*+{\scriptscriptstyle \bullet  }="2-14";
{\ar@{->}|-{\scriptscriptstyle  F_{  A_k} } "2-11";"2-12" };
{\ar@{->}|-{\scriptscriptstyle  F_{A_k}  } "2-13";"2-11" };
{\ar@{->}|-{\scriptscriptstyle  F_{A_k} }  "2-14";"2-12" };
{\ar@{->}|-{\scriptscriptstyle  F_{A_k}  } "2-13";"2-14" };
{\ar@{=>}^{\scriptscriptstyle H_{ k}} (86,86);(64, 62) };{\ar@{==>}^{\psi_{ k'k}} (86,116);(64, 92) };
{\ar@{-->}_{\scriptscriptstyle g_{lk}(z_2 ) } "2-4";"2-13" };
{\ar@{-->}|-{\scriptscriptstyle g_{lk}( y_2)}  "2-2";"2-11" };
{\ar@{==>}^{\scriptscriptstyle \psi_{lk}^{-1}} (56,86);(34,62) };
(120,60 )*+{\scriptscriptstyle \bullet  }="2-111";
(150,60)*+{\scriptscriptstyle \bullet  }="2-112";
(120,90 )*+{\scriptscriptstyle \bullet  }="2-113";
(150, 90)*+{\scriptscriptstyle \bullet  }="2-114";
{\ar@{->}|-{\scriptscriptstyle  F_{  A_p} } "2-111";"2-112" };
{\ar@{->}|-{\scriptscriptstyle  F_{A_p}  } "2-113";"2-111" };
{\ar@{->}^{\scriptscriptstyle  F_{A_p} }  "2-114";"2-112" };
{\ar@{->}|-{\scriptscriptstyle  F_{A_p}  } "2-113";"2-114" };
{\ar@{=>}^{\scriptscriptstyle H_{ p}} (146,86);(124, 62) };
{\ar@{-->}_{\scriptscriptstyle g_{kp}(z_3)  } "2-14";"2-113" };
{\ar@{-->}|-{\scriptscriptstyle g_{kp}(y_3)  } "2-12";"2-111" };
{\ar@{==>}^{\scriptscriptstyle \psi_{kp}^{-1}} (116,86);(94, 62) };
{\ar@{-->}_{\scriptscriptstyle  g_{ li}(y_1)  } "2-1";"3" };
{\ar@{-->}|-{\scriptscriptstyle  g_{ li} } "2-2";"4" };
{\ar@{==>}^{\scriptscriptstyle \psi_{li} } (26,56);( 4,32) };
{\ar@{-->}|-{\scriptscriptstyle  g_{ kj} } "2-11";"13" };
{\ar@{-->}|-{\scriptscriptstyle  g_{ kj} } "2-12";"14" };
{\ar@{==>}^{\scriptscriptstyle \psi_{kj} } (86,56);( 64,32) };
{\ar@{-->}|-{\scriptscriptstyle  g_{ pm} } "2-111";"113" };
{\ar@{-->}^{\scriptscriptstyle  g_{ pm}(y_4) } "2-112";"114" };
{\ar@{==>}^{\scriptscriptstyle \psi_{pm} } (146,56);( 124,32) };
{\ar@{-->}|-{\scriptscriptstyle  g_{ lj} } "2-2";"13" };
{\ar@{==>}_{\scriptscriptstyle f_{lkj} } (58,58);( 46,46) };
{\ar@{==>}^{\scriptscriptstyle f_{lij}^{-1} } (43,43);(32,32) };
{\ar@{-->}|-{\scriptscriptstyle  g_{ km} } "2-12";"113" };
{\ar@{==>}_{\scriptscriptstyle f_{kpm} } (118,58);( 106,46) };
{\ar@{==>}^{\scriptscriptstyle f_{kjm}^{-1} } (103,43);(92,32) };
(30,120 )*+{\scriptscriptstyle \bullet  }="3-4";
(60, 120)*+{\scriptscriptstyle \bullet  }="3-13";
(90,120 )*+{\scriptscriptstyle \bullet  }="3-14";
(120,120 )*+{\scriptscriptstyle \bullet }="3-113";
(0,120 )*+{\scriptscriptstyle \bullet }="3-3";(150,120 )*+{\scriptscriptstyle \bullet }="3-0";
{\ar@{-->}^{\scriptscriptstyle  g_{ l'k'}(z_2) } "3-4";"3-13" };{\ar@{->}^{\scriptscriptstyle  F_{ A_{k'}} } "3-13";"3-14" };{\ar@{-->}^{\scriptscriptstyle  g_{  k'p'}(z_3) } "3-14";"3-113" };
{\ar@{-->}|-{\scriptscriptstyle  g_{ l'l} } "3-4";"2-4" };{\ar@{-->}|-{\scriptscriptstyle  g_{  k'k} } "3-13";"2-13" };{\ar@{-->}|-{\scriptscriptstyle  g_{   k'k} } "3-14";"2-14" };{\ar@{-->}|-{\scriptscriptstyle  g_{   p'p} } "3-113";"2-113" };
{\ar@{->}^{ \scriptscriptstyle F_{ A_{l'}} } "3-3";"3-4" };{\ar@{-->}_{\scriptscriptstyle  g_{ l'l}(z_1) } "3-3";"2-3" };
{\ar@{->}^{\scriptscriptstyle  F_{ A_{p'}} } "3-113";"3-0" };{\ar@{-->}^{\scriptscriptstyle  g_{ p'p}(z_4) } "3-0";(150,90) };
{\ar@{==>}^{\scriptscriptstyle \psi_{ p'p}} (146,116);(124, 92) };{\ar@{==>}_{\scriptscriptstyle f_{k'p'p} } (118,118);( 106,106) };
{\ar@{==>}^{\scriptscriptstyle f_{k'kp}^{-1} } (103,103);(92,92) };{\ar@{-->}|-{\scriptscriptstyle  g_{ k'p}  } "3-14";"2-113" };
{\ar@{==>}_{\scriptscriptstyle f_{l'k'k} } (58,118);( 46,106) };
{\ar@{==>}^{\scriptscriptstyle f_{l'lk}^{-1} } (43,103);(32,92) };{\ar@{-->}|-{\scriptscriptstyle  g_{ l'k}  } "3-4";"2-13" };
 \endxy
\end{equation}
where $H_\alpha:=H_{A_\alpha,B_\alpha}(\gamma^{(\alpha)})$. We do not draw the $H$-elements    corresponding to $\gamma^{(l')}$, $\gamma^{(k')}$, $\gamma^{(p')}$.  Now consider  the eight rectangles adjacent to $H_k$ in (\ref{eq:glue9}).
We apply the $2$-cocycle  condition
(\ref{eq:associativity})-(\ref{eq:associativity-diagrams}) to change two rectangles (corresponding to the dotted ones in the following diagram) to get the following diagram (we denote $\gamma^\sharp:=\gamma^{(k)\sharp}, \sharp=u,d,l,r$):
\begin{equation}\label{eq:hol-extend}
     \xy  0;/r.23pc/:
     (37, 82)*+{\scriptscriptstyle f_{l'lk' }^{-1}  }="01";
     (42, 7)*+{\scriptscriptstyle f_{lij}^{-1}  }="02";
     (97, 22)*+{ \scriptscriptstyle f_{kjp }^{-1}   }="01";(102, 67)*+{ \scriptscriptstyle f_{k'kp}^{-1}    }="04";
(30,  0)*+{\scriptscriptstyle \bullet }="4";
(60, 0 )*+{\scriptscriptstyle \bullet  }="13";
(90,  0)*+{\scriptscriptstyle \bullet  }="14";
{\ar@{->}_{\scriptscriptstyle  F_{A_j} (\gamma^d) } "13";"14" };
{\ar@{->}_{\scriptscriptstyle g_{ij}(y_2 ) } "4";"13" };
(120, 0 )*+{\scriptscriptstyle \bullet  }="113";
{\ar@{-->}_{\scriptscriptstyle g_{jm}(y_3)  } "14";"113" };
(30, 30)*+{\scriptscriptstyle \bullet  }="2-2";
(30, 60)*+{\scriptscriptstyle \bullet  }="2-4";
{\ar@{->}_{\scriptscriptstyle  F_{A_l}(\gamma^l) }  "2-4";"2-2" };
(60,30 )*+{\scriptscriptstyle \bullet  }="2-11";
(90, 30)*+{\scriptscriptstyle \bullet  }="2-12";
(60,60 )*+{\scriptscriptstyle \bullet  }="2-13";
(90, 60)*+{\scriptscriptstyle \bullet  }="2-14";
{\ar@{->}|-{\scriptscriptstyle  F_{A_k}(\gamma^d ) } "2-11";"2-12" };
{\ar@{->}|-{\scriptscriptstyle  F_{A_k} (\gamma^l )  } "2-13";"2-11" };
{\ar@{->}|-{\scriptscriptstyle  F_{A_k}(\gamma^r )  }  "2-14";"2-12" };
{\ar@{->}|-{\scriptscriptstyle  F_{A_k}(\gamma^u)   } "2-13";"2-14" };
{\ar@{=>}^{\scriptscriptstyle H_{ k}} (86,56);(64, 32) };
{\ar@{-->}|-{\scriptscriptstyle g_{lk}(z_2 ) } "2-4";"2-13" };
{\ar@{->}|-{\scriptscriptstyle g_{lk}(y_2 )}  "2-2";"2-11" };
{\ar@{=>}^{\scriptscriptstyle \psi_{lk}^{-1}} (56,56);(34,32) };
(120,30 )*+{\scriptscriptstyle \bullet  }="2-111";
(120,60 )*+{\scriptscriptstyle \bullet  }="2-113";
{\ar@{->}^{\scriptscriptstyle  F_{A_p}(\gamma^r)  } "2-113";"2-111" };
{\ar@{->}_{\scriptscriptstyle g_{kp}(z_3)  } "2-14";"2-113" };
{\ar@{-->}|-{\scriptscriptstyle g_{kp}(y_3)  } "2-12";"2-111" };
{\ar@{=>}^{\scriptscriptstyle \psi_{kp}^{-1}} (116,56);(94, 32) };
{\ar@{->}_{\scriptscriptstyle  g_{ li}(y_2 ) } "2-2";"4" };
{\ar@{->}|-{\scriptscriptstyle  g_{ kj}  } "2-11";"13" };
{\ar@{-->}|-{\scriptscriptstyle  g_{ kj}  } "2-12";"14" };
{\ar@{=>}^{\scriptscriptstyle \psi_{kj} } (86,26);( 64, 2) };
{\ar@{<--}_{\scriptscriptstyle  g_{ pm}(y_3 ) }"113";"2-111" };
{\ar@{->}|-{\scriptscriptstyle  g_{ lj} } "2-2";"13" };
{\ar@{=>}_{\scriptscriptstyle f_{lkj}  } (58,28);( 46,16) };
{\ar@{=>}^{ } (43,13);(32, 2) };
{\ar@{<--}|-{\scriptscriptstyle  g_{j p} } "2-111";"14" };
{\ar@{==>}^{\scriptscriptstyle f_{jpm}  } (116,20);(105, 2) };
{\ar@{==>}_{} (105,26);(92, 5) };
(30,90 )*+{\scriptscriptstyle \bullet  }="3-4";
(60, 90)*+{\scriptscriptstyle \bullet  }="3-13";
(90,90 )*+{\scriptscriptstyle \bullet  }="3-14";
(120,90 )*+{\scriptscriptstyle \bullet  }="3-113";
{\ar@{-->}^{\scriptscriptstyle  g_{ l'k'}(z_2) } "3-4";"3-13" };{\ar@{->}^{\scriptscriptstyle  F_{A_{  k'}}(\gamma^u) } "3-13";"3-14" };{\ar@{->}^{ \scriptscriptstyle g_{  k'p'}(z_3) } "3-14";"3-113" };
{\ar@{-->}_{\scriptscriptstyle  g_{ l'l}(z_2) } "3-4";"2-4" };{\ar@{-->}|-{\scriptscriptstyle  g_{  k'k}  } "3-13";"2-13" };{\ar@{->}|-{\scriptscriptstyle  g_{   k'k}  } "3-14";"2-14" };{\ar@{->}^{\scriptscriptstyle  g_{   p'p}(z_3) } "3-113";"2-113" };
{\ar@{==>}^{\scriptscriptstyle f_{lk'k}   } (56,80);(45,62) };
{\ar@{==>}_{} (45,86);(32,65) };
{\ar@{=>}_{\scriptscriptstyle f_{k'p'p}  } (118,88);( 106,76) };
{\ar@{=>}^{ } (103,73);(92,62) };{\ar@{=>}^{\scriptscriptstyle \psi_{ k'k}} (86,86);(64, 62) };
{\ar@{<--}|-{\scriptscriptstyle  g_{ lk' } } "3-13";"2-4" };{\ar@{->}|-{\scriptscriptstyle  g_{ k'p } } "3-14";"2-113" };
 \endxy.
\end{equation}

If we use the local $2$-connection $(A_q,B_q)$ over the coordinate chart $U_q$ instead of  the local $2$-connection $(A_k,B_k)$ over the coordinate chart $U_k$, we claim that the $2$-holonomies
are the same.
\begin{equation}\label{eq:hol-replace}
     \xy0;/r.23pc/:
(0,0 )*+{\scriptscriptstyle \bullet}="1";
(40, 0)*+{\scriptscriptstyle \bullet}="2";
( 15,15)*+{\scriptscriptstyle \bullet}="3";
(55,15)*+{\scriptscriptstyle \bullet}="4";
(0,40 )*+{\scriptscriptstyle \bullet}="5";
(40, 40)*+{\scriptscriptstyle \bullet}="6";
(  15,55)*+{\scriptscriptstyle \bullet}="7";
(55,55)*+{\scriptscriptstyle \bullet}="8";
{\ar@{-->}|-{\scriptscriptstyle F_{A_k} (\gamma^d) } "1";"2" };
{\ar@{<--} |-{\scriptscriptstyle F_{A_k} (\gamma^l) }"1";"3" };
{\ar@{-->}|-{ } "5";"1" };
{\ar@{-->} "4";"2"|-{ \scriptscriptstyle F_{A_k} (\gamma^r) } };
{\ar@{-->}|-{ }  "6";"2" };
{\ar@{-->}|-{ } "7";"3" };
{\ar@{-->}|-{\scriptscriptstyle F_{A_k} (\gamma^u) } "3";"4" };
{\ar@{-->}^{  } "8";"4" };
{\ar@{->}|-{\scriptscriptstyle F_{A_q} (\gamma^d) } "5";"6" };
{\ar@{->}|-{ \scriptscriptstyle F_{A_q} (\gamma^l) } "7";"5" };
{\ar@{->}|-{\scriptscriptstyle F_{A_q} (\gamma^r) } "8";"6" };
{\ar@{->}^{ \scriptscriptstyle F_{A_q} (\gamma^u) } "7";"8" };
(-50,0 )*+{\scriptscriptstyle \bullet}="11";
( 80, 0)*+{\scriptscriptstyle \bullet}="12";
( -35,15)*+{\scriptscriptstyle \bullet}="13";
(95,15)*+{\scriptscriptstyle \bullet}="14";
{\ar@{-->}|-{ } "11";"1" };
{\ar@{-->}|-{ } "13";"3" };
{\ar@{-->}|-{ } "2";"12" };
{\ar@{-->}|-{ } "4";"14" };
{\ar@{->}|-{ } "13";"11" };
{\ar@{->}^{\scriptscriptstyle F_{A_p} (\gamma^r) } "14";"12" };
(-15,-15 )*+{\scriptscriptstyle \bullet}="01";
(25,-15)*+{\scriptscriptstyle \bullet}="02";
( 30,30)*+{\scriptscriptstyle \bullet}="03";
(70,30)*+{\scriptscriptstyle \bullet}="04";
(-65,-15 )*+{\scriptscriptstyle \bullet}="011";
( 65, -15)*+{\scriptscriptstyle \bullet}="012";
( -20,30)*+{\scriptscriptstyle \bullet}="013";
(110,30)*+{\scriptscriptstyle \bullet}="014";
{\ar@{<--}|-{ } "01";"1" };{\ar@{->}_{\scriptscriptstyle  g_{lj}(y_2)} "11"; "01"};{\ar@{<-}|-{ } "01";"5" }; {\ar@{->}_{\scriptscriptstyle F_{A_j} (\gamma^d) } "01";"02" };
{\ar@{<-}|-{ } "011";"11" };{\ar@{->}|-{ } "011";"01" };
{\ar@{<--}|-{ } "02";"2" };{\ar@{<-}|-{ } "02";"6" }; {\ar@{->}|-{ } "02";"012" }; {\ar@{->}_{\scriptscriptstyle  g_{jp}(y_3) } "02";"12" };
{\ar@{<-}|-{ } "012";"12" };
{\ar@{<-}|-{ } "12";"6" };
{\ar@{->}|-{ } "13";"7" };
{\ar@{->}|-{ } "11";"5" };
{\ar@{-->}|-{ } "03";"04" };{\ar@{-->}|-{ } "03";"3" };{\ar@{-->}|-{ } "03";"7" };{\ar@{<--}|-{ } "03";"013" };{\ar@{-->}|-{ }"13"; "03"};
{\ar@{<-}|-{ } "13";"013" };
{\ar@{-->}|-{ } "04";"014" };{\ar@{-->}|-{ } "04";"8" }; {\ar@{-->}|-{ } "04";"4" };
{\ar@{->}|-{ } "014";"14" };
{\ar@{<-}|-{ } "14";"8" };{\ar@{-->}|-{ } "04";"14" };
 \endxy
\end{equation}Namely in (\ref{eq:hol-replace}) the $2$-arrow in $\mathcal{G}$ represented by the bottom $2$-cells (i.e. diagram (\ref{eq:hol-extend})) is the same as the $2$-arrow represented by the upper $2$-cells, which is the same as the  diagram (\ref{eq:hol-extend}) with subscript $k$ replaced by $q$.
In (\ref{eq:hol-replace})  there is only one  cube
\begin{equation}\label{eq:hol-replace-cu}
    \xy 0;/r.27pc/:
(0,0 )*+{\scriptscriptstyle \bullet}="1";
(40, 0)*+{\scriptscriptstyle \bullet}="2";
( 15,15)*+{\scriptscriptstyle \bullet}="3";
(55,15)*+{\scriptscriptstyle \bullet}="4";
(0,40 )*+{\scriptscriptstyle \bullet}="5";
(40, 40)*+{\scriptscriptstyle \bullet}="6";
(  15,55)*+{\scriptscriptstyle \bullet}="7";
(55,55)*+{\scriptscriptstyle \bullet}="8";(7,19)*+{\scriptscriptstyle \psi_{qk}(\gamma^l) }="08";
(50,33)*+{\scriptscriptstyle \psi_{qk}^{-1}(\gamma^r) }="18";
{\ar@{-->}_{\scriptscriptstyle F_{A_k} (\gamma^d) } "1";"2" };
{\ar@{<--}^{\scriptscriptstyle F_k^l }"1";"3" };
{\ar@{-->}_{ \scriptscriptstyle g_2} "5";"1" };
{\ar@{-->} "4";"2"^{ \scriptscriptstyle   F_{A_k} (\gamma^r)} };
{\ar@{-->}_{ \scriptscriptstyle   g_3}  "6";"2" };
{\ar@{-->}_{g_2'  } "7";"3" };
{\ar@{-->}_{\scriptscriptstyle F_k^u } "3";"4" };
{\ar@{-->}^{\scriptscriptstyle g_3' } "8";"4" };
{\ar@{->}_{\scriptscriptstyle F_q^d } "5";"6" };
{\ar@{->}_{ \scriptscriptstyle F_{A_q} (\gamma^l) } "7";"5" };
{\ar@{->}^{\scriptscriptstyle F_q^r} "8";"6" };
{\ar@{->}^{ \scriptscriptstyle F_{A_q} (\gamma^u) } "7";"8" };
{\ar@{==>}^{H_k  } (50,13);( 5,2) };{\ar@{=>}^{H_q  } (50,53);(5,42) };
{\ar@{<==}_{  } (42,38);(53,17) };{\ar@{==>}_{} ( 2,38);(13,17) };
{\ar@{==>}_{\scriptscriptstyle \psi_{qk}^{-1}(\gamma^u)  } (17,17);( 42,45) };
  \endxy
\end{equation}which is the commutative cube (\ref{eq:cube}) of the $2$-gauge transformation
from   the local $2$-holonomy $H_{A_q,B_q}$ to $H_{A_k,B_k}$
by
Proposition \ref{prop:transformation-2-holonomies}, where
\begin{equation*}
\begin{array}
    {llll}
   g_2= g_{qk}(y_2) ,&  g_3= g_{qk}(y_3) ,&  g_2'= g_{qk}(z_2) ,&  g_3'= g_{qk}(z_3) ,\\ F_k^l= F_{A_k} (\gamma^l),\qquad & F_k^u= F_{A_k} (\gamma^u),\qquad  & F_q^d= F_{A_q} (\gamma^d),\qquad & F_q^r= F_{A_q} (\gamma^r),
 \end{array}\end{equation*}
and the front   face represents the $2$-arrows given by $\psi_{qk}(\gamma^d)$. There are four   compatibility   cylinders in (\ref{eq:hol-replace})
\begin{equation}\label{eq:hol-replace-c}
     \xy 0;/r.27pc/:
(0,0 )*+{\scriptscriptstyle \bullet}="1";
(40, 0)*+{\scriptscriptstyle \bullet}="2";
(0,40 )*+{\scriptscriptstyle \bullet}="5";
(40, 40)*+{\scriptscriptstyle \bullet}="6";
{\ar@{-->}^{\scriptscriptstyle F_{A_k} (\gamma^d) } "1";"2" };
{\ar@{-->}|-{\scriptscriptstyle g_{qk}(y_2) } "5";"1" };
{\ar@{-->}^{\scriptscriptstyle g_{qk}(y_3) }  "6";"2" };
{\ar@{->}^{\scriptscriptstyle F_{A_q} (\gamma^d) } "5";"6" };
(-15,-15 )*+{\scriptscriptstyle \bullet}="01";
(25,-15)*+{\scriptscriptstyle \bullet}="02";
{\ar@{-->}|-{} "1"; "01"};{\ar@{<-}^{ \scriptscriptstyle g_{qj}(y_2)} "01";"5" }; ;{\ar@{->}_{\scriptscriptstyle F_{A_j} (\gamma^d) } "01";"02" };
{\ar@{-->}^{\scriptscriptstyle  g_{k j}(y_3) } "2"; "02"};{\ar@{<-}|-{  } "02";"6" };
{\ar@{<==}_{\scriptscriptstyle \psi_{qk}^{-1}(\gamma^d) } (36,38);( 3,3) };{\ar@{==>}^{\scriptscriptstyle \psi_{kj}(\gamma^d) } (33,-2);( -13,-13) };
{\ar@{==>}|-{\scriptscriptstyle f_{qkj} (y_3)} (38, 2);( 33,13) };{\ar@{<==}|-{\scriptscriptstyle f_{qkj}^{-1} (y_2) } (-2, 2);( -7,13) };
 \endxy\quad
\xy0;/r.27pc/:
(40, -15)*+{\scriptscriptstyle \bullet}="2";
(55,0)*+{\scriptscriptstyle \bullet}="4";
(40,25 )*+{\scriptscriptstyle \bullet}="6";
(55,40  )*+{\scriptscriptstyle \bullet}="8";(46,4  )*+{\scriptscriptstyle \psi_{qk} (\gamma^r)}="08";
{\ar@{-->} "4";"2"^{  } };
{\ar@{-->}_{\scriptscriptstyle g_{qk}(y_3) }  "6";"2" };
{\ar@{-->}^{  } "8";"4" };
{\ar@{->}_{\scriptscriptstyle F_{A_q} (\gamma^r) } "8";"6" };
( 80, -15)*+{\scriptscriptstyle\bullet}="12";
(95,0)*+{\scriptscriptstyle\bullet}="14";
{\ar@{-->}_{ \scriptscriptstyle g_{kp}(y_3)} "2";"12" };
{\ar@{-->}|-{ } "4";"14" };
{\ar@{->}^{\scriptscriptstyle F_{A_p} (\gamma^r) } "14";"12" };
{\ar@{<-}|-{ } "12";"6" };
{\ar@{<-}_{\scriptscriptstyle g_{qp}(z_3) } "14";"8" };
{\ar@{=>}_{\scriptscriptstyle \psi_{qp}^{-1}(\gamma^r) } (88,5);( 43,23) };{\ar@{==>}_{ } (43, 23);( 53,0) };
{\ar@{==>}^{\scriptscriptstyle \psi_{kp}^{-1} (\gamma^r) } (88,-3);( 43,-13) };{\ar@{==>}_{\scriptscriptstyle f_{ qkp}  (z_3)} ( 58 ,3 );( 75 ,13 ) };
 \endxy   \cdots\cdots
\end{equation}which
 are commutative by    Proposition \ref{prop:modification}. Here the front face of the first  cylinder represents the $2$-arrow  given by $\psi_{qj}(\gamma^d)$ and the front triangle of the second cylinder represents the $2$-arrow  given by $f_{ qkp}^{-1}   (y_3)$. There are four $2$-cocycle  tetrahedra  in (\ref{eq:hol-replace})
\begin{equation}\label{eq:hol-replace-t}
    \xy0;/r.24pc/:
(0,0 )*+{\scriptstyle k}="1";
(0,40 )*+{\scriptstyle q}="5";
{\ar@{-->}^{\scriptscriptstyle g_{qk}(y_2)  } "5";"1" };
(-50,0 )*+{\scriptstyle l}="11";
{\ar@{-->}|-{   } "11";"1"};
(-15,-15 )*+{\scriptstyle j}="01";
{\ar@{-->}^{ \scriptscriptstyle  g_{kj}(y_2) }"1";"01" };
{\ar@{->}_{\scriptscriptstyle  g_{lj}(y_2)  }"11"; "01"};
{\ar@{<-}|-{\scriptscriptstyle g_{q j}(y_2)  } "01";"5" };
{\ar@{->}^{\scriptscriptstyle  g_{ l q}(y_2) } "11";"5" };
{\ar@{=>}|-{\scriptscriptstyle f_{qkj}  } (-2, 2);( -7,10) };
{\ar@{<=}_{\scriptscriptstyle f_{lqj}  } (-35, -3);( -11,19) };
{\ar@{==>}^{\scriptscriptstyle f_{l qk}^{-1}    } (-25, 9);( -7,29) };
{\ar@{==>}^{\scriptscriptstyle f_{lkj}   } (-5, -2);( -28,-9) };
( -22,-22)*+{ \scriptstyle    {\rm The \hskip 2mm tetrahedron}\,\, T^l_{qkj}\hskip 2mm {\rm at}\hskip 2mm{\rm point} \hskip 2mm y_2  }="30";
 \endxy
 \qquad   \xy0;/r.24pc/:
(40, 0)*+{\scriptstyle k}="2";(34,3)*+{ \scriptscriptstyle f_{qkj}^{-1} }="20";
(40, 40)*+{\scriptstyle q}="6";
{\ar@{-->}|-{\scriptscriptstyle   }  "6";"2" };
( 80, 0)*+{\scriptstyle p}="12";
{\ar@{-->}|-{\scriptscriptstyle   } "2";"12" };
(25,-15)*+{\scriptstyle j}="02";
{\ar@{-->}|-{\scriptscriptstyle  }"2";"02" };{\ar@{<-}^{\scriptscriptstyle  g_{q j}(y_3)  } "02";"6" };   {\ar@{->}_{\scriptscriptstyle   g_{j p}(y_3) } "02";"12" };
 {\ar@{<-}_{\scriptscriptstyle   g_{q p}(y_3) } "12";"6" };{\ar@{<==}|-{} ( 39,1);( 33 ,13 ) };
 {\ar@{==>}_{\scriptscriptstyle f_{ kjp}^{-1}  } ( 58 ,-2 );( 33 ,-10 ) };
 {\ar@{==>}_{\scriptscriptstyle f_{ qkp}   } ( 43 ,3 );( 55 ,20 ) };( 50,-22)*+{  \scriptstyle   {\rm The \hskip 2mm tetrahedron}\,\, T^q_{kjp} \hskip 2mm {\rm at}\hskip 2mm{\rm point} \hskip 2mm y_3}="30";
 \endxy\quad  \cdots\cdots
 \end{equation}
which are commutative  by  the $2$-cocycle condition (\ref{eq:tetrahedron}) at points $y_2,y_3,z_2,z_3$,  respectively (the front triangle of the second tetrahedron represents the $2$-arrows given by  $f_{ qjp}^{-1} (y_3)$). The commutativity of  a cube, a cylinder  or a tetrahedron means that the bottom $2$-arrow  is equal  to the composition of the remaining $2$-arrows. By (\ref{eq:hol-replace-cu})-(\ref{eq:hol-replace-t}), it is easy to see that $2$-arrows represented by vertical $2$-cells in (\ref{eq:hol-replace})  appear  twice and in   reverse directions, and so they  cancel out. Hence,
$2$-arrows represented by the upper and bottom $2$-cells in (\ref{eq:hol-replace}) must coincide.

If a rectangle   $\Box_{a0}$ is contained in $U_k$, which is adjacent to the upper boundary of $[0,1]^2$, and the local $2$-connection $(A_k,B_k)$ over the open set $U_k$ is replaced by  the local $2$-connection $(A_q,B_q)$ over the open set $U_q$, we have the following commutative $3$-cells:
\begin{equation}\label{eq:hol-replace-3}
     \xy 0;/r.16pc/:
(0,0 )*+{\scriptscriptstyle \bullet}="1";
(40, 0)*+{\scriptscriptstyle \bullet}="2";
( 15,15)*+{\scriptscriptstyle \bullet}="3";
(55,15)*+{\scriptscriptstyle \bullet}="4";
(0,40 )*+{\scriptscriptstyle \bullet}="5";
(40, 40)*+{\scriptscriptstyle \bullet}="6";
(  15,55)*+{\scriptscriptstyle \bullet}="7";
(55,55)*+{\scriptscriptstyle \bullet}="8";
{\ar@{-->}_{ } "1";"2" };
{\ar@{<--} |-{ }"1";"3" };
{\ar@{-->}|-{ } "5";"1" };
{\ar@{-->} "4";"2"^{  } };
{\ar@{-->}|-{ }  "6";"2" };
{\ar@{-->}|-{ } "7";"3" };
{\ar@{-->}|-{ } "3";"4" };
{\ar@{-->}^{  } "8";"4" };
{\ar@{->}_{\scriptscriptstyle F_{A_q} (\gamma^d) } "5";"6" };
{\ar@{->}_{ \scriptscriptstyle F_{A_q} (\gamma^l) } "7";"5" };
{\ar@{->}|-{\scriptscriptstyle F_{A_q} (\gamma^r) } "8";"6" };
{\ar@{->}^{ \scriptscriptstyle F_{A_q} (\gamma^u) } "7";"8" };
(-50,0 )*+{\scriptscriptstyle \bullet}="11";
( 80, 0)*+{\scriptscriptstyle \bullet}="12";
( -35,15)*+{\scriptscriptstyle \bullet}="13";
(95,15)*+{\scriptscriptstyle \bullet}="14";
{\ar@{-->}|-{ } "11";"1" };
{\ar@{-->}|-{ } "13";"3" };
{\ar@{-->}|-{ } "2";"12" };
{\ar@{-->}|-{ } "4";"14" };
{\ar@{->}|-{ } "13";"11" };
{\ar@{->}^{\scriptscriptstyle F_{A_p} (\gamma^u) } "14";"12" };
(-15,-15 )*+{\scriptscriptstyle \bullet}="01";
(25,-15)*+{\scriptscriptstyle \bullet}="02";
(-65,-15 )*+{\scriptscriptstyle \bullet}="011";
( 65, -15)*+{\scriptscriptstyle \bullet}="012";
{\ar@{-->}|-{ } "1";"01" };{\ar@{<-}^{\scriptscriptstyle  g_{lj}(y_2)} "01";"11" };{\ar@{<-}|-{ } "01";"5" }; {\ar@{->}_{\scriptscriptstyle F_{A_j} (\gamma^d) } "01";"02" };
{\ar@{->}|-{ }"11";"011" };{\ar@{->}|-{ } "011";"01" };
{\ar@{-->}|-{ } "2"; "02"};{\ar@{<-}|-{ } "02";"6" }; {\ar@{->}|-{ } "02";"012" }; {\ar@{->}_{\scriptscriptstyle  g_{jp}(y_3) } "02";"12" };
{\ar@{->}|-{ } "12"; "012"};
{\ar@{<-}|-{ } "12";"6" };
{\ar@{->}|-{ } "13";"7" };
{\ar@{->}|-{ } "11";"5" };
{\ar@{<-}|-{ } "14";"8" };
 \endxy
\end{equation}
In this case, we have an extra $2$-arrow:
 \begin{equation}\label{eq:hol-replace-2}
     \xy 0;/r.13pc/:
(0,0 )*+{\scriptscriptstyle \bullet}="1";
(40, 0)*+{\scriptscriptstyle \bullet}="2";
(0,40 )*+{\scriptscriptstyle \bullet}="5";
(40, 40)*+{\scriptscriptstyle \bullet}="6";
{\ar@{->}_{ } "1";"2" };
{\ar@{->}|-{ } "5";"1" };
{\ar@{->}^{\scriptscriptstyle F_{A_q} (\gamma^d) } "5";"6" };
(-50,0 )*+{\scriptscriptstyle \bullet}="11";
( 80, 0)*+{\scriptscriptstyle \bullet}="12";
{\ar@{->}|-{ } "11";"1" };
{\ar@{->}|-{ } "2";"12" };
{\ar@{<-}|-{ } "12";"6" };
{\ar@{->}|-{ } "11";"5" };{\ar@{->}|-{ }  "6";"2" };
{\ar@{==>}|-{ } ( 36 ,36 );( 3 ,3 ) };{\ar@{==>}|-{ } ( 60 ,18 );( 43 ,3 ) };{\ar@{==>}|-{ } ( -4 ,20 );( -40 ,3 ) };
 \endxy
\end{equation}
  whose $H$-element is denoted by $h_0$.
Meanwhile, $\Box_{aM}$ is in the same  open set $U_k$,  which is adjacent to the lower boundary of $[0,1]^2$. When the local $2$-connection $(A_k,B_k)$ over the open set $U_k$ is replaced by  the local $2$-connection $(A_q,B_q)$ over the open set $U_q$, we have the following $3$-cells: \begin{equation}\label{eq:hol-replace-1}
     \xy 0;/r.17pc/:
(0,0 )*+{\scriptscriptstyle \bullet}="1";
(40, 0)*+{\scriptscriptstyle \bullet}="2";
( 15,15)*+{\scriptscriptstyle \bullet}="3";
(55,15)*+{\scriptscriptstyle \bullet}="4";
(0,40 )*+{\scriptscriptstyle \bullet}="5";
(40, 40)*+{\scriptscriptstyle \bullet}="6";
(  15,55)*+{\scriptscriptstyle \bullet}="7";
(55,55)*+{\scriptscriptstyle \bullet}="8";
{\ar@{->}_{ \scriptscriptstyle F_{A_k} (\gamma^u)} "1";"2" };
{\ar@{<--} |-{ }"1";"3" };
{\ar@{->}|-{ } "5";"1" };
{\ar@{-->} "4";"2"^{  } };
{\ar@{->}|-{ }  "6";"2" };
{\ar@{-->}|-{ } "7";"3" };
{\ar@{-->}|-{ } "3";"4" };
{\ar@{-->}^{  } "8";"4" };
{\ar@{->}|-{\scriptscriptstyle F_{A_q} (\gamma^d) } "5";"6" };
{\ar@{->}|-{   } "7";"5" };
{\ar@{->}|-{\scriptscriptstyle F_{A_q} (\gamma^r) } "8";"6" };
{\ar@{->}^{ \scriptscriptstyle F_{A_q} (\gamma^u) } "7";"8" };
(-50,0 )*+{\scriptscriptstyle \bullet}="11";
( 80, 0)*+{\scriptscriptstyle \bullet}="12";
( -35,15)*+{\scriptscriptstyle \bullet}="13";
(95,15)*+{\scriptscriptstyle \bullet}="14";
{\ar@{->}|-{ } "11";"1" };
{\ar@{-->}|-{ } "13";"3" };
{\ar@{->}|-{ } "2";"12" };
{\ar@{-->}|-{ } "4";"14" };
{\ar@{->}|-{ } "13";"11" };
{\ar@{->}^{\scriptscriptstyle F_{A_p} (\gamma^r) } "14";"12" };
( 30,30)*+{\scriptscriptstyle \bullet}="03";
(70,30)*+{\scriptscriptstyle \bullet}="04";
( -20,30)*+{\scriptscriptstyle \bullet}="013";
(110,30)*+{\scriptscriptstyle \bullet}="014";
{\ar@{<-}|-{ } "12";"6" };
{\ar@{->}|-{ } "13";"7" };
{\ar@{->}|-{ } "11";"5" };
{\ar@{-->}|-{ } "03";"04" };{\ar@{-->}|-{ } "03";"3" };{\ar@{-->}|-{ } "03";"7" };{\ar@{<--}|-{ } "03";"013" };{\ar@{<--}|-{ } "03";"13" };
{\ar@{<--}|-{ } "13";"013" };
{\ar@{-->}|-{ } "04";"014" };{\ar@{-->}|-{ } "04";"8" }; {\ar@{-->}|-{ } "04";"4" };
{\ar@{->}|-{ } "014";"14" };
{\ar@{<-}|-{ } "14";"8" };{\ar@{-->}|-{ } "04";"14" };
 \endxy
\end{equation}
with an  extra $2$-arrow represented by the front $2$-cells, which is the inverse of the $2$-arrow in (\ref{eq:hol-replace-2}). Its $H$-element is   $h_0^{-1}$. Thus after $U_k$ replaced by $U_q$, ${\rm Hol}_\gamma$
is changed to
\begin{equation*}
   g_1\rhd h_0 \cdot {\rm Hol}_\gamma \cdot g_2\rhd h_0^{-1} \sim  h_0\cdot {\rm Hol}_\gamma\cdot  h_0^{-1}= Ad_{h_0  }{\rm Hol}_\gamma =\alpha\left(h_0  \right) \rhd {\rm Hol}_\gamma\sim {\rm Hol}_\gamma
\end{equation*} in $H/[G ,H]$, for some $g_1,g_2\in G$. Here $g_j\rhd$'s represent  whiskering by some $1$-arrows.

 If a mapping
 $\gamma :\Box_{ab} \longrightarrow U_{\alpha} $ is divided into four $4$ adjacent rectangles $\gamma^{(i)}, \gamma^{(j)},\gamma^{(k)}$ and $\gamma^{(l)} $ as
in (\ref{eq:4-adjacent}). We have a local $2$-holonomy associated to each small rectangle in $U_\alpha$. The local $2$-holonomy   ${\rm Hol}(\gamma|_{\Box_{ab}})$
is the composition of four local $2$-holonomies ${\rm Hol}(\gamma^{( \alpha)})$'s   by using composition formulae (\ref{eq:H_AB-t})-(\ref{eq:H_AB-s})
in Lemma \ref{eq:H_AB-t}. So Hol$_\gamma$ is invariant under  the refinement of a  division. For any two different  divisions of the square $[0,1]^2$, we can refine them to get a common refinement. Therefore Hol$_\gamma$ is independent of  the division we choose.

If we choose another coordinate charts $\{U_i'\}$, then $\{U_i \}\cup\{U_i'\}$ are also coordinates charts. By the above result, the   global  $2$-holonomy constructed by coordinate charts $\{U_i \}$ is the same as that by  $\{U_i \}\cup\{U_i'\}$. So it is the same as   that constructed by  $ \{U_i'\}$.

 \subsection{Independence of      reparametrization} We sketch the proof.
A loop $\gamma$ in  the loop space $\mathcal{L}M$  is given by a family of loops $\gamma_s:[0,1] \rightarrow M$ with $\gamma_s(0)=\gamma_s(1)$, for $s\in [0,1]$, and $\gamma_0\equiv \gamma_1$. A reparametrization of such a loop is given by a mapping
\begin{equation*}
   \Xi:[0,1]^2\rightarrow[0,1]^2 ,  \qquad (t',s' )\mapsto (\alpha(t',s' ),\beta(s' )).
\end{equation*}We must have
\begin{equation}\label{eq:reparametrization}
   \frac {d \beta}{d s'} (s' )>0, \qquad \frac {\partial\alpha}{\partial t'}(t',s' )>0 ,
\end{equation}
since $\Xi$ must map a loop to a loop. Here we assume first that the starting points of loops $\gamma_s$ are fixed for each $s$. Namely, $\Xi$   maps the left and right boundaries of  $[0,1]^2$ to themselves.

 Let $[0,1]^2$ be divided into   rectangles $ \Box_{ab} $'s. The pull back quadrilateral $\widetilde{\Box}_{ab}:=\Xi^*\Box_{ab}$ may have   curved left and right boundaries, but its upper and lower boundaries must be  straight.
\begin{equation}\label{eq:curved-square-1}
     \xy 0;/r.10pc/:
(-10,0 )*+{  }="1";
(65, 0)*+{ }="2";
(0,30 )*+{  }="3";
(50, 30)*+{  }="4";
(25,15 )*+{\scriptscriptstyle\widetilde{\Box}_{1}  }="5";
{\ar@{->}_{ } "1";"2" };
{\ar@/_0.55pc/ "3";"1"  };
{\ar@/^0.55pc/  "4";"2"  };
{\ar@{->}^{  } "3";"4" };
(120, 0)*+{  }="02";
(100, 30)*+{ }="04";(85,15 )*+{\scriptscriptstyle\widetilde{\Box}_{2}   }="05";
{\ar@{->}^{   } "2";"02" };{\ar@{->}^{   } "4";"04" };
{\ar@/^0.55pc/  "04";"02"  };
\endxy
\xy
  (0,0)*+{ }="1";
(15,0)*+{ }="2";
{\ar@{->}^{ \Xi}   "1"+(0, 7);"2"+(0, 7)};
   \endxy
\xy
0;/r.10pc/:
(0,0 )*+{  }="1";
(50, 0)*+{ }="2";
(0,30 )*+{  }="3";
(50, 30)*+{  }="4";
(25,15 )*+{\scriptscriptstyle {\Box}_{1}  }="5";
{\ar@{->}_{ } "1";"2" };
{\ar@{->} "3";"1"  };
{\ar@{->}  "4";"2"  };
{\ar@{->}^{  } "3";"4" };
(100, 0)*+{  }="02";
(100, 30)*+{ }="04";(75,15 )*+{\scriptscriptstyle {\Box}_{2}  }="05";
{\ar@{->}^{   } "2";"02" };{\ar@{->}^{   } "4";"04" };
{\ar@{->}  "04";"02"  };
\endxy
\end{equation}where  $\widetilde{\Box}_{1}:=\widetilde{\Box}_{ab}$ and $\widetilde{\Box}_{2}:=\widetilde{\Box}_{(a+1)b}$.
Denote the composition $\widetilde{\gamma}:=\gamma\circ\Xi$. If   $[0,1]^2$ is divided into sufficiently small rectangles $ \Box_{ab} $'s, we can
assume the left and right boundaries of $\widetilde{\Box}_{ab} $  are described by functions  $t'= \kappa_j(s')$, $j=1,2$, such that $ \kappa_j$ is monotonic function of $s'$ by (\ref{eq:reparametrization}). Then we have
\begin{equation*}
   \widetilde{{\Box}}_{ab}:=\{(s',t'); s'\in (s_a',s_b'), t'\in (\kappa_1(s'),\kappa_2(s'))\} .
\end{equation*}

For $ \gamma|_{{\Box}_1} :{\Box}_1\rightarrow U_i$, we have the $H$-element of local $2$-holonomy $H_{A,B}^{\gamma|_{{\Box}_1} }(   s)$,  $ s  \in (s_a ,s_b )$,  satisfying ODE
 (\ref{eq:h-A-B}). Define $\widetilde H_{A,B} (  s')=H_{A,B}^{\gamma|_{{\Box}_1}}( \beta(s'))$, $ s' \in (s_a',s_b')$. Then it directly follows from the ODE satisfied by  $ H_{A,B}^{  \gamma|_{{\Box}_1}}(s)$   that
 \begin{equation}\label{eq:B'0}
    \frac d{ds'}\widetilde  H_{A,B}  ( s')= \widetilde H_{A,B} ( s')\widetilde{\mathscr B} (s'),
 \end{equation}
by changing variables,  where
\begin{equation}\label{eq:B'}\begin{split}
  \widetilde{\mathscr B} (s'):&
  =   \int_{\kappa_1(s')}^{\kappa_2(s')}   F_A (\widetilde{\gamma}^-_{\tau';s'})\rhd \widetilde{\gamma}^*B_{(\tau',s')}\left(\frac \partial {\partial \tau'},\frac \partial {\partial s'}\right)d\tau',
\end{split}\end{equation} $\widetilde{\gamma}^-_{\tau';s'}$ is defined similarly,
and  the pull back of $1$-holonomy is well defined. Namely,  we have
\begin{equation*}
    \frac d{dt'}F_A \left(\widetilde{\gamma}_{[\kappa_1(s'),t'];s'}\right)=F_A \left (\widetilde{\gamma}_{[\kappa_1(s'),t'];s'}\right)\widetilde{\gamma}^*A\left(\frac \partial {\partial t'}\right),
\end{equation*}
and
\begin{equation*}
    \frac d{ds'}F_A (\widetilde{\gamma}^l_{s'})=F_A (\widetilde{\gamma}^l_{s'})\widetilde{\gamma}^*A\left(X_{s'}\right),
\end{equation*}
where $\widetilde{\gamma}^l_{s'}$ is the restriction of $\widetilde{\gamma}$ to the curved  left boundary $\partial_l\widetilde{{\Box}}_1$ of the quadrilateral $ \widetilde{{\Box}}_1$,  and $X_{s'}=\kappa_1'(s')\partial_{t'}+\partial_{s'}$ is its tangential vector. The above equations imply that
  \begin{equation*}
    H_{A,B}^{ \widetilde{\gamma}|_{\widetilde{\Box}_1}}(  s')=\widetilde  H_{A,B}^{  }(  s')  .
 \end{equation*}
So it is sufficient to show that  we can use the pull back quadrilaterals $\widetilde{\Box}_{ab}$'s instead of rectangles to calculate the global $2$-holonomy of $\widetilde{\gamma}$.

Suppose that the images of $\widetilde{\gamma} $ over $\widetilde{\Box}_{1} $ and $\widetilde{\Box}_{2} $ are in the same coordinate chart $U_i $.
Exactly as Lemma \ref{lem:H_AB-t}, by using the ODE (\ref{eq:B'0})-(\ref{eq:B'}) satisfied by $\widetilde  H_{A,B} $,   we can prove the curved quadrilateral version of  composition formulae  for local $2$-holonomies,  similar to
(\ref{eq:H_AB-t}),
\begin{equation}\label{eq:curved-square-2}
  \xy 0;/r.17pc/:
(-10,0 )*+{  }="1";
(65, 0)*+{ }="2";
(0,30 )*+{  }="3";
(50, 30)*+{  }="4";(50,  0)*+{  }="04";
(25,15 )*+{\scriptscriptstyle\widetilde{\Box}_{1}'  }="5";(57,15 )*+{\scriptscriptstyle\widetilde{\Box}_{1}''  }="005";
{\ar@{->}_{ } "1";"04" };{\ar@{->}_{ } "04";"2" };{\ar@{-->}_{ } "4";"04" };
{\ar@/_0.55pc/ "3";"1"  };
{\ar@/^0.55pc/^{}  "4";"2"  };
{\ar@{->}^{  } "3";"4" };
(120, 0)*+{  }="02";
(100, 30)*+{ }="04";(85,15 )*+{\scriptscriptstyle\widetilde{\Box}_{2}   }="05";
{\ar@{->}^{   } "2";"02" };{\ar@{->}^{   } "4";"04" };
{\ar@/^0.55pc/  "04";"02"  };
\endxy,
\end{equation}namely, we have
\begin{equation}\label{eq:curved-composition}
\begin{split}&
 {\rm Hol}\left(\widetilde{\gamma}|_{\widetilde{\Box}_1'' }\right)   \#_1 {\rm Hol}\left(\widetilde{\gamma}|_{\widetilde{\Box}_1'  }\right)  ={\rm Hol}\left(\widetilde{\gamma}|_{\widetilde{\Box}_1  }\right),\\&{\rm Hol}\left(\widetilde{\gamma}|_{\widetilde{\Box}_1''\cup \widetilde{\Box}_2 }\right) \#_1  {\rm Hol}\left(\widetilde{\gamma}|_{\widetilde{\Box}_1'  }\right)  ={\rm Hol}\left(\widetilde{\gamma}|_{\widetilde{\Box}_2}\right ) \#_1  {\rm Hol}\left(\widetilde{\gamma}|_{  \widetilde{\Box}_1}\right).
\end{split} \end{equation} Here we omit the whiskering parts. Thus we can use $\widetilde{\Box}_{1}' $ and $\widetilde{\Box}_1''\cup \widetilde{\Box}_2 $ to calculate $2$-holonomy, whose  common boundary  is straight.

Now suppose the images of of $\widetilde{\gamma} $ over $\widetilde{\Box}_{1} $ and $\widetilde{\Box}_{2} $ are in   different coordinate charts $U_i$ and $U_j$, respectively. We have to add a transition $2$-arrow $\Psi_{ij}( \partial_r\widetilde{\Box}_{1}) $.
Note that the transition $2$-arrow along the interval $\partial_r {\Box}_{1}$ in  (\ref{eq:curved-square-1}) under the map $\gamma$ satisfies
\begin{equation*}
  \frac d{ds}\psi_{ij}(s)=F_A\left({\gamma}^r( s )\right)\rhd\gamma^*a_{ij}\left(\frac \partial {\partial s}\right)\psi_{ij}(s),
\end{equation*}where $ {\gamma}^r( s )$ is the restriction of $ {\gamma}$ to the  right  boundary $\partial_r {{\Box}}_1$ of $  {{\Box}}_1$.
By pulling back $\Xi$, we get $\widetilde{\psi}_{ij}(s'):={\psi}_{ij}(\beta(s'))$ satisfying
\begin{equation}\label{eq:pull-back-psi}
    \frac d{ds'}\widetilde{\psi}_{ij}(s')=F_A(\widetilde{\gamma}^r( s'))\rhd\widetilde{\gamma}^*a_{ij}(Y_{s'})\cdot\widetilde{\psi}_{ij}(s')
\end{equation}where $\widetilde{\gamma}^r( s')$ is the restriction of $\widetilde{\gamma}$ to the curved right  boundary $\partial_r\widetilde{{\Box}}_1$ of $ \widetilde{{\Box}}_1$,  and $Y_{s'}=\kappa_2'(s')\partial_{t'}+\partial_{s'}$ is its tangential vector. Let   $\Psi_{ij}( \partial_r\widetilde{\Box}_{1}) $ be the $2$-arrows given by $\widetilde{{\psi}}_{ij}(\beta(s'))$. \begin{equation}\label{eq:trans-replace}\begin{split}
     \xy 0;/r.17pc/:
(-10,0 )*+{  }="1";
(65, 0)*+{ }="2";
(0,30 )*+{  }="3";
(50, 30)*+{  }="4";
 (95, 0)*+{ }="12";(80, 30)*+{  }="14";
{\ar@{->}_{ } "1";"2" };
{\ar@/_0.55pc/ "3";"1"  };
{\ar@/^0.55pc/  "4";"2"  };{\ar@/^0.55pc/  "14";"12"  };
{\ar@{->}^{  } "3";"4" };
(150, 0)*+{  }="02";
(130, 30)*+{ }="04";
{\ar@{->}^{   } "12";"02" };{\ar@{->}^{   } "14";"04" };
{\ar@{-->}^{   } "2";"12" };{\ar@{-->}^{   } "4";"14" };
{\ar@/^0.55pc/  "04";"02"  };
{\ar@{==>}|-{\scriptscriptstyle \Psi_{ij}^{-1}( \partial_r\widetilde{\Box}_{1})  } ( 78 ,28 );( 67 ,3 ) };
{\ar@{=>}^{\scriptscriptstyle{\rm Hol}\left(\widetilde{\gamma}_i|_{  \widetilde{\Box}_1}\right)  } ( 48 ,28 );( -5 ,3 ) };
{\ar@{=>}^{\scriptscriptstyle{\rm Hol}\left(\widetilde{\gamma}_j|_{  \widetilde{\Box}_2}\right)  } (125 ,28 );( 98 ,3 ) };
\endxy
\end{split}\end{equation}
We claim that
 \begin{equation}\label{eq:trans-replace-2}\begin{split}
 \Psi_{ij}^{-1}\left( \partial_r\widetilde{\Box}_{1}\right)\#_1  {\rm Hol}\left(\widetilde{\gamma}_i|_{  \widetilde{\Box}_1}\right)={\rm Hol}\left(\widetilde{\gamma}_j|_{  \widetilde{\Box}_1''}\right)\#_1\Psi_{ij}^{-1}\left( \partial_r\widetilde{\Box}_{1}'\right)\#_1 {\rm Hol}\left(\widetilde{\gamma}_i|_{  \widetilde{\Box}_1' }\right),
  \end{split}\end{equation}
i.e. the composition of left two $2$-arrows in (\ref{eq:trans-replace})  is equal to the composition of the following three $2$-arrows \begin{equation}\label{eq:trans-replace-2'}   \xy 0;/r.17pc/:
(-10,0 )*+{  }="1";
(65, 0)*+{ }="2";
(0,30 )*+{  }="3";
(50, 30)*+{  }="4";(50,  0)*+{  }="02";(80,  0)*+{  }="002";
 (95, 0)*+{ }="12";(80, 30)*+{  }="14";
{\ar@{->}_{ } "1";"02" };{\ar@{->}_{ } "1";"02" };{\ar@{-->}_{ } "02";"002" };
{\ar@/_0.55pc/ "3";"1"  };
{\ar@{->}^{  }  "4";"02"  };{\ar@/^0.55pc/  "14";"12"  };
{\ar@{->}^{  } "3";"4" };
{\ar@{-->}^{   } "002";"12" };{\ar@{-->}^{   } "4";"14" };
{\ar@{==>}|-{\scriptscriptstyle\Psi_{ij}^{-1}\left( \partial_r\widetilde{\Box}_{1}'\right)} ( 78 ,28 );( 52 ,3 ) };
{\ar@{<-}^{   } "002";"14" };
{\ar@{=>}^{\scriptscriptstyle{\rm Hol}\left(\widetilde{\gamma}_i|_{  \widetilde{\Box}_1'}\right)  } ( 48 ,28 );( -5 ,3 ) };
{\ar@{=>}^{\scriptscriptstyle h''  } ( 89 ,17 );( 82 ,3 ) };
\endxy,
 \end{equation} where $ h'':={\rm Hol}\left(\widetilde{\gamma}_j|_{  \widetilde{\Box}_1''}\right)$.
The transition $2$-arrow along $\partial_r\widetilde{\Box}_{1}$ in (\ref{eq:trans-replace}) is replaced by the transition $2$-arrow along the straight interval $\partial_r\widetilde{\Box}_{1}'$ in (\ref{eq:trans-replace-2'}) .
To prove this claim, we divide $\widetilde{\Box}_{1}''$ in (\ref{eq:curved-square-2}) repeatedly to get the diagram
\begin{equation}\label{eq:curved-square-3}
    \xy
(-10,0 )*+{  }="1";
(65, 0)*+{ }="2";
(0,30 )*+{  }="3";
(50, 30)*+{  }="4";(50,  0)*+{  }="04";
(25,15 )*+{\scriptscriptstyle\widetilde{\Box}_{1}'  }="5";(57,15 )*+{   }="005";
{\ar@{->}_{ } "1";"04" };
{\ar@/_0.55pc/ "3";"1"  };
{\ar@/^0.55pc/^{ }  "4";"2"  };
{\ar@{->}^{  } "3";"4" };
(120, 0)*+{  }="02";
(100, 30)*+{ }="04";(85,15 )*+{\scriptscriptstyle\widetilde{\Box}_{2}   }="05";
{\ar@{->}^{   } "2";"02" };{\ar@{->}^{   } "4";"04" };
{\ar@/^0.55pc/  "04";"02"  };{\ar@{.>}^{   } (50, 30)*+{  };(50, 20)*+{  } };{\ar@{->}^{   } (50, 10)*+{  };(50,  0)*+{  } };{\ar@{->}_{  } (50, 20)*+{  };(50,  10)*+{  } };{\ar@{.>}^{   } (50, 20)*+{  };(58,  20)*+{  } };{\ar@{.>}^{   } (57, 20)*+{  };(57,  10)*+{  } };{\ar@{->}^{   } (50, 10)*+{  };(57,  10)*+{  } };
{\ar@{.>}^{   } (57, 10)*+{  };(63,   10)*+{  } };{\ar@{->}^{   } (57, 10)*+{  };(57,   0)*+{  } };
{\ar@{.>}^{   } (62, 10)*+{  };(62,   0)*+{  } };{\ar@{->}^{   } (50,  0)*+{  };(62,   0)*+{  } };{\ar@{.>}^{   } (62,  0)*+{  };(65,   0)*+{  } };
(54,6 )*+{\scriptscriptstyle\widetilde{\Box}_{1}^{(3)}   }="005";(54,23 )*+{\scriptscriptstyle\widetilde{\Box}_{1}^{(4)}   }="006";
\endxy,
\end{equation}
where $\widetilde{\Box}_{1}^{(4)}$ is the part between the dotted path and the left boundary $\partial_l\widetilde{\Box}_{2}  $. To prove the claim (\ref{eq:trans-replace-2}), note that we can use (\ref{eq:hol-replace}) to replace the local $2$-holonomy of small rectangles in $\widetilde{\Box}_1^{(3)}$ for $2$-connection over $U_i$ instead of $2$-connection over $U_j$. So we have
\begin{equation}\label{eq:trans-replace-1} {\rm RHS}\hskip 2mm {\rm of}\hskip 2mm (\ref{eq:trans-replace-2})={\rm Hol}\left(\widetilde{\gamma}_j|_{  \widetilde{\Box}_1^{(4)}}\right)\#_1\Psi_{ij}^{-1}\left( \partial_l\widetilde{\Box}_1^{(4)}\right)\#_1 {\rm Hol}\left(\widetilde{\gamma}_i|_{ \widetilde{\Box}_1^{(3)} }\right)\#_1 {\rm Hol}\left(\widetilde{\gamma}_i|_{ \widetilde{\Box}_1' }\right)
\end{equation}
corresponding to the diagram
\begin{equation*}
    \xy
(-10,0 )*+{  }="1";
(65, 0)*+{ }="2";
(0,30 )*+{  }="3";
(50, 30)*+{  }="4";(50,  0)*+{  }="04";
(25,15 )*+{  }="5";(57,15 )*+{   }="005";
{\ar@{->}_{ } "1";"04" };
{\ar@/_0.55pc/ "3";"1"  };
{\ar@{->}^{  } "3";"4" };
 {\ar@{.>}^{   } (50, 30)*+{  };(50, 20)*+{  } };{\ar@{->}^{   } (50, 10)*+{  };(50,  0)*+{  } };{\ar@{->}_{  } (50, 20)*+{  };(50,  10)*+{  } };{\ar@{.>}^{   } (50, 20)*+{  };(58,  20)*+{  } };{\ar@{.>}^{   } (57, 20)*+{  };(57,  10)*+{  } };
{\ar@{.>}^{   } (57, 10)*+{  };(63,   10)*+{  } };
{\ar@{.>}^{   } (62, 10)*+{  };(62,   0)*+{  } };{\ar@{->}^{   } (50,  0)*+{  };(62,   0)*+{  } };{\ar@{.>}^{   } (92,  0)*+{  };(95,   0)*+{  } };
{\ar@{.>}^{   } (80, 20)*+{  };(88,  20)*+{  } };{\ar@{.>}^{   } (87, 20)*+{  };(87,  10)*+{  } };{\ar@{.>}^{   } (92, 10)*+{  };(92,   0)*+{  } };
{\ar@{.>}^{   } (80, 30)*+{  };(80, 20)*+{  } };
{\ar@{.>}^{   } (87, 10)*+{  };(92,  10)*+{  } };
{\ar@{-->}^{  } (62, 0)*+{ };(92,   0)*+{  }  };{\ar@{-->}^{  } (50, 30)*+{ };(80, 30)*+{  } };
{\ar@/^0.55pc/^{ }  (80, 30)*+{  };(95,   0)*+{  }  };
{\ar@{==>}|-{\scriptscriptstyle \Psi_{ij}^{-1}\left( \partial_l\widetilde{\Box}_1^{(4)}\right) } (78, 28)*+{ };(65,   2)*+{  }  };
{\ar@{=>}^{\scriptscriptstyle{\rm Hol}\left(\widetilde{\gamma}_i|_{  \widetilde{\Box}_1'}\right)  } ( 48 ,28 );( -5 ,3 ) };
{\ar@{=>}^{   } ( 56 ,9 );( 51 ,1 ) };
\endxy.
\end{equation*}

Note that   the dotted path $\partial_l\widetilde{\Box}_1^{(4)}$ in (\ref{eq:curved-square-3}) converges to the curved path  $\partial_l \widetilde{\Box}_2 $
if we divide $ \widetilde{\Box}_{1}'' $ repeatedly in  (\ref{eq:curved-square-2}). So the transition $2$-arrow $\Psi_{ij}\left( \partial_l\widetilde{\Box}_1^{(4)}\right)$
 converges to $\Psi_{ij}\left( \partial_l\widetilde{\Box}_2 \right)$, meanwhile ${\rm Hol}\left(\widetilde{\gamma}_j|_{  \widetilde{\Box}_1^{(4)}}\right)$  converges to the identity. So the left-hand side of (\ref{eq:trans-replace-1})  converges to  the left-hand side of (\ref{eq:trans-replace-2}). The claim is proved. In summary, in our algorithm to calculate the global $2$-holonomy of the mapping $\widetilde{\gamma}$, we can use the pull back quadrilaterals $\widetilde{\Box}_{ab}$'s instead of rectangles, and consequently, ${\rm Hol}\left(\widetilde{\gamma} \right)={\rm Hol}\left( {\gamma} \right)$.

If $\Xi$ does not fix the starting points of loops $\gamma_s$, then $\Xi:[0,1]^2\rightarrow {\Box} \cup {\Box} ''$ in the following diagram:
 \begin{equation*}
  \xy 0;/r.17pc/:
(-20,0 )*+{  }="1";(-20,30 )*+{  }="01";
(50, 0)*+{ }="2";
(0,30 )*+{  }="3";
(50, 30)*+{  }="4";
(70,  30)*+{  }="04";
(25,15 )*+{\scriptscriptstyle {\Box}   }="5";(57,25 )*+{\scriptscriptstyle {\Box} ''  }="005";(-10,25 )*+{\scriptscriptstyle {\Box}  '  }="l005";
{\ar@{->}_{ } "1";"2" };{\ar@{-->}_{ } "4";"2" };{\ar@{-->}_{ } "4";"04" };
{\ar@{<-}@/^0.55pc/ "1";"3"  };
{\ar@{<--}@/^0.55pc/   "2";"04"  };
{\ar@{->}  "3";"4" };
 {\ar@{->}  "01";"1" };
  {\ar@{->}  "01";"3" };
\endxy\end{equation*}
${\rm Hol}\left(\widetilde{\gamma}|_{ {\Box} '' }\right)$ can be replaced by ${\rm Hol}\left(\widetilde{\gamma}|_{ {\Box} '  }\right)$ in the expression of
${\rm Hol}\left(\widetilde{\gamma} \right)$ by   conjugacy. We omit the details.
 
\vskip 9mm

 Department of Mathematics,

 Zhejiang University,

 Hangzhou 310027, P. R. China,

 Email: wwang@zju.edu.cn.


\begin{thebibliography}{20}
\bibitem  {ACJ} {\sc
Aschieri, P., Cantini, L. and Jurco, B.}, Nonabelian bundle gerbes, their
differential geometry and gauge theory, {\sl Comm. Math. Phys.\/} {\bf  254}(2)
(2005), 367-400.


 \bibitem  {AS} {\sc
 Arias Abad, C. and  Sch\"atz, F},
Higher holonomies: comparing two constructions, {\sl
Differential Geom. Appl.\/} {\bf   40}  (2015), 14-42.




\bibitem  {BH11} {\sc  Baez, J. and  Huerta, J.},  An invitation to higher gauge theory, {\sl  Gen. Relativity Gravitation\/} {\bf 43} (2011), no. 9, 2335-2392.

\bibitem  {BS} {\sc  Baez, J. and  Schreiber, U.},
Higher gauge theory, {\it  Categories in algebra, geometry and mathematical physics},
Contemp. Math. {\bf 431}, 7-30, Amer. Math. Soc., Providence, RI, 2007.



\bibitem  {Br94} {\sc Breen, L.},
On the classification of $2$-gerbes and $2$-stacks, {\sl
Ast\'erisque\/} {\bf 225} (1994), 160 pp.

\bibitem  {BW} {\sc Breen, L. and Messing, W.},
  Differential geometry of gerbes, {\sl  Adv. in  Math.\/} {\bf 198} (2005), 732-846.

\bibitem  {Br10} {\sc Breen, L.}, Notes on $1$- and $2$-gerbes, in {\it Toward  higher categories}, IMA Vol. Math. Appl., {\bf 152}, 193-235, Springer, New York, 2010.


\bibitem  {CR} {\sc
Cattaneo, A.  and  Rossi, C. }, Wilson surfaces and higher dimensional knot invariants, {\sl  Comm. Math. Phys.\/} {\bf 256} (2005),  513.

\bibitem  {CLS} {\sc Chatterjee, S.,  Lahiri, A. and Sengupta, A.}, Parallel transport over path spaces, {\sl Rev. Math. Phys.\/} {\bf   22}  (2010),  no. 9, 1033-1059.


\bibitem  {GP04} {\sc Girelli, F. and  Pfeiffer, H.},
  Higher gauge theory-differential versus integral formulation, {\sl J. Math. Phys.\/} {\bf 45} (2004), no. 10, 3949-3971.




 \bibitem{Ju}{\sc Jurco
 , B.},  Nonabelian bundle $ 2$-gerbes, {\sl Int. J. Geom. Methods Mod. Phys.\/} {\bf 8} (2011), no. 1, 49-78.




\bibitem  {MP02} {\sc Mackaay, M. and  Picken, R.},   Holonomy and parallel transport for abelian gerbes, {\sl  Adv. in  Math.\/} {\bf 170} (2002), 287-219.


\bibitem  {MP10} {\sc  Martins, J. F. and  Picken, R.}, On two-dimensional holonomy, {\sl
Trans.
Amer. Math. Soc.\/} {\bf  362}  (2010), 5657-5695.

\bibitem  {MP} {\sc  Martins, J. F. and  Picken, R.},
Surface holonomy for non-abelian $2$-bundles via double groupoids, {\sl
Adv.  in  Math.\/} {\bf  226}  (2011),  no. 4, 3309-3366.


\bibitem  {MP11} {\sc  Martins, J. F. and  Picken, R.},  The fundamental Gray $3$-groupoid of a smooth manifold and local $3$-dimensional holonomy based on a $2$-crossed module, {\sl  Differential Geom. Appl.\/} {\bf 29} (2011), no. 2, 179-206.



\bibitem  {NW} {\sc    Nikolaus, T. and  Waldorf, K.}, Four equivalent versions of nonabelian gerbes, {\sl Pacific J. Math.\/} {\bf  264}  (2013),  no. 2, 355-419.

    \bibitem  {Pa} {\sc Parzygnat, A.},
  Gauge invariant surface holonomy and monopoles, {\sl Theory Appl. Categ.\/} {\bf  30}  (2015),  1319-1428.



\bibitem  {P03} {\sc    Pfeiffer, H.},
  Higher gauge theory and a non-abelian generalization of $2$-form electrodynamics, {\sl Ann. Physics\/} {\bf 308} (2003), no. 2, 447-477.

\bibitem  {SW13} {\sc   S\"amann, C. and   Wolf, M.},
Six-dimensional superconformal field theories
from principal $3$-bundles over twistor Space,  {\sl  Lett. Math. Phys.\/} {\bf   104}  (2014),  no. 9, 1147-1188.

\bibitem  {SW09} {\sc
Schreiber, U. and  Waldorf, K.}, Parallel transport and functors, {\sl J. Homotopy
Relat. Struct.\/} {\bf 4}  (2009), 187-244.

\bibitem  {SW11} {\sc
Schreiber, U. and  Waldorf, K.}, Smooth functors vs. differential forms, {\sl Homology, Homotopy, and Applications\/} {\bf 13} (1) (2011) 143-203.

\bibitem  {SW} {\sc
Schreiber, U. and  Waldorf, K.},  Connections on non-abelian gerbes and their holonomy, {\sl Theory Appl. Categ.\/} {\bf  28}  (2013), 476-540.

\bibitem  {SZ} {\sc  Soncini, E. and  Zucchini, R.},
A new formulation of higher parallel transport in higher gauge theory, {\sl
J. Geom. Phys.\/} {\bf  95}  (2015), 28-73.


\bibitem  {Wa} {\sc Wang,W.}, On $3$-gauge transformations, $3$-curvatures  and  $\mathbf{Gray}$-categories, {\sl J. Math. Phys.\/} {\bf 55} (2014), 043506.


\bibitem  {Wa1} {\sc Wang,W.},
On the 3-representations of groups and the 2-categorical traces, {\sl Theory Appl. Categ.\/} {\bf  30}  (2015),   1999-2047.


\end{thebibliography}
\end{document}